\newtheorem{theorem}{Theorem}[section]
\newtheorem{lemma}{Lemma}[section]
\newtheorem{corollary}{Corollary}[section]
\newtheorem{remark}{Remark}[section]
\newcommand{\vv}{\boldsymbol{v}}
\newcommand{\bfx}{\boldsymbol{x}}
\newcommand{\lpar}{\llbracket}
\newcommand{\rpar}{\rrbracket}
\newcommand{\LL}{{\cal L}}
\newcommand{\EQOR}{\text{EQ1}_{g}^{c}}
\newcommand{\EFOR}{\text{EF1}_{g}^{c}}
\newcommand{\EQOP}{\text{EQ1P}_{g}^{c}}
\newcommand{\EFOP}{\text{EF1P}_{g}^{c}}
\newcommand{\EQXR}{\text{EQX}_{g}^{c}}
\title{Approximately Envy-free and Equitable Allocations of\\ Indivisible Items for Non-monotone Valuations\thanks{This work was partially supported by: the Horizon EU Framework Programme under Grant
agreement No 101183743 (AGATE); the PNRR MIUR project FAIR - Future AI Research (PE00000013), Spoke 9 - Green-aware AI; the MUR - PNRR IF Agro@intesa; the Project SERICS (PE00000014) under the NRRP MUR program funded by the EU – NGEU; GNCS-INdAM.}}
\author[1]{Vittorio Bilò}
\author[2]{Martin Loebl}
\author[1]{Cosimo Vinci}
\affil[1]{Department of Mathematics and Physics ``Ennio De Giorgi'', University of Salento, Italy\\ \texttt{\{vittorio.bilo,cosimo.vinci\}@unisalento.it}}
\affil[2]{Department of Applied Mathematics, ``Charles'' University of Prague, Czech Republic\\ \texttt{loebl@kam.mff.cuni.cz}}
\date{}
\begin{document}

\maketitle

\begin{abstract}
We revisit the setting of fair allocation of indivisible items among agents with heterogeneous, non-monotone valuations. We explore the existence and efficient computation of allocations that approximately satisfy either envy-freeness or equity constraints. Approximate envy-freeness ensures that each agent values her bundle at least as much as those given to the others, after some (or any) item removal, while approximate equity guarantees roughly equal valuations among agents, under similar adjustments. As a key technical contribution of this work, by leveraging fixed-point theorems (such as Sperner's Lemma and its variants), we establish the existence of {\em envy-free-up-to-one-good-and-one-chore} ($\text{EF1}^c_g$) and {\em equitable-up-to-one-good-and-one-chore} ($\text{EQ1}^c_g$) allocations, for non-monotone valuations that are always either non-negative or non-positive. These notions represent slight relaxations of the well-studied {\em envy-free-up-to-one-item} (EF1) and {\em equitable-up-to-one-item} (EQ1) guarantees, respectively. Our existential results hold even when items are arranged in a path and bundles must form connected sub-paths. The case of non-positive valuations, in particular, has been solved by proving a novel multi-coloring variant of Sperner's Lemma that constitutes a combinatorial result of independent interest. In addition, we also design a polynomial-time dynamic programming algorithm that computes an $\text{EQ1}^c_g$ allocation. For monotone non-increasing valuations and path-connected bundles, all the above results can be extended to EF1 and EQ1 guarantees as well. Finally, we provide existential and computational results for certain stronger {\em up-to-any-item} equity notions under objective valuations, where items are partitioned into goods and  chores.
\end{abstract}


\section{Introduction}
{\em Fair division} \cite{Steinhaus48}, the field that studies how to fairly allocate resources among a set of agents, has numerous applications across a variety of real-life scenarios, such as divorce settlements, credit assignment, and rent and land division, to name a few. Although fair division has been studied for decades in mathematics and economics, the field has attracted increasing attention from the computer science and AI community in recent years, driven by the flourishing of new fairness concepts and the demand for computationally efficient solutions overcoming the inherent impossibility of achieving optimal fairness guarantees.

Two prominently investigated notions of fairness are {\em envy-freeness} \cite{F66} and {\em equitability} \cite{DS61}. An allocation (of items to agents) is {\em envy-free} (EF) if the value that every agent gives to her assigned bundle (of items) is not less than the value she gives to the bundle assigned to any other agent; it is {\em equitable} (EQ) if the value that every agent gives to her assigned bundle is not less than the value that the other agents assign to their respective bundles. So, the two notions coincide when agents have identical valuations.

The nature of valuation functions tremendously impacts the solution of a fair division problem. When an agent's valuation is monotone non-decreasing (resp., non-increasing), items are said to be {\em goods} (resp., {\em chores}) for the agent; when it is non-monotone, items are said to be {\em mixed}. Notable special cases of non-monotone valuations include {\em non-negative} (resp. {\em non-positive}) valuations, where every bundle yields a non-negative (resp. non-positive) value, and {\em objective} valuations, in which items can be partitioned into goods and chores. Valuations, either monotone and non-monotone, are {\em additive}, when each item has a value and the value of a bundle is defined by the sum of the values of its items. 

While objective valuations have been widely studied \cite{ACIW22,BBPP24}, less work has been done for non-negative or non-positive ones, despite their potential applicability in numerous settings. These valuations, for instance, arise when items correspond to nodes in an edge-weighted graph with exclusively positive or negative weights, allocations are interepreted as clusterings, and the value of each bundle/cluster is then determined by factors such as the total weight of internal or cut edges, or other graph-connectivity properties. This class of settings has interesting connections with clustering problems where further fairness guarantees are required (see, e.g., \cite{DinitzSTV22,Chierichetti0LV17,SchwartzZ22}).

Either envy-free or equitable allocations are guaranteed to exist under non-negative or non-positive valuations in the setting of {\em divisible items}, where items can be arbitrarily split among subsets of agents \cite{DS61,S80,W80,S99,CDP13,C17,AD15,BhaskarSV25}. In contrast, in presence of {\em indivisible items} which have to be integrally assigned to any of the agents, existence cannot be guaranteed even for two agents with additive monotone valuations. To overcome this limitation, a number of relaxations have been proposed in the literature. They allow the removal of one item from a bundle when agents perform bundle comparisons. The removal strategy clearly depends on the nature of the considered items. When an agent compares her bundle $A$ against another bundle $B$, she can choose between removing a chore from $A$ or removing a good from $B$. These relaxations have given rise to the notions of {\em envy-freeness-up-to-any-good} (EFX) \cite{CK+16}, {\em envy-freeness-up-to-one-good} (EF1) \cite{LMMS04,B11}, {\em equitability-up-to-any-good} (EQX) \cite{GMT14} and {\em equitability-up-to-one-good} (EQ1) \cite{FSVX19}. By {\em up-to-any-good}, one means that the fairness property holds irrespectively of which item is selected for removal; by {\em up-to-one-good}, instead, the property must hold for at least one removed item. Clearly, fairness {\em up-to-any-good} implies fairness {\em up-to-one-good}.

An interesting and largely studied generalization of fair division assumes the existence of an {\em item graph} modeling proximity relationships among items. Every bundle has to induce a connected subgraph and an item removal is allowed only if it does not disconnect the induced subgraph \cite{BCEIP17,BCFIMPVZ22,I23,MSVV21,S19}. With this respect, the {\em path constraint} assumes that the item graph is a connected path.

\subsubsection{Our Contribution.}

Given the lack of positive results for non-monotone valuations under standard approximate fairness notions, often due to impossibility barriers (see, e.g., \cite{AABFLMVW23,BBPP24}), we study slight relaxations of EF1, EQ1 and EQX, denoted by $\EFOR$ ({\em envy-free-up-to-one-good-and-one-chore}), $\EQOR$ ({\em equitable-up-to-one-good-and-one-chore}), and $\EQXR$ ({\em equitable-up-to-any-good-or-any-chore}), respectively. While EF1 and EQ1 require the envy-freeness and equitability properties, respectively, to hold upon the removal of at most one item from ``some'' bundle, $\EFOR$ and $\EQOR$ allow the properties to hold when removing at most one item from ``each'' bundle (i.e., for each agent, at most one chore from her own bundle and at most one good from others’ bundles). Similarly, while EQX requires the equitability property to hold regardless of which item is removed (i.e., whether it is a good or a chore), $\EQXR$ allows it to hold for the removal of either goods only or chores only (see Section \ref{model}). We obtain positive results on the existence and computation of allocations satisfying the above fairness criteria across several broad classes of non-monotone valuations. 

{\em Results for non-negative or non-positive valuations: } Our main contribution concerns the existence and computation of  allocations which are fair {\em up-to-one-good-and-one-chore}, under non-negative or non-positive valuations. In particular, we show that an $\EQOR$ allocation always exists and can be computed in polynomial time, for both non-negative (Theorem \ref{thm1_eq1p} and \ref{thm2_eq1p}) and non-positive valuations (Theorem \ref{thm1_eq1p_nonpos_short_version}).
For $\EFOR$ allocations, we only show  existence (Theorem \ref{thm3_ef1p} and  Theorem~\ref{thm2_ef1p_nonpos_short_version}).
The existence and computation of approximately envy-free or equitable allocations under non-monotone valuations is one of the major open problems in fair division (see, e.g., the surveys by \citet{AABFLMVW23,LiuLSW24}). Our results represent a significant step forward in this direction, due to the generality of the non-monotone valuations we consider (either non-negative or non-positive) and the fairness guarantees achieved (requiring the removal of at most one good and one chore). 

It is worth noting that our results continue to hold even under path constraints, and in this sense, they generalize the findings of \citet{BCFIMPVZ22,I23,MSVV21,S19}, which apply only to monotone non-decreasing valuations, and the results of \citet{BCEIP17,BouveretCL19}, which address the computational problem of finding EF and EQ allocations (that, in general, may not exist). Our existential results are obtained using Sperner's Lemma \cite{Sperner1928} or its variants, and represent a non-trivial generalization of the approaches previously explored in \cite{BCFIMPVZ22,I23}, as handling the non-monotonicity of the valuations poses significant technical challenges (such as the derivation and analysis of the cases described in Figure~\ref{fig:3} of Appendix \ref{app:pictures}). In particular, to address the specific case of non-positive valuations, we introduce and exploit a novel multi-coloring variant of Sperner's Lemma (Theorem \ref{multiSperner}), that constitutes a combinatorial result of independent interest. The computational results have been obtained by means of the dynamic-programming paradigm. Finally, when valuations are monotone (non-decreasing or non-increasing), our results extend to the stronger EQ1 and EF1 guarantees under path constraints, thereby generalizing the results of \citet{I23,BCFIMPVZ22,MSVV21}, which exibith EF1 and EQ1 allocations under monotone non-decreasing valuations and path constraints.

{\em Results for objective valuations:} To complete the picture for non-monotone valuations, we also consider fair allocations under objective valuations, and in most of the cases the obtained results hold even under the {\em up-to-any-good-or-any-chore} approximation guarantee. In particular, we show that an $\EQXR$ allocation always exists and can be computed in pseudo-polynomial time (Theorem~\ref{thm1} of Appendix \ref{sec:objective}), via a simple variant of the local-search approach adopted by \citet{BBPP24}. This result extends to EQX when valuations are monotone non-increasing (Corollary~\ref{thm1_cor1} of Appendix \ref{sec:objective}), thereby generalizing the result of \citet{BBPP24}, which holds only for monotone non-decreasing valuations. For valuations that are both objective and additive, we strengthen the above computational result by showing that an $\EQXR$ allocation can be found through a simple and more efficient greedy algorithm (Theorem \ref{thm3} of Appendix \ref{sec:objective}). This result strengthens the findings of \citet{HS25}, who establish existence and polynomial-time computability of EQ1 allocations under additive objective valuations. With this respect, we also show that a slight generalization of the polynomial-time algorithm proposed by \citet{HS25}, continues to produce EQ1 allocations even for objective valuations that are non-additive (Theorem~\ref{thm2} of Appendix \ref{sec:objective}). It is worth noting that, just as $\EQXR$ is considered a relaxation of EQX, similar relaxations for EFX have been studied for objective \cite{HosseiniSVX23} or identical non-monotone valuations \cite{BercziBBGKKKM24}. However, these works only show non-existence of these relaxed notions.

Most of the technical details on non-negative and non-positive valuations are deffered to Appendix \ref{app:non-negative}-\ref{app:non_positive}, while those regarding objective valuations are all deferred to Appendix \ref{sec:objective}-\ref{app:objective}. Table \ref{tab:senza-newcol2} summarizes both our results and related work on the considered fairness notions and classes of valuations, also specifying the cases that remain unresolved and are therefore posed as open problems.
\begin{table}[ht]
\centering
\footnotesize
\addtolength{\tabcolsep}{-2pt}
\renewcommand{\arraystretch}{0.9}
\resizebox{\linewidth}{!}{%
\begin{tabular}{|c|c|c|c|c|c|c|}
\hline
\textbf{} & \textbf{Gen} & \textbf{NNeg} & \textbf{NPos} & \textbf{NDec} & \textbf{NInc} & \textbf{Obj} \\
\hline
\begin{tabular}{@{}c@{}}\textbf{$\EQOR$} \\ \hline \textbf{$\EFOR$} \end{tabular} &
\begin{tabular}{@{}c@{}} x$_{\text{a}}$ \\ \hline ?$_{\text{a}}$ \end{tabular} & 
\begin{tabular}{@{}c@{}} \hl{$\checkmark ^{\text{P,cns}}$} \\ \hline \hl{$\checkmark^{\text{cns}}$} \end{tabular} & 
\begin{tabular}{@{}c@{}} \hl{$\checkmark^{\text{P,cns}}$} \\ \hline \hl{$\checkmark^{\text{cns}}$} \end{tabular} & 
\begin{tabular}{@{}c@{}} {$\checkmark^{\text{P,cns}}$} \\ \hline {$\checkmark^{\text{cns}}$} \end{tabular} & 
\begin{tabular}{@{}c@{}} \hl{$\checkmark^{\text{P,cns}}$} \\ \hline \hl{$\checkmark^{\text{cns}}$} \end{tabular} & 
\begin{tabular}{@{}c@{}} $\checkmark^{\text{P}}$ \\ \hline $\checkmark^{\text{P}}$ \end{tabular} \\
\hline
\begin{tabular}{@{}c@{}}\textbf{EQ1} \\ \hline \textbf{EF1} \end{tabular} & 
\begin{tabular}{@{}c@{}} x$_{\text{a}}$ \\ \hline ?$_{\text{a}}$ \end{tabular} & 
\begin{tabular}{@{}c@{}} $?$ \\ \hline $?$ \end{tabular} & 
\begin{tabular}{@{}c@{}} $?$ \\ \hline $?$ \end{tabular} & 
\begin{tabular}{@{}c@{}} $\checkmark^{\text{P,cns}}$ \\ \hline $\checkmark^{\text{cns}}$ \end{tabular} & 
\begin{tabular}{@{}c@{}} \hl{$\checkmark^{\text{P,cns}}$} \\ \hline \hl{$\checkmark^{\text{cns}}$} \end{tabular} & 
\begin{tabular}{@{}c@{}} $\checkmark^{\text{P}}$ \\ \hline $\checkmark^{\text{P}}$ \end{tabular} \\
\hline
\textbf{$\EQXR$} & x$_{\text{a}}$ & ? & ? & $\checkmark^{\text{P-}}$ & \hl{$\checkmark^{\text{P-}}$} & \hl{$\checkmark^{\text{P-}},\checkmark^{\text{P}}_{\text{ao}}$} \\
\hline
\begin{tabular}{@{}c@{}}\textbf{EQX} \\ \hline \textbf{EFX} \end{tabular} & 
\begin{tabular}{@{}c@{}} x$_{\text{a}}$ \\ \hline x$_{\text{a}}$ \end{tabular} & 
\begin{tabular}{@{}c@{}} $?$ \\ \hline $?_{\text{a}}$ \end{tabular} & 
\begin{tabular}{@{}c@{}} $?$ \\ \hline $?_{\text{a}}$ \end{tabular} & 
\begin{tabular}{@{}c@{}} $\checkmark^{\text{P-}}$ \\ \hline ?$_{\text{a}}$ \end{tabular} &
\begin{tabular}{@{}c@{}} $\checkmark^{\text{P-}}$ \\ \hline ?$_{\text{a}}$ \end{tabular} & 
\begin{tabular}{@{}c@{}} x$_{\text{a}}$ \\ \hline x$_{\text{a}}$ \end{tabular} \\
\hline
\end{tabular}
}\footnotesize
\caption{Landscape of results for the considered fairness notions. Gen, NNeg, NPos, NDec, NInc, and Obj stand, respectively, for General, Non-negative, Non-positive, Non-decreasing, Non-increasing, and Objective valuations. Gray-highlighted results refer to our findings. $\checkmark$, x and ? mean respectively ``it always exists'', ``it does not generally exist'' and ``existence is an open problem''. Subscript ``a'' (resp., ``ao'') means that the the result of type x or ? (resp., $\checkmark$) holds even (resp., only) for additive valuations. Superscript ``P'' (resp., ``P-'') means that the existence can be obtained via a polynomial (resp., pseudo-polynomial) algorithm, and superscript ``cns'' means that the result holds even under path constraints.}\label{tab:senza-newcol2}
\end{table}

\paragraph{Further Related Work.}

\citet{LMMS04} show that an EF1 allocation always exists and can be efficiently computed. Their result has been extended to objective valuations by \citet{ACIW22,BSV21,BercziBBGKKKM24}. Again under obejctive valuations, \cite{HS25} show existence and polynomial time computation of an EQ1 allocation; conversely, they show that, under additive non-objective valuations, EQ1 allocations may not exist. Existence and computation of Pareto optimal EF1 or EQ1 allocations have been studied by \citet{CK+16,FSVX19,FSVX20,GargM24}. EF1 and EQ1 allocations under non-objective valuations have been determined for restricted cases only, and their general existence and computation is a major open problem (see surveys by \citet{AABFLMVW23,LiuLSW24}). 

EQX allocations were first proved to exist for additive monotone non-decreasing valuations in \cite{GMT14}.
 Efficient algorithms computing one have been later designed by \citet{FSVX19,FSVX20}, also covering the non-increasing case. Existence and pseudo-polynomial time computation of an EQX allocation for monotone non-decreasing (and possibly non-additive) valuations has been shown in \cite{BBPP24}. 
This result is complemented by proving that, if one drops the monotonicity assumption, EQX allocations may not exist even for two agents with additive valuations. So, our results  state that, slightly relaxing EQX to $\EQXR$ suffices to recover existence under non-monotone valuations, as long as they are objective (see Table \ref{tab:senza-newcol2}).


The EFX criterion was introduced by \citet{CK+16}, and its existence, under additive non-negative or non-positive valuations, has been addressed only in specific cases and it remains a major open problem in fair division (see surveys by \citet{AABFLMVW23,LiuLSW24}); instead, for objective additive valuations, an EFX allocation might not exist \cite{HosseiniSVX23}. \citet{PR20} show that an EFX allocation exists and can be efficiently computed for non-negative additive valuations when agents assign the same ranking to all items; this result has been also extended to additive objective valuations with equally ranked items \cite{AR20}. Restricting to identical valuations, EFX allocations are known to exist for additive non-decreasing valuations \cite{GMT14} and additive non-increasing valuations \cite{BNV23}, whereas they may not exist for non-monotone valuations \cite{BercziBBGKKKM24}.

\section{Model and Definitions}\label{model}
Let  $N = \{1, 2,\ldots, n\}$ be a finite set of $n$ {\em agents} and $M$ be a finite set of $m$ {\em items}. Each agent $i\in N$ has an integral {\em valuation function} $v_i : 2^M \to \mathbb{Z}$ with $v_i(\emptyset)=0$
for any $i\in N$. We denote by $I=(N,M,(v_i)_{i\in N})$ an {\em allocation instance}. Given an agent $i\in N$, a bundle of items $S\subseteq M$ and an item $x\in S$, we say that $x$ is a {\em good} (resp., a {\em chore}) {\em for $i$ w.r.t. $S$}, if $v_i(S)\geq v_i(S\setminus \{x\})$ (resp., $v_i(S)\leq v_i(S\setminus \{x\})$). We observe that an item $x$ for which $v_i(S)=v_i(S\setminus \{x\})$ is both a good and a chore for $i$ w.r.t. $S$. An item $x$ is a good (resp., a chore) if it is a good (resp., a chore) for any agent $i\in N$ w.r.t. any bundle $S\subseteq M$. An {\em allocation} $\mathcal{A}=(A_1,\ldots, A_n)$ is a partition of $M$ in $n$ (possibly empty) {\em bundles} of items, such that $A_i$ is the bundle assigned to agent $i\in N$. We aim at finding allocations satisfying fairness criteria related to  {\em envy-freeness} and {\em equity}, as described below.   
\paragraph{Envy-freeness.}
An allocation $\mathcal{A}~=~(A_1,\ldots, A_n)$ is:
\begin{itemize}
\item {\em envy-free} (EF) if, for any $i,j\in N$, $v_i(A_i)\geq v_i(A_j)$;

\item {\em envy-free-up-to-any-item} (EFX) if, for any $i,j\in N$ such that $v_i(A_i)< v_i(A_j)$, all the following conditions hold: (i) $v_i(A_i)\geq v_i(A_j\setminus \{g\})$ for any good $g$ for agent $i$ w.r.t. $A_j$; (ii) $v_i(A_i\setminus \{c\})\geq v_i(A_j)$ for any chore $c$ for $i$ w.r.t. $A_i$; (iii) either there exists a good $g$ for $i$ w.r.t. $A_j$, or there exists a chore $c$ for $i$ w.r.t. $A_i$;


\item {\em envy-free-up-to-one-item} (EF1) if, for any $i,j\in N$ such that $v_i(A_i)< v_i(A_j)$, there exists $x\in A_i\cup A_j$ such that $v_i(A_i\setminus \{x\})\geq v_i(A_j\setminus \{x\})$;

\item {\em envy-free-up-to-one-good-and-one-chore} ($\EFOR$) if, for any $i,j\in N$ such that $v_i(A_i)<v_i(A_j)$, there exists a subset $X\subseteq M$ with $|A_i\cap X|\leq 1$ and $|A_j\cap X|\leq 1$, such that $v_i(A_i\setminus X)\geq v_i(A_j\setminus X )$.
\end{itemize}


For EF1 allocations, agent $i$ stops envying agent $j$ after removing at most one chore from $A_i$ or at most one good from $A_j$, but not both. Similarly, under $\EFOR$ allocations, agent $i$ stops envying agent $j$ after removing at most one chore from $A_i$ and at most one good from $A_j$, even simultaneously.  We observe that $\text{EF}\Rightarrow\text{EFX} \Rightarrow \text{EF1} \Rightarrow \text{$\EFOR$}$. 

\paragraph{Equitability.}
The equitability notions we consider are analogous to the envy-freeness criteria described above, but the comparison each agent $i$ makes is not against the valuation $v_i(A_j)$ that she assigns to the bundle given to any other agent $j$, but rather against the valuation $v_j(A_j)$ that agent $j$ assigns to her own bundle. Thus yields allocations that are {\em equitable} (EQ), {\em equitable-up-to-any-item} (EQX), {\em equitable-up-to-one-item} (EQ1) and {\em equitable-up-to-one-good-and-one-chore} ($\EQOR$) (see Appendix \ref{app:explicit} for their explicit definition). In addition, we consider {\em equitable-up-to-any-good-or-any-chore} ($\EQXR$) allocations, a slight relaxation of EQX that, unlike EQX, allows the equitability to hold after the removal of either goods only or chores only. In particular, an allocation $\mathcal{A}$ is $\EQXR$ if, for any $i\in N$, at least one of the following conditions holds: (i) for any $j\in N$ such that $v_i(A_i)< v_j(A_j)$, there exists at least one good for $j$ w.r.t. $A_j$, and for any such good $g$, $v_i(A_i)\geq v_j(A_j\setminus \{g\})$; (ii) for any $j\in N$ such that $v_i(A_i)< v_j(A_j)$, there exists a chore for $i$ w.r.t. $A_i$, and for any such chore $c$, $v_i(A_i\setminus \{c\})\geq v_j(A_j)$.
We observe that $\text{EQ}\Rightarrow \text{EQX} \Rightarrow \text{$\EQXR$} \Rightarrow \text{EQ1} \Rightarrow \text{$\EQOR$}$. 
\paragraph{Classes of Valuations.} We consider the following classes of valuations: (i) {\em Objective:} any item $x$ is either a good or a chore (independently on the considered agents and bundles); in such a case, we can partition \( M \) into a set of goods \( G \) and a set of chores \( C \) (choosing arbitrarily how to classify dummy items that qualify as both); (ii) {\em Non-negative} (resp., {\em Non-positive}): $v_i(S)\geq 0$ (resp. $v_i(S)\leq 0$) for any $i\in N$, $S\subseteq M$; (iii) {\em Monotone non-decreasing} (resp., {\em non-increasing}): each item $x$ is a good (resp., a chore), independently on the considered agents and bundles; (iv) {\em Additive}: $v_i(S)=\sum_{x\in S}v_i(x)$ for any $i$, $S\subseteq M$. 

We will prove our positive results for the most general class of valuations, keeping in mind that monotone non-decreasing (resp. non-increasing) valuations are also non-negative (resp. non-positive), and that
monotone, either non-decreasing or non-increasing, valuations are also objective; instead, objective valuations are not necessarily non-negative or non-positive, and vice versa. Finally, we assume the existence of a {\em constant-time oracle} that, given $i\in N$ and $S\subseteq M$, returns $v_i(S)$ in constant time. 


\section{Non-negative Valuations}\label{sec:notwholly}
To address the case of non-negative valuations, we consider a generalization of the fixed-point approach employed in \cite{BCFIMPVZ22,I23}, that is extended in a non-trivial manner to handle the peculiarities of these valuations.

\subsection{Fairness under Path Constraints}
We say that an allocation instance is {\em path-constrained} if the $m$ items of $M$ are numbered from $1$ to $m$ and organized as a path $P=(1,\ldots, m)$. Given $s,t\in [m]\cup \{0\}$, let $\lpar s,t\rpar$ denote the bundle $\{s,s+1,\ldots, t\}$ if $t\geq s$, and the empty bundle otherwise. A bundle $S$ is {\em connected} if $S=\lpar s,t\rpar$ for some $s\in [m]$ and  $t\in [m]\cup \{0\}$. An allocation $\mathcal{A}$ is {\em connected} if it is made by connected bundles only. Given a connected bundle $S=\lpar s,t\rpar$, let $\partial S=\{s,t\}$ denote the {\em boarder} of $S$ (observe that $\partial S=S$ if $|S|\leq 2$). For a given connected allocation $\mathcal{A}=(A_1,\ldots, A_n)$, we consider the following path-based notions of $\EFOR$ and $\EQOR$: $\mathcal{A}$ is {\em envy-free-up-to-one-good-and-one-chore-over-paths} ($\EFOP$) if, for any $i,j\in N$ such that $v_i(A_i)<v_i(A_j)$, there exists a subset $X\subseteq \partial A_i\cup \partial A_j$ with $|\partial A_i\cap X|\leq 1$ and $|\partial A_j\cap X|\leq 1$, such that $v_i(A_i\setminus X)\geq v_i(A_j\setminus X )$; $\mathcal{A}$ is {\em equitable-up-to-one-good-and-one-chore-over-paths} ($\EQOP$) if, for any $i,j\in N$ such that $v_i(A_i)<v_j(A_j)$, there exists a subset $X\subseteq \partial A_i\cup \partial A_j$ with $|\partial A_i\cap X|\leq 1$ and $|\partial A_j\cap X|\leq 1$, such that $v_i(A_i\setminus X)\geq v_j(A_j\setminus X )$.

Note that $\text{$\EFOP$}\Rightarrow\text{$\EFOR$}$ and $\text{$\EQOP$}\Rightarrow\text{$\EQOR$}$. Furthermore, the notions of $\EQOP$ and $\EFOP$ guarantee that, even after some items are deleted, the bundles remain connected.

\subsubsection{Sperner's Lemma.}
Before presenting our results, we provide a brief overview of the underlying theoretical framework based on Sperner's Lemma. For more details, see, for example, \cite{Flegg1974}.

Let $\text{conv}(\vv_1, \vv_2, \ldots, \vv_n)$ denote the convex hull of the $n$ vectors $\vv_1, \vv_2, \ldots, \vv_n$. An {\em $(n-1)$-simplex} $\Delta$ is an $(n-1)$-dimensional polytope defined as the convex hull of its $n$ (affinely independent) {\em vertices} $\vv_1, \vv_2, \ldots, \vv_n$. Given $k \in [n]$, a {\em $(k-1)$-face} of an $(n-1)$-simplex is the $(k-1)$-simplex obtained as convex hull of a subset of $k-1$ of its vertices. A {\em triangulation} $T$ of a simplex $\Delta$ is a collection of sub-$(k-1)$-simplices (with $k \in [n]$) whose union is $\Delta$, with the property that the intersection of any two sub-simplices in $T$ is either empty or a  face shared by both, which also belongs to $T$. Each sub-simplex $\Delta' \in T$ is referred to as an {\em elementary} simplex. The set of vertices of $T$, denoted as $V(T)$, is the union of the vertices of all the elementary simplices in $T$ (i.e., the union of all the elementary $0$-simplices). 

Now, let $T$ be a fixed triangulation of an $(n-1)$-simplex $\Delta = \text{conv}(\vv_1, \vv_2, \ldots, \vv_n)$. A {\em coloring function} of $T$ is a mapping $L : V(T) \to [n]$ that assigns a number, referred to as a {\em color}, from the set $[n]$ to each vertex of $T$. A coloring function $L$ is called {\em special} if, for any vertex $\bfx \in V(T)$ belonging to the $(n-2)$-face $F_i$ of $\Delta$ that does not include $\vv_i$ (i.e., the face opposite to $\vv_i$, obtained as convex hull of all vertices of $\Delta$ except for $\vv_i$), the condition $L(\bfx) \neq i$ holds. We observe that, if $L$ is a special coloring function, then $L(\vv_i) = i$ holds for any $i \in [n]$. An elementary $(n-1)$-simplex $\Delta^*=\text{conv}(\bfx_1^*,\ldots, \bfx_n^*)\in T$ is said to be {\em fully-colored} under a coloring function $L$ if each of its $n$ vertices is assigned a distinct color by $L$, that is, $L(\bfx_{\sigma(i)}^*)=i$ for any $i\in [n]$, for some permutation $\sigma:[n]\rightarrow [n]$. 

\begin{theorem}[Sperner's Lemma \cite{Sperner1928}]\label{lem:sperner}
Let $T$ be a triangulation of an $(n-1)$-simplex $\Delta$, where $n\geq 2$, and let $L$ be a special coloring function of $T$. Then, there exists a fully-colored elementary $(n-1)$-simplex $\Delta^* \in T$ under $L$; moreover, the number of such simplices is odd.
\end{theorem}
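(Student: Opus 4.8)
The plan is to prove Sperner's Lemma by induction on the dimension $n-1$, using a parity (double-counting) argument of the ``door--room'' type. The base case is $n=2$: here $\Delta=\text{conv}(\vv_1,\vv_2)$ is a segment whose endpoints are colored $1$ and $2$ by speciality of $L$, and the triangulation $T$ is a chain of sub-segments joining $\vv_1$ to $\vv_2$. Traversing this chain, the color switches an odd number of times, so an odd number of elementary $1$-simplices receive both colors, and these are exactly the fully-colored ones. This settles existence and the parity claim in dimension $1$.

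For the inductive step, assume the statement for every $(n-2)$-simplex. Call an $(n-2)$-face $F$ of an elementary $(n-1)$-simplex a \emph{door} if $L$ assigns to the $n-1$ vertices of $F$ precisely the colors $\{1,\dots,n-1\}$, each exactly once. The crucial local observation is a case analysis on the multiset of colors carried by an elementary $(n-1)$-simplex $\Delta'=\text{conv}(\bfx_1,\dots,\bfx_n)$: if this multiset equals $\{1,\dots,n\}$ (i.e.\ $\Delta'$ is fully-colored), then $\Delta'$ has exactly one door, namely the face omitting the vertex colored $n$; if the multiset is $\{1,\dots,n-1\}$ with exactly one color repeated and color $n$ absent, then $\Delta'$ has exactly two doors, obtained by deleting either vertex carrying the repeated color; in all remaining cases $\Delta'$ has no door. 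Writing $d(\Delta')$ for the number of doors of $\Delta'$, it follows that $d(\Delta')\equiv 1\pmod 2$ if $\Delta'$ is fully-colored and $d(\Delta')\equiv 0\pmod 2$ otherwise.

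Next I would double-count incidences between doors and elementary simplices, obtaining $\sum_{\Delta'\in T} d(\Delta') = \sum_{F\text{ a door}} r(F)$, where $r(F)$ is the number of elementary $(n-1)$-simplices of $T$ having $F$ as a face. A door in the interior of $\Delta$ satisfies $r(F)=2$, while a door on $\partial\Delta$ satisfies $r(F)=1$. Moreover, any door $F$ contained in $\partial\Delta$ must lie on the $(n-2)$-face of $\Delta$ opposite $\vv_n$: such an $F$ lies on the face opposite some $\vv_k$, whence all vertices of $F$ avoid color $k$ by speciality of $L$; but the vertices of $F$ carry all colors $1,\dots,n-1$, so $k=n$. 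The restriction of $T$ to that face is a triangulation of an $(n-2)$-simplex, the restriction of $L$ to it is special, and its doors are precisely its fully-colored elementary $(n-2)$-simplices; by the induction hypothesis their number is odd. Hence $\sum_{\Delta'\in T} d(\Delta')$ is congruent modulo $2$ to the number of boundary doors, which is odd, and by the local observation this sum is also congruent to the number of fully-colored elementary $(n-1)$-simplices. Therefore that number is odd, and in particular at least one fully-colored elementary simplex exists.

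The main obstacle is the local door-count case analysis together with the verification that the restriction of the pair $(T,L)$ to the face of $\Delta$ opposite $\vv_n$ is again an admissible (triangulation, special-coloring) pair, so that the induction hypothesis genuinely applies to it; once these are in place, the incidence-parity bookkeeping and the base case are routine.
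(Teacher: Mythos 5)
Your proof is correct: the base case, the local door-count case analysis (one door for a fully-colored simplex, two doors when the color multiset is $\{1,\dots,n-1\}$ with one repeat, zero otherwise), the incidence double count with $r(F)=2$ for interior doors and $r(F)=1$ for boundary doors, and the observation that a boundary door must lie on the facet opposite $\vv_n$ (so that the induction hypothesis applies to the restricted pair, which is indeed a triangulation with a special coloring in the colors $[n-1]$) together give the parity statement and hence existence. Note, however, that the paper does not prove this statement at all: it is quoted as a classical result from Sperner's 1928 paper, so there is no in-paper proof to match. The closest point of comparison is the paper's proof of its Multi-coloring Sperner's Lemma (Theorem~\ref{multiSperner}), which also proceeds by induction on the dimension but packages the parity step differently: instead of double-counting door--simplex incidences, it builds the graph $G$ whose nodes are the elementary $(n-1)$-simplices, joins two nodes when they share a fully-colored $(n-2)$-face, observes that all degrees are at most $2$, and deduces oddness of the fully-colored simplices from the evenness of the number of degree-$1$ nodes. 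Your counting argument and the paper's path-following argument are two standard, essentially equivalent formulations of the same parity mechanism; the counting version is slightly more self-contained (no graph-structure claim needed), while the path version makes the constructive ``follow a path from the boundary'' interpretation explicit, which is what the paper later leans on conceptually.
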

See Figure \ref{fig:1}(a) in Appendix \ref{app:pictures} for an example of the application of Sperner's Lemma with $n=3$. 

Below, we also consider a generalized version of Sperner's Lemma, as presented by \cite{Bapat1989}. In this generalized form, there are $ n $ special coloring functions $ L_1, \dots, L_n $, and we seek an elementary $(n-1)$-simplex that is fully-colored according to a broader definition, which holds simultaneously for all of the coloring functions. Let $ T $ be a triangulation of an $ (n-1) $-simplex, and let $ L_1, \dots, L_n $ be the coloring functions on $ T $. An elementary $(n-1)$-simplex $ \Delta^* = \text{conv}(\bfx_1^*, \bfx_2^*, \ldots, \bfx_n^*) \in T $ is \emph{jointly fully-colored} under $ L_1, \dots, L_n $ if there exist two permutations $ \sigma,\tau : [n] \to [n] $ such that 
$
L_i(\bfx_{\sigma(i)}^*) = \tau(i)$ for any $i \in [n]
$, i.e., each vertex of $\Delta^*$ receives a distinct color under a distinct coloring function.
\begin{theorem}[Generalized Sperner's Lemma \cite{Bapat1989}]\label{lem:sperner_gen}  
Let $T$ be a triangulation of an $(n-1)$-simplex $\Delta$, and let $L_1, \dots, L_{n}$ be special coloring functions of $T$. Then, there exists a jointly-fully-colored elementary $(n-1)$-simplex $\Delta^* \in T$ under $L_1,\ldots, L_{n}$.
\end{theorem}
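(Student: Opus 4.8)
The statement is Bapat's permutation generalization of Sperner's Lemma \cite{Bapat1989}; the route I would take mirrors the classical combinatorial (``rooms and doors'') proof of Theorem~\ref{lem:sperner}, upgraded to track all $n$ colorings at once.

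\emph{Step 1: a matching reformulation.} For an elementary $(n-1)$-simplex $\Delta^*=\text{conv}(\bfx_1^*,\ldots,\bfx_n^*)\in T$, form the $n\times n$ matrix $B$ with $B_{ij}=L_i(\bfx_j^*)$. Then $\Delta^*$ is jointly fully-colored precisely when $B$ admits a \emph{rainbow transversal}: a choice of one entry in each row and each column whose $n$ values are pairwise distinct (the selected columns give $\sigma$, the resulting values give $\tau$). Equivalently, the $3$-partite ``Latin'' hypergraph on $\{\bfx_1^*,\ldots,\bfx_n^*\}\times[n]\times[n]$ containing the triple $(\bfx_j^*,i,c)$ iff $L_i(\bfx_j^*)=c$ has a perfect matching. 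Call such a room \emph{good}; the goal is to produce one.

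\emph{Step 2: a pivoting graph.} I would attach to every room $\Delta^*$ a collection of \emph{doors}, each door being a codimension-one piece of data (sitting on an $(n-2)$-face of the room and recording a near-perfect matching of the Latin hypergraph above, with one color/row/column left ``free''), chosen so that: (a) every room carries exactly $0$, $1$, or $2$ doors, with an odd number of them exactly when the room is good; (b) a door lying on $\partial\Delta$ sits on the facet of $\Delta$ opposite some $\vv_k$, and on that facet every restricted coloring avoids color $k$, so the restricted data is again an instance of the lemma one dimension lower; whence, by induction on $n$ (base case $n=2$: a one-line sweep along the triangulated segment, propagating $L_1(\vv_1)=L_2(\vv_1)=1$ forward and contradicting $L_1(\vv_2)=L_2(\vv_2)=2$), (c) the number of boundary doors is odd. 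Forming the graph whose nodes are the rooms plus one ``outside'' node, with an edge across each shared door and an edge from the outside node across each boundary door, all room-nodes have degree $\leq 2$ while the outside node has odd degree; hence there is another odd-degree node, i.e.\ a room with exactly one door, i.e.\ a good room $\Delta^*$ — and, moreover, an odd number of them.

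\emph{Main obstacle.} The whole difficulty is concentrated in Step~2: pinning down ``door'' so that the local count in (a) is correct and so that it restricts compatibly along the boundary recursion in (b) (in particular reconciling the $n$ restricted colorings on an $(n-2)$-facet with an $(n-1)$-coloring instance). This is precisely the combinatorial core of \cite{Bapat1989}. A conceivable alternative — deducing the statement from the ordinary Sperner's Lemma (Theorem~\ref{lem:sperner}) by passing to the barycentric subdivision $\mathrm{sd}(T)$ and defining a single coloring that reads off $L_k$ at the $k$-th level of each chain — would still require a delicate verification that this derived coloring is special and that its fully-colored simplices encode good rooms, so I would keep the pivoting argument as the primary plan. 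In either case the non-monotonicity of the valuations is irrelevant here: this is a purely combinatorial fact about triangulations.
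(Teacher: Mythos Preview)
The paper does not prove this theorem at all: it is stated with attribution to \cite{Bapat1989} and used as a black box, so there is no ``paper's own proof'' to compare against. Your proposal is therefore not competing with anything in the paper.

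As to the proposal itself: your Step~1 reformulation is correct, and the rooms-and-doors strategy in Step~2 is indeed the shape of Bapat's original argument. But what you have written is a sketch that openly defers the entire content of the proof to ``pinning down \emph{door} so that the local count in (a) is correct.'' That is not a gap you have overlooked---you flag it explicitly---but it means this is a proof \emph{plan}, not a proof. In particular, the door datum in Bapat's setting is not just an $(n-2)$-face plus a near-matching; one has to carry along a partial system of distinct representatives across the $n$ colorings, and the parity count in (a) hinges on a permanent/Latin-rectangle argument that your outline does not supply. The boundary recursion in (b) is also more delicate than stated: on the facet opposite $\vv_k$ you have $n$ colorings restricted to $n-1$ vertices with $n-1$ available colors, which is not literally an instance of the same lemma in dimension $n-2$ (there the number of colorings equals the number of vertices). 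Bapat handles this, but your sketch does not indicate how.

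If you want a self-contained argument, either reproduce Bapat's permanent-based counting in full, or pursue the barycentric-subdivision reduction to ordinary Sperner that you mention---the latter does work and is arguably cleaner, but as you note it requires care in defining the single coloring on $\mathrm{sd}(T)$ and checking speciality. For the purposes of this paper, simply citing \cite{Bapat1989} (as the authors do) is appropriate.
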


\subsection{$\EQOP$ Allocations}
Given a path-constrained allocation instance with non-negative valuations, we construct a suitable triangulation $ T $ of an $ n $-simplex $ \Delta $ and define a special coloring $ L $ for $ T $. Each elementary $(n-1)$-simplex $ \Delta^* \in T $ that is fully-colored under $ L $ corresponds to an $\EQOP$ allocation for $ I $. By Sperner's Lemma (Theorem~\ref{lem:sperner}), the existence of such fully-colored simplices is guaranteed, which in turn ensures the existence of an $\EQOP$ allocation. We also design a polynomial-time algorithm, based on dynamic programming, that efficiently computes such an $\EQOP$ allocation.
 
\paragraph{Triangulation.} 
Consider the $(n-1)$-simplex $\Delta = \{ \bfx = (x_1, \ldots, x_{n-1}) \in \mathbb{R}^{n-1} : 0 \leq x_1 \leq x_2 \leq \ldots \leq x_{n-1} \leq m \}$, which is the convex hull $\text{conv}(\vv_1, \vv_2, \ldots, \vv_n)$ of the points $\vv_1,\ldots, \vv_n$, with $\vv_i := (\overbrace{0, 0, \ldots, 0}^{i-1}, \overbrace{m, m, \ldots, m}^{n-i})$ for any $i\in [n]$. 
We observe that each of the $n$ $(n-2)$-faces of $\Delta$ can be defined as $F_i := \left\{ \bfx = (x_1, \ldots, x_{n-1}) \in \Delta : x_{i-1} = x_i \right\}$, where we set $x_0 := 0$ and $x_n := m$. We construct a triangulation $T$ of $\Delta$ whose set of vertices is $V(T) = \{ \bfx \in \Delta : x_i \in \{ 0, \frac 1 3, \frac 2 3, 1, \frac 4 3,\frac 5 3, 2, \frac 7 3, \dots, m-1, m-\frac 2 3,m- \frac 1 2,m \}\ \forall i\in [n-1]\}$ and whose simplicial structure is defined below. Each coordinate $x_i$ of vertices $\bfx\in V(T)$ can be either {\em integral}, or {\em 1-fractional} or {\em 2-fractional}, where integral (resp. 1-fractional, 2-fractional) means $x_i\in \mathbb{Z}$ (resp. $x_i-\frac 1 3 \in \mathbb{Z}$, $x_i-\frac 2 3 \in \mathbb{Z}$); we write $x_i\equiv 0$ (resp. $x_i\equiv 1$, $x_i\equiv 2$) if $x_i$ is integral (resp. 1-fractional, 2-fractional). By leveraging {\em Kuhn's triangulation} \cite{Kuhn1960,Scarf1982,Deng2012}, we construct the triangulation $ T $ such that each elementary $(n-1)$-simplex $\Delta' = \text{conv}(\bfx_1, \bfx_2, \ldots, \bfx_n) \in T$ can be generated by fixing the first vertex $\bfx_1 \in V(T)$ and a permutation $\pi : [n-1] \to [n-1]$, and then iteratively determining the remaining vertices as follows: $
\bfx_{i+1} = \bfx_i + \textstyle \frac{1}{3} \mathbf{e}^{\pi(i)}$ for each $i \in [n-1]$, where $\mathbf{e}^i ~=~ (\overbrace{0, \dots,0}^{i-1}, 1, \overbrace{0,\dots, 0}^{n-i-1})$ is the $i$-th vector of the canonical basis of $\mathbb{R}^{n-1}$. Figure \ref{fig:1}(b) in Appendix \ref{app:pictures} describes Kuhn's triangulation for $n=3$. 

Each vertex $\bfx \in V(T)$ can be understood as a vector representing the positions of $n-1$ knives that divide the interval $[0, m]$ into $n$ connected segments having extremes in $a,b\in [0, m]\cap \{\frac{x}{3}|x\in \mathbb{Z}\}$. Following this interpretation, the $n$ vertices of any elementary $(n-1)$-simplex in $T$ are derived by starting with an initial configuration of $n-1$ cuts (i.e., vertex $\bfx_1$) and sequentially shifting each knife one position to the right (by a length of $1/3$) according to a specific ordering defined by a permutation $\pi$. Refer to Figure \ref{fig:2} in Appendix \ref{app:pictures} to visualize the process of deriving the sequence of fractional allocations from the $ n $ vertices of an $ (n-1) $-dimensional simplex within a triangulation $ T $, which is constructed from an allocation instance with $n=3$ agents.

\paragraph{Coloring Function.} We now construct the coloring function $L : V(T) \to [n]$. Given a vertex $\bfx=(x_1,\ldots, x_{n-1})\in V(T)$, let $\mathcal{\tilde{A}}(\bfx)=(\tilde{A}_1(\bfx), \ldots, \tilde{A}_n(\bfx))$ be the {\em fractional connected allocation} obtained from the partition of $[0,m]$ in $n$ {\em fractional connected bundles}, defined as $\tilde{A}_i(\bfx)=[x_{i-1},x_i]$ for any $i\in [n]$, with $x_0:=0$ and $x_n:=m$; furthermore, each bundle $\tilde{A}_i(\bfx)$ is assigned by default to agent $i$, for any $i\in [n]$ (i.e., bundles are assigned from left to right following the agents order). 

Given $a\in \mathbb{R}_{\geq 0}$, let $a^-:=\lfloor a\rfloor$ and $a^+:=\min\{\lfloor a \rfloor+1,m\}$. Let $\tilde{v}_i$ denote the {\em virtual valuation} of agent $i$, which applies to fractional connected bundles $[a,b]$ (where $a,b\in [0, m]\cap \{\frac{x}{3}|x\in \mathbb{Z}\}$ and $a \leq b$) and returns an integer value $\tilde{v}_i([a,b])$ that is defined as follows:
{\em left-value (LV):}  if $a\equiv 0$, $\tilde{v}_i([a,b]):=v_i(\lpar a^-,b^-\rpar)$; {\em borderline-value (BV):} if $a\equiv 1$, $\tilde{v}_i([a,b])$ is set equal to the middle value among $v_i(\lpar a^-,b^-\rpar)$, $v_i(\lpar a^+,b^-\rpar)$ and $v_i(\lpar a^+,b^+\rpar)$; {\em right-value (RV):} if $a\equiv 2$, $\tilde{v}_i([a,b]):=v_i(\lpar a^+,b^-\rpar)$. Since the original valuations $v_i$s are non-negative, the resulting virtual valuations $\tilde{v}_i$s are also non-negative. Let $ L $ be the coloring function that assigns each vertex $\bfx$ the agent/index $i$ that maximizes the virtual valuation $\tilde{v}_i(\tilde{A}_i(\bfx))$ applied to the fractional connected bundle $\tilde{A}_i(\bfx)$, where ties are broken in favor of agents receiving a non-empty bundle and, in case of further ties, arbitrarily. We observe that $L$ is a special coloring function. Indeed, for any $i\in [n]$, the $(n-2)$-face $F_i$ of $\Delta$, which does not contain $\vv_i$, is such that the fractional allocations $\mathcal{\tilde{A}}(\bfx)$ corresponding to vertices $\bfx \in V(T)$ located on $F_i$ have their $i$-th bundle empty ($\tilde{A}_i(\bfx) = \emptyset$). Due to the non-negativity of the virtual valuations, any empty bundle always has the lowest virtual value, regardless of the agent or allocation being considered. Therefore, by the construction of $L$, we have $L(\bfx) \neq i$ for any $i \in [n]$ and any vertex $\bfx \in V(T)$ located on the $(n-2)$-face $F_i$. Thus, $L$ is a special coloring function. Figure \ref{fig:1}(b) of Appendix \ref{app:pictures} shows an example of special coloring function $L$ derived from an arbitrary non-negative virtual valuation function. 

\paragraph{From the Fully-colored Simplex to the $\EQOP$ Allocation.} 
According to Sperner's Lemma (Theorem~\ref{lem:sperner}), there exists at least one fully-colored elementary $(n-1)$-simplex $\Delta^* = \text{conv}(\bfx_1^*, \ldots, \bfx_n^*) \in T$ under the coloring $L$, where $L(\bfx_{\sigma(i)}^*) = i$ for all $i \in [n]$, for some permutation $\sigma:[n]\rightarrow [n]$. Equivalently, each $i \in [n]$ is among those agents $j$ who maximize the virtual valuation $\tilde{v}_{j}(\tilde{A}_{j}(\bfx_{\sigma(i)}^*))$ in the fractional connected allocation $\tilde{\cal A}(\bfx_{\sigma(i)}^*)$ associated with the $\sigma(i)$-th vertex $\bfx_{\sigma(i)}^*$ of $\Delta^*$, where the sequence of allocations $\tilde{A}(\bfx_1^*), \ldots, \tilde{A}(\bfx_n^*)$ is obtained by moving each knife one at a time from left to right in a specific order, starting from the position of knives determined by $\tilde{A}(\bfx_1^*)$. See Figure \ref{fig:2} of Appendix \ref{app:pictures} for an example. 


Denote by $\mathcal{\tilde{A}}$ the first allocation $\mathcal{\tilde{A}}(\bfx_1^*)$, and refer to it as the {\em main allocation} of $\Delta^*$.
For a bundle $ \tilde{A}_j = [a_j, b_j] $ in the main allocation, say that $ \tilde{A}_j$ is {\em left-first} (resp., {\em right-first}) in $\Delta^*$ if the first allocation $\mathcal{\tilde{A}}(\bfx_h^*)$ associated with $\Delta^*$, for which the bundle $ \tilde{A}_j(\bfx_h^*)= [a_j', b_j'] $ differs from $ \tilde{A}_j$, satisfies $ a_j' = a_j + 1/3 $ and $b_j'=b_j$ (resp., $a_j=a_j'$ and $ b_j' = b_j + 1/3 $).
Equivalently, $\tilde{A}_j$ is {\em left-first} (resp.,  {\em right-first}) in $\Delta^*$ if, among the two knives determining the endpoints of the $j$-th bundle across all allocations associated with $\Delta^*$, the first to move from left to right is the left (resp., right) one; for an example of right-first bundle, see Figure \ref{fig:2} of Appendix \ref{app:pictures}. By appropriately rounding the fractional bundles of $\mathcal{\tilde{A}}$, we will obtain the desired (integral) allocation $ \mathcal{A} $ that satisfies the $\EQOP$ guarantee. The rounding procedure processes all fractional bundles $\tilde{A}_j$s of the main allocation $ \mathcal{\tilde{A}}$ from $j=n$ down to $j=1$, and for each  bundle $ \tilde{A}_j $, it returns the integral bundle $ A_j $ that will form the final (integral) allocation $ \mathcal{A} = (A_1, \ldots, A_n) $. Specifically, once the bundles $ A_{j+1}, \ldots, A_n $ have been determined, the bundle $ A_j$ is obtained by rounding the fractional bundle $ \tilde{A}_j= [a_j, b_j]$ based on the three possible fractionality levels of the two endpoints, $a_j$ and $b_j$, and, if necessary, on whether $\tilde{A}_j$ is left-first or right-first. This rounding process involves considering $ 9 = 3 \times 3 $ possible cases (corresponding to the three fractionality levels for each endpoint) and additional sub-cases, and it is formally described in Figure \ref{fig:3} of Appendix \ref{app:pictures}. The rounding procedure that returns $\mathcal{A}$ is carefully designed so that the statements of the following two lemmas hold. Their proofs, strongly based on the rounding procedure, are deferred to Appendix~\ref{app:non-negative}.
\begin{lemma}\label{lem1_eq1p}
$\mathcal{A}$ is a connected (integral) allocation. 
\end{lemma}
Given $ i \in N $ and a connected (integral) bundle $ S = \lpar s, t \rpar \subseteq [m] $, let  
$
v_i^+(S) = \max\{\lpar s, t \rpar, \lpar s+1, t \rpar, \lpar s, t-1 \rpar\}
$
and
$
v_i^-(S) = \min\{\lpar s, t \rpar, \lpar s+1, t \rpar, \lpar s, t-1 \rpar\};
$
$ v_i^+(S) $ and $ v_i^-(S) $ represent, respectively, the maximum and the minimum valuation that agent $ i $ can obtain from bundle $ S $, after possibly removing one of its endpoint items.
\begin{lemma}\label{lem2_eq1p}
For any $h,i,j\in [n]$, the connected (integral) allocation $\mathcal{A}$ satisfies $v_i^-({A}_j)\leq \tilde{v}_i(\tilde{A}_j(\bfx^*_h))\leq v_i^+({A}_j)$.
\end{lemma}
Using these lemmas, we can show that the allocation $\mathcal{A}$ returned by the rounding procedure is $\EQOP$.
\begin{theorem}\label{thm1_eq1p}
$\mathcal{A}$ is an $\EQOP$ allocation, if valuations are  non-negative.
\end{theorem}
\begin{proof}
First, $ \mathcal{A} $ is a connected allocation by Lemma~\ref{lem1_eq1p}. Next, we show the $\EQOP$ guarantee. As observed above, the full coloring of simplex $\Delta^*$ implies that each $i\in [n]$ is one of the indices $j\in [n]$ that maximize $ \tilde{v}_j(\tilde{A}_j(\bfx_{\sigma(i)}^*)) $ (i.e., agent $i$ has the highest virtual valuation in allocation $\tilde{\mathcal{A}}(\bfx_{\sigma(i)}^*)$). Thus, for any $ i, j \in N $, we have 
$
v_i^+(A_i) \geq \tilde{v}_i(\tilde{A}_i(\bfx_{\sigma(i)}^*)) \geq \tilde{v}_j(\tilde{A}_j(\bfx_{\sigma(i)}^*)) \geq v_j^-(A_j),
$
where the second inequality follows from the above observation, and the first and last inequalities follow from Lemma~\ref{lem2_eq1p}. Since $ v_i^+(A_i) \geq v_j^-(A_j) $ for any $ i, j \in N $, we conclude that $ \mathcal{A} $ satisfies the $\EQOP$ guarantee (i.e., equitability is obtained by removing at most one good from the board of $A_i$ and one chore from the board of $A_j$), and thus the claim holds.
\end{proof}
\paragraph{Efficient Computation.}
The $\EQOP$ allocation guaranteed by Theorem \ref{thm1_eq1p} can be  computed by a polynomial-time algorithm based on dynamic programming. The algorithm first computes the set $C_v$ of the valuations $v_i(S)$ that each agent $i$ has for any bundle $S$ (in $O(nm^2)$ time) and then, by dynamic programming, determines for each $c\in C_v$ if there exists an allocation $\mathcal{A}$ such that $v_i^+(A_i)\geq c\geq v_i^-(A_i)$ for any $i\in [n]$ (in $O(nm^2)$ time), where $v_i^+(S)$ and $v_i^-(S)$ denote the maximum and the minimum valuation that  $i$ can obtain from a bundle $S$ by deleting at most one item from its board; again, we restrict ourselves to allocations where the $i$-th leftmost bundle is assigned to agent $i$. We show that finding such a value $c\in C_v$ satisfying the above condition is equivalent to finding an $\EQOP$ allocation, whose existence is guaranteed by Theorem \ref{thm1_eq1p}. Then, we get the following theorem (full details are deferred to Appendix \ref{app:non-negative}):
\begin{theorem}\label{thm2_eq1p}
If valuations are non-negative, an $\EQOP$ allocation can be found in time $O(n^2m^4)$.
\end{theorem}
%
\subsection{$\EFOP$ allocations}
To show the existence of $\EFOP$ allocations, we employ the same framework as in the approximate equitability case, with minor modifications. We use the same triangulation $ T $ as in the previous case but equip it with $ n $ distinct coloring functions $ L_1, \ldots, L_n $, instead of the single coloring function $ L $ used earlier. Here, each $L_i$ colors any vertex in $V(T)$ with the index $j$ of the bundle that agent $i$ prefers under virtual valuation $\tilde{v}_i$ (defined as in the previous case); we note that each $L_i$ is special, as the empty bundle is the least valuable.

By applying the Generalized Sperner’s Lemma (Theorem~\ref{lem:sperner_gen}), we show the existence of a jointly fully-colored elementary $(n-1)$-simplex $\Delta^*$. As in the previous case, this simplex corresponds to a sequence of $n$ connected fractional partitions, but now the bundles are initially unallocated, and there exist two permutations $\sigma$ and $\tau$ such that, in the $\sigma(i)$-th allocation, agent $i \in [n]$ does not envy any other agent if $i$ receives the $\tau(i)$-th bundle. Then, by applying the same rounding procedure and proof techniques used for the case of $\EQOP$ allocations, the first fractional partition of $\Delta^*$ is transformed into an (integral) $\EFOP$ allocation, where each agent $i$ receives the $\tau(i)$-th bundle. This leads to the following theorem (see Appendix \ref{app:ef1} for the full details).
\begin{theorem}\label{thm3_ef1p}
Under non-negative valuations, an $\EFOP$ allocation always exists.
\end{theorem}
We conjecture that the computation of an $\EFOP$ allocation is a PPAD-complete problem (similarly to the results show in \cite{Deng2012}). We also note that, even without path constraints, the complexity of finding EF1 or $\EFOR$ still remains an open problem. It is worth noting that, in the subclass of monotone non-decreasing valuations, there are goods only. Thus, in this case, Theorems \ref{thm1_eq1p}-\ref{thm3_ef1p} extend to the stronger notions of EF1 and EQ1 under path-constraints, thereby recovering the findings of \citet{BCFIMPVZ22,I23,MSVV21,S19}.

\section{Non-positive Valuations}\label{sec:multisperner}
To address the case of non-positive valuations under path-connectivity constraints, we resort to a novel multi-coloring variant of Sperner's Lemma, where the underlying coloring functions assign, to each vertex $\bfx\in V(T)$, a set of colors (rather than a single color), including the indices $i$ of the $(n-2)$-dimensional faces $F_i$ to which $\bfx$ belongs.
\subsubsection{Multi-coloring Sperner's Lemma.}
Let $T$ be a fixed triangulation of an $(n-1)$-simplex $\Delta=\text{conv}(\vv_1,\ldots, \vv_n)$. A {\em multi-coloring function} of $T$ is a mapping $\LL:V(T)\rightarrow 2^{[n]}\setminus\{\emptyset\}$ that assigns a non-empty subset of colors $\LL(\bfx)\subseteq [n]$ to each vertex of $\bfx\in V(T)$. We recall that $F_i$ is the $(n-2)$-dimensional face of $\Delta$ opposite to vertex $\vv_i$. A multi-coloring function $\LL$ is called {\em special} if, for any vertex \(\bfx \in V(T)\), $\LL(\bfx)\supseteq \{i\in [n]:\bfx\in F_i\}$ holds (i.e., if $\bfx$ is a boundary vertex, the set of colors $\LL(\bfx)$ contains the indices associated with all $(n-2)$-faces of $\Delta$ on which $\bfx$ is located). We observe that, if $\LL$ is a special multi-coloring function and $F$ is a $(k-1)$-face of $\Delta$ spanned by vertices $\vv_{i_1},\ldots, \vv_{i_k}$, it holds that $\LL(\bfx) \supseteq [n]\setminus\{i_1,\ldots, i_k\}$ for any vertex $\bfx\in V(T)$ located on $F$. An elementary $(n-1)$-simplex \(\Delta^*=\text{conv}(\bfx_1^*,\ldots, \bfx_n^*)\in T\) is said to be {\em fully-colored} under a multi-coloring function $\LL$ if there exists a permutation $\sigma:[n]\rightarrow [n]$ such that $i\in \LL(\bfx^*_{\sigma(i)})$ for any $i\in [n]$ (that is, a distinct color $i$ appears in the set $\LL(\bfx^*_{\sigma(i)})$ associated with a distinct vertex $\bfx^*_{\sigma(i)}$). An example of a special multi-coloring function \( \LL \) applied to the triangulation \( T \) of a \( 2 \)-simplex is provided in Figure \ref{fig:4} of Appendix \ref{app:pictures}.

\begin{theorem}[Multi-coloring Sperner's Lemma]\label{multiSperner}
Let $T$ be a triangulation of an $(n-1)$-simplex $\Delta$, where $n\geq 2$, and let $\LL$ be a special multi-coloring function of $T$. Then, there exists a fully-colored elementary $(n-1)$-simplex $\Delta^* \in T$ under multi-coloring function $\LL$.
\end{theorem}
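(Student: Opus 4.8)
The plan is to reduce the multi-coloring version to the classical Sperner's Lemma (Theorem~\ref{lem:sperner}) by extracting a well-behaved single-colored selection from the multi-coloring function $\LL$. Concretely, I would construct an ordinary coloring $L : V(T) \to [n]$ such that (a) $L(\bfx) \in \LL(\bfx)$ for every vertex $\bfx \in V(T)$, and (b) $L$ is special in the sense of the classical lemma, i.e. $L(\bfx) \neq i$ whenever $\bfx \in F_i$. Given such an $L$, the classical Sperner's Lemma produces a fully-colored elementary simplex $\Delta^* = \text{conv}(\bfx_1^*,\ldots,\bfx_n^*)$ with $L(\bfx_{\sigma(i)}^*) = i$ for some permutation $\sigma$; since $L(\bfx) \in \LL(\bfx)$ always, this immediately gives $i \in \LL(\bfx_{\sigma(i)}^*)$ for all $i$, which is exactly the definition of $\Delta^*$ being fully-colored under $\LL$. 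So the whole theorem boils down to the existence of a single-valued special selection $L$ of $\LL$.

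The key step is therefore the construction of $L$, and here is where the specialness hypothesis on $\LL$ does the work. For a boundary vertex $\bfx$, let $J(\bfx) := \{ i \in [n] : \bfx \in F_i \}$; specialness says $\LL(\bfx) \supseteq J(\bfx)$. The constraint we must satisfy is $L(\bfx) \notin J(\bfx)$, i.e. $L(\bfx) \in \LL(\bfx) \setminus J(\bfx)$. The point is that $J(\bfx)$ can never be all of $[n]$: a vertex of the simplex (after triangulation, still a point of $\Delta$) lies in the relative interior of a unique face $\text{conv}(\vv_{i_1},\ldots,\vv_{i_k})$ with $k \geq 1$, and then $J(\bfx) = [n] \setminus \{i_1,\ldots,i_k\}$, which omits at least one index. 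Hence $[n] \setminus J(\bfx) = \{i_1,\ldots,i_k\} \neq \emptyset$, and moreover by specialness $\{i_1,\ldots,i_k\} \subseteq \LL(\bfx)$, so $\LL(\bfx) \setminus J(\bfx) \supseteq \{i_1,\ldots,i_k\} \neq \emptyset$. Thus for every vertex — boundary or interior — the set $\LL(\bfx) \setminus J(\bfx)$ is non-empty (for interior vertices $J(\bfx) = \emptyset$ so this is just $\LL(\bfx)$), and we may define $L(\bfx)$ to be, say, the minimum element of $\LL(\bfx) \setminus J(\bfx)$. By construction $L(\bfx) \in \LL(\bfx)$ and $L(\bfx) \notin J(\bfx)$, so $L$ is a special coloring in the classical sense.

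With $L$ in hand, apply Theorem~\ref{lem:sperner} to obtain a fully-colored elementary $(n-1)$-simplex $\Delta^* \in T$ under $L$, and conclude as above that $\Delta^*$ is fully-colored under $\LL$. I expect the main (minor) obstacle to be the careful bookkeeping about which face of $\Delta$ a triangulation vertex lies on — one needs the standard fact that the faces $F_i$ of the simplex intersect exactly in the faces $\text{conv}(\vv_{i_1},\ldots,\vv_{i_k})$ and that every point of $\Delta$ has a well-defined carrier face, so that $J(\bfx)$ is the complement of a non-empty index set. Once that is pinned down, the rest is immediate. It is also worth remarking that this argument shows the multi-coloring variant is in fact \emph{equivalent} in strength to the classical one, and in particular inherits the parity conclusion if desired, though the statement only asserts existence.
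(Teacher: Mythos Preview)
Your reduction to classical Sperner has a fatal error in the key step. You claim ``by specialness $\{i_1,\ldots,i_k\} \subseteq \LL(\bfx)$,'' but this is exactly backwards: specialness says $\LL(\bfx) \supseteq J(\bfx) = [n]\setminus\{i_1,\ldots,i_k\}$, i.e.\ $\LL(\bfx)$ is forced to contain the \emph{complement} of $\{i_1,\ldots,i_k\}$, not $\{i_1,\ldots,i_k\}$ itself. Nothing prevents $\LL(\bfx)$ from being \emph{equal} to $J(\bfx)$, in which case $\LL(\bfx)\setminus J(\bfx)=\emptyset$ and your selection $L$ cannot be defined.

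A minimal counterexample: take $n=2$, $\Delta=[\vv_1,\vv_2]$ with the trivial triangulation, and set $\LL(\vv_1)=\{2\}$, $\LL(\vv_2)=\{1\}$. This is special (since $\vv_1\in F_2$ forces $2\in\LL(\vv_1)$ and $\vv_2\in F_1$ forces $1\in\LL(\vv_2)$), yet $\LL(\vv_1)\setminus J(\vv_1)=\{2\}\setminus\{2\}=\emptyset$, so no classical-special $L$ with $L(\bfx)\in\LL(\bfx)$ exists. The theorem still holds here---the single $1$-simplex is fully-colored under $\LL$ via $\sigma(1)=2,\sigma(2)=1$---but your route cannot reach it. This is precisely why the paper calls the result a \emph{dual} variant: the boundary condition points the opposite way from classical Sperner (color $i$ \emph{must} appear on $F_i$, rather than \emph{must not}), and a direct selection argument collapses. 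The paper's proof instead takes the minimal restriction $L(\bfx)=\min\LL(\bfx)$ and proves oddness of fully-colored simplices for \emph{this} $L$ from scratch by induction on $n$, using a projection of the boundary onto an $(n-2)$-simplex and a path-parity argument---it never claims $L$ is classical-special.
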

The Multi-coloring Sperner’s Lemma can be viewed as a dual to the standard Sperner’s Lemma. In the classical version, color $i$ is ``prohibited'' from appearing at any vertex $\bfx$ located on the face $F_i$ opposite to the vertex $\vv_i$. In contrast, the multi-coloring version requires that color $i$ ``must appear'' in the set of assigned colors $\LL(\bfx)$ at each vertex $\bfx \in F_i$. To show the Multi-coloring Sperner's Lemma, we first assume, w.l.o.g., that \( \LL(\bfx) = \{i \in [n] : \bfx \in F_i\} \) holds for any vertex $\bfx$ located on the boundary of $\Delta$, and that \( | \LL(\bfx) | = 1 \) for any internal vertex \( \bfx \) not located on the boundary. Then, we define the {\em minimal restriction} of $\LL$ as the standard coloring function that assigns the color $L(\bfx) = \min\{i\in \LL(\bfx)\}$ to each vertex $\bfx$. We show by induction on $n\geq 2$ that the number of fully-colored simplices with respect to the minimal restriction \( L \) is odd, which implies the existence of at least one such simplex (full details are deferred to Appendix \ref{app:non_positive}). Finally, we note that each fully-colored simplex with respect to \( L \) is also fully-colored with respect to the original multi-coloring function \( \LL \), thereby proving the claim of the theorem. The full proof is deferred to Appendix \ref{app:non_positive}.
It is worth noting that, unlike the standard Sperner's Lemma, the structure of our multi-coloring functions required additional intermediate steps and ad hoc topological transformations of the simplicial structure in order to carry out the inductive argument on $n$.

\subsection{$\EQOP$ and $\EFOP$ Allocations}
To show the existence of an $\EQOP$ allocation, we employ the same framework as in the case of non-negative valuations, with minor modifications. We use the same triangulation \( T \) as in that case, but we equip $T$ with the multi-coloring function $\LL$ that assign to each vertex $\bfx\in V(T)$ the set $\LL(\bfx)$ of indices $j$ which maximize $\tilde{v}_j(\tilde{A}_j(\bfx))$, where $\tilde{v}_j$ is the virtual valuation defined as in Section \ref{sec:notwholly}. Differently from the case of non-negative valuations, under non-positive valuations the empty bundle is always the best one for each agent. Thus, $\LL$ is a special multi-coloring function and, by the Multi-coloring Sperner's lemma (Theorem \ref{multiSperner}), there exists an elementary \((n-1)\)-simplex \(\Delta^* = \text{conv}(\bfx^*_1, \ldots, \bfx^*_n)\) that is fully-colored under the multi-coloring function \(\LL\). As in the case of non-negative valuations, this means that there exists a permutation $\sigma$ such that each agent $i\in [n]$ is the happiest (among all others) in the allocation $\mathcal{A}(\bfx^*_{\sigma(i)})$ (where each agent $j$ receives the $j$-th bundle). 
From this point onward, we can apply the same approach used for non-negative valuations to transform $\Delta^*$ into an $\EQOP$ allocation (full details are deferred to Appendix \ref{app:non_positive}).
\begin{theorem}\label{thm1_eq1p_nonpos_short_version}
Under non-positive valuations, an $\EQOP$ allocation always exists and can be computed in polynomial time. 
\end{theorem}
As in the case of non-negative valuations, to show the existence of $\EFOP$ allocations, we consider a generalization of the multi-coloring Sperner's Lemma that deals with $n$ distinct multi-coloring functions, each one modelling the virtual valuation of each agent $i$. We then obtain a jointly fully-colored simplex $\Delta^*$, in which each vertex corresponds to an allocation where a distinct agent prefers a distinct bundle. Then, by resorting to the usual rounding procedure we obtain the desired $\EFOP$ allocation (full details are deferred to Appendix~\ref{app:non_positive}):
\begin{theorem}\label{thm2_ef1p_nonpos_short_version}
Under non-positive valuations, an $\EFOP$ allocation always exists.
\end{theorem}
The following corollary holds since, in the case of monotone non-increasing valuations, there are chores only.
\begin{corollary}\label{thm1_eq1p_chore}
Under non-increasing valuations, EQ1 and EF1 allocations always exist, even under path constraints, with the former being computable in polynomial time.
\end{corollary}

\section{Conclusions}
As the main contribution of this work, we established the existence of allocations that are (approximate) equitable ($\EQOR$) or envy-free ($\EFOR$) up to the removal of one good or chore from each bundle, even for non-positive valuations and under path constraints. Furthermore, efficient computation can be achieved under the approximate equitability guarantee. With these results, we made significant progress on the general problem of establishing the existence and computation of allocations that are fair “up to some items” for general non-monotone valuations. However, the existence of allocations satisfying the stronger EQ1 and EF1 guarantees remains open, as does the case of more general non-monotone valuations beyond the settings of the non-negative or non-positive valuations. It would be interesting to explore whether our techniques can be extended to address these cases.

Additionally, the time complexity of finding approximately fair allocations for general non-negative or non-positive valuations remains an open question, even in cases where existence has been established. Indeed, under path constraints, we conjecture that finding $\EFOP$ allocations is PPAD-complete for both non-negative and non-positive valuations. Furthermore, we conjecture that PPAD-completeness also holds for finding EF1 allocations under monotone non-decreasing or non-increasing valuations.

Finally, the existence of EQX allocations for objective valuations remains an open problem, even in the additive case. Furthermore, moving to the approximate envy-freeness guarantee, we note that the existence of EFX allocations is a major open question in fair division, even for additive non-negative valuations.

\subsection*{AI Use Declaration}
ChatGPT (OpenAI) was used solely for language polishing and figure layout formatting. All scientific content and data were prepared by the authors.
\bibliographystyle{plainnat}
\bibliography{ijcai25}

\begin{thebibliography}{43}
\providecommand{\natexlab}[1]{#1}
\providecommand{\url}[1]{\texttt{#1}}
\expandafter\ifx\csname urlstyle\endcsname\relax
  \providecommand{\doi}[1]{doi: #1}\else
  \providecommand{\doi}{doi: \begingroup \urlstyle{rm}\Url}\fi

\bibitem[Amanatidis et~al.(2023)Amanatidis, Aziz, Birmpas, Filos{-}Ratsikas,
  Li, Moulin, Voudouris, and Wu]{AABFLMVW23}
Georgios Amanatidis, Haris Aziz, Georgios Birmpas, Aris Filos{-}Ratsikas,
  Bo~Li, Herv{\'{e}} Moulin, Alexandros~A. Voudouris, and Xiaowei Wu.
\newblock Fair division of indivisible goods: Recent progress and open
  questions.
\newblock \emph{Artif. Intell.}, 322:\penalty0 103965, 2023.

\bibitem[Aumann and Dombb(2015)]{AD15}
Yonatan Aumann and Yair Dombb.
\newblock The efficiency of fair division with connected pieces.
\newblock \emph{ACM Transactions on Economics and Computation}, 3\penalty0
  (4):\penalty0 1--16, 2015.

\bibitem[Aziz and Rey(2020)]{AR20}
Haris Aziz and Simon Rey.
\newblock Almost group envy-free allocation of indivisible goods and chores.
\newblock In \emph{Proceedings of the Twenty-Ninth International Joint
  Conference on Artificial Intelligence, {IJCAI} 2020}, pages 39--45, 2020.

\bibitem[Aziz et~al.(2022)Aziz, Caragiannis, Igarashi, and Walsh]{ACIW22}
Haris Aziz, Ioannis Caragiannis, Ayumi Igarashi, and Toby Walsh.
\newblock Fair allocation of indivisible goods and chores.
\newblock \emph{Auton. Agents Multi Agent Syst.}, 36\penalty0 (1):\penalty0 3,
  2022.

\bibitem[Bapat(1989)]{Bapat1989}
R.~B. Bapat.
\newblock A constructive proof of a permutation-based generalization of
  {Sperner's} lemma.
\newblock \emph{Mathematical Programming}, 44\penalty0 (1):\penalty0 113--120,
  1989.

\bibitem[Barman et~al.(2023)Barman, Narayan, and Verma]{BNV23}
Siddharth Barman, Vishnu~V. Narayan, and Paritosh Verma.
\newblock Fair chore division under binary supermodular costs.
\newblock In \emph{Proceedings of the 2023 International Conference on
  Autonomous Agents and Multiagent Systems {AAMAS}}, pages 2863--2865. {ACM},
  2023.

\bibitem[Barman et~al.(2024)Barman, Bhaskar, Pandit, and Pyne]{BBPP24}
Siddharth Barman, Umang Bhaskar, Yeshwant Pandit, and Soumyajit Pyne.
\newblock Nearly equitable allocations beyond additivity and monotonicity.
\newblock In \emph{38th {AAAI} Conference on Artificial Intelligence, {AAAI}
  2024}, pages 9494--9501. {AAAI} Press, 2024.

\bibitem[B{\'{e}}rczi et~al.(2024)B{\'{e}}rczi, B{\'{e}}rczi{-}Kov{\'{a}}cs,
  Boros, Gedefa, Kamiyama, Kavitha, Kobayashi, and Makino]{BercziBBGKKKM24}
Krist{\'{o}}f B{\'{e}}rczi, Erika~R. B{\'{e}}rczi{-}Kov{\'{a}}cs, Endre Boros,
  Fekadu~Tolessa Gedefa, Naoyuki Kamiyama, Telikepalli Kavitha, Yusuke
  Kobayashi, and Kazuhisa Makino.
\newblock Envy-free relaxations for goods, chores, and mixed items.
\newblock \emph{Theor. Comput. Sci.}, 1002:\penalty0 114596, 2024.

\bibitem[Bhaskar et~al.(2021)Bhaskar, Sricharan, and Vaish]{BSV21}
Umang Bhaskar, A.~R. Sricharan, and Rohit Vaish.
\newblock On approximate envy-freeness for indivisible chores and mixed
  resources.
\newblock In \emph{Approximation, Randomization, and Combinatorial
  Optimization. Algorithms and Techniques, {APPROX/RANDOM}}, volume 207 of
  \emph{LIPIcs}, pages 1:1--1:23. Schloss Dagstuhl - Leibniz-Zentrum f{\"{u}}r
  Informatik, 2021.

\bibitem[Bhaskar et~al.(2025)Bhaskar, Sricharan, and Vaish]{BhaskarSV25}
Umang Bhaskar, A.~R. Sricharan, and Rohit Vaish.
\newblock Connected equitable cake division via sperner's lemma.
\newblock \emph{Inf. Process. Lett.}, 189:\penalty0 106554, 2025.

\bibitem[Bil{\`{o}} et~al.(2022)Bil{\`{o}}, Caragiannis, Flammini, Igarashi,
  Monaco, Peters, Vinci, and Zwicker]{BCFIMPVZ22}
Vittorio Bil{\`{o}}, Ioannis Caragiannis, Michele Flammini, Ayumi Igarashi,
  Gianpiero Monaco, Dominik Peters, Cosimo Vinci, and William~S. Zwicker.
\newblock Almost envy-free allocations with connected bundles.
\newblock \emph{Games Econ. Behav.}, 131:\penalty0 197--221, 2022.

\bibitem[Bouveret et~al.(2017)Bouveret, Cechl{\'{a}}rov{\'{a}}, Elkind,
  Igarashi, and Peters]{BCEIP17}
Sylvain Bouveret, Katar{\'{\i}}na Cechl{\'{a}}rov{\'{a}}, Edith Elkind, Ayumi
  Igarashi, and Dominik Peters.
\newblock Fair division of a graph.
\newblock In \emph{Proceedings of the Twenty-Sixth International Joint
  Conference on Artificial Intelligence, {IJCAI} 2017}, pages 135--141, 2017.

\bibitem[Bouveret et~al.(2019)Bouveret, Cechlárová, and Lesca]{BouveretCL19}
Sylvain Bouveret, Katarína Cechlárová, and Julien Lesca.
\newblock Chore division on a graph.
\newblock \emph{Auton. Agents Multi Agent Syst.}, 33\penalty0 (5):\penalty0
  540--563, 2019.

\bibitem[Budish(2011)]{B11}
Eric Budish.
\newblock The combinatorial assignment problem: Approximate competitive
  equilibrium from equal incomes.
\newblock \emph{Journal of Political Economy}, 119\penalty0 (6):\penalty0
  1061--1103, 2011.

\bibitem[Caragiannis et~al.(2019)Caragiannis, Kurokawa, Moulin, Procaccia,
  Shah, and Wang]{CK+16}
Ioannis Caragiannis, David Kurokawa, Herv{\'{e}} Moulin, Ariel~D. Procaccia,
  Nisarg Shah, and Junxing Wang.
\newblock The unreasonable fairness of maximum nash welfare.
\newblock \emph{ACM Transactions on Economics and Computation}, 7\penalty0
  (3):\penalty0 12:1--12:32, 2019.

\bibitem[Cechl\'{a}rov\'{a} et~al.(2013)Cechl\'{a}rov\'{a}, Dobo\v{s}, and
  Pill\'{a}rov\'{a}]{CDP13}
Katar\'{i}na Cechl\'{a}rov\'{a}, Jozef Dobo\v{s}, and Eva Pill\'{a}rov\'{a}.
\newblock On the existence of equitable cake divisions.
\newblock \emph{Information Sciences}, 228:\penalty0 239--245, 2013.

\bibitem[Chierichetti et~al.(2017)Chierichetti, Kumar, Lattanzi, and
  Vassilvitskii]{Chierichetti0LV17}
Flavio Chierichetti, Ravi Kumar, Silvio Lattanzi, and Sergei Vassilvitskii.
\newblock Fair clustering through fairlets.
\newblock In \emph{Advances in Neural Information Processing Systems 30: Annual
  Conference on Neural Information Processing Systems 2017}, pages 5029--5037,
  2017.

\bibitem[Chèze(2017)]{C17}
Guillaume Chèze.
\newblock Existence of a simple and equitable fair division: A short proof.
\newblock \emph{Mathematical Social Sciences}, 87:\penalty0 92--93, 2017.

\bibitem[Deng et~al.(2012)Deng, Qi, and Saberi]{Deng2012}
X.~Deng, Q.~Qi, and A.~Saberi.
\newblock Algorithmic solutions for envy-free cake cutting.
\newblock \emph{Operations Research}, 60\penalty0 (6):\penalty0 1461--1476,
  2012.

\bibitem[Dinitz et~al.(2022)Dinitz, Srinivasan, Tsepenekas, and
  Vullikanti]{DinitzSTV22}
Michael Dinitz, Aravind Srinivasan, Leonidas Tsepenekas, and Anil Vullikanti.
\newblock Fair disaster containment via graph-cut problems.
\newblock In \emph{International Conference on Artificial Intelligence and
  Statistics, {AISTATS} 2022, 28-30 March 2022, Virtual Event}, volume 151 of
  \emph{Proceedings of Machine Learning Research}, pages 6321--6333. {PMLR},
  2022.

\bibitem[Dubins and Spanier(1961)]{DS61}
Lester~E. Dubins and Edwin~H. Spanier.
\newblock How to cut a cake fairly.
\newblock \emph{American Mathematical Monthly}, 68\penalty0 (1P1):\penalty0
  1--17, 1961.

\bibitem[Flegg(1974)]{Flegg1974}
H.~Graham Flegg.
\newblock \emph{From Geometry to Topology}.
\newblock Crane, Russak \& Co., 1974.

\bibitem[Foley(1966)]{F66}
Duncan~K. Foley.
\newblock \emph{Resource allocation and the public sector}.
\newblock Yale University, 1966.

\bibitem[Freeman et~al.(2019)Freeman, Sikdar, Vaish, and Xia]{FSVX19}
Rupert Freeman, Sujoy Sikdar, Rohit Vaish, and Lirong Xia.
\newblock Equitable allocations of indivisible goods.
\newblock In \emph{28th International Joint Conference on Artificial
  Intelligence, {IJCAI} 2019}, pages 280--286, 2019.

\bibitem[Freeman et~al.(2020)Freeman, Sikdar, Vaish, and Xia]{FSVX20}
Rupert Freeman, Sujoy Sikdar, Rohit Vaish, and Lirong Xia.
\newblock Equitable allocations of indivisible chores.
\newblock In \emph{19th International Conference on Autonomous Agents and
  MultiAgent Systems, {AAMAS} 2020}, pages 384--392, 2020.

\bibitem[Garg and Murhekar(2024)]{GargM24}
Jugal Garg and Aniket Murhekar.
\newblock Computing pareto-optimal and almost envy-free allocations of
  indivisible goods.
\newblock \emph{J. Artif. Intell. Res.}, 80:\penalty0 1--25, 2024.

\bibitem[Gourvès et~al.(2014)Gourvès, Monnot, and Tlilane]{GMT14}
Laurent Gourvès, J\'{e}r\^{o}me Monnot, and Lydia Tlilane.
\newblock Near fairness in matroids.
\newblock In \emph{21st European Conference on Artificial Intelligence, {ECAI}
  2014}, pages 393--398, 2014.

\bibitem[Hosseini and Sethia(2025)]{HS25}
Hadi Hosseini and Aditi Sethia.
\newblock Equitable allocations of mixtures of goods and chores.
\newblock \emph{CoRR}, abs/2501.06799, 2025.

\bibitem[Hosseini et~al.(2023)Hosseini, Sikdar, Vaish, and Xia]{HosseiniSVX23}
Hadi Hosseini, Sujoy Sikdar, Rohit Vaish, and Lirong Xia.
\newblock Fairly dividing mixtures of goods and chores under lexicographic
  preferences.
\newblock In \emph{Proceedings of the 2023 International Conference on
  Autonomous Agents and Multiagent Systems, {AAMAS} 2023}, pages 152--160.
  {ACM}, 2023.

\bibitem[Igarashi(2023)]{I23}
Ayumi Igarashi.
\newblock How to cut a discrete cake fairly.
\newblock In \emph{Thirty-Seventh {AAAI} Conference on Artificial Intelligence,
  {AAAI}}, pages 5681--5688. {AAAI} Press, 2023.

\bibitem[Kuhn(1960)]{Kuhn1960}
H.~W. Kuhn.
\newblock Some combinatorial lemmas in topology.
\newblock \emph{IBM Journal of Research and Development}, 4\penalty0
  (5):\penalty0 518--524, 1960.

\bibitem[Lipton et~al.(2004)Lipton, Markakis, Mossel, and Saberi]{LMMS04}
Richard~J Lipton, Evangelos Markakis, Elchanan Mossel, and Amin Saberi.
\newblock On approximately fair allocations of indivisible goods.
\newblock In \emph{Proceedings of the 5th ACM Conference on Electronic Commerce
  {EC} 2004}, pages 125--131, 2004.

\bibitem[Liu et~al.(2024)Liu, Lu, Suzuki, and Walsh]{LiuLSW24}
Shengxin Liu, Xinhang Lu, Mashbat Suzuki, and Toby Walsh.
\newblock Mixed fair division: {A} survey.
\newblock \emph{J. Artif. Intell. Res.}, 80:\penalty0 1373--1406, 2024.

\bibitem[Misra et~al.(2021)Misra, Sonar, Vaidyanathan, and Vaish]{MSVV21}
Neeldhara Misra, Chinmay Sonar, P.~R. Vaidyanathan, and Rohit Vaish.
\newblock Equitable division of a path.
\newblock \emph{CoRR}, abs/2101.09794, 2021.

\bibitem[Plaut and Roughgarden(2020)]{PR20}
Benjamin Plaut and Tim Roughgarden.
\newblock Almost envy-freeness with general valuations.
\newblock \emph{SIAM Journal on Discrete Mathematics}, 34\penalty0
  (2):\penalty0 1039--1068, 2020.

\bibitem[Scarf(1982)]{Scarf1982}
H.~E. Scarf.
\newblock The computation of equilibrium prices: an exposition.
\newblock \emph{Handbook of Mathematical Economics}, 2:\penalty0 1007--1061,
  1982.

\bibitem[Schwartz and Zats(2022)]{SchwartzZ22}
Roy Schwartz and Roded Zats.
\newblock Fair correlation clustering in general graphs.
\newblock In \emph{Approximation, Randomization, and Combinatorial
  Optimization. Algorithms and Techniques, {APPROX/RANDOM} 2022}, volume 245 of
  \emph{LIPIcs}, pages 37:1--37:19, 2022.

\bibitem[Sperner(1928)]{Sperner1928}
Emanuel Sperner.
\newblock Neuer beweis f{\"u}r die invarianz der dimensionszahl und des
  gebietes.
\newblock \emph{Abhandlungen aus dem Mathematischen Seminar der Universit{\"a}t
  Hamburg}, 6:\penalty0 265--272, 1928.

\bibitem[Steinhaus(1948)]{Steinhaus48}
Hugo Steinhaus.
\newblock The problem of fair division.
\newblock \emph{Econometrica}, 16\penalty0 (1):\penalty0 101--104, 1948.

\bibitem[Stromquist(1980)]{S80}
Walter Stromquist.
\newblock How to cut a cake fairly.
\newblock \emph{American Mathematical Monthly}, 87\penalty0 (8):\penalty0
  640--644, 1980.

\bibitem[Su(1999)]{S99}
Francis~Edward Su.
\newblock Rental harmony: Sperner’s lemma in fair division.
\newblock \emph{American Mathematical Monthly}, 106\penalty0 (10):\penalty0
  936--942, 1999.

\bibitem[Suksompong(2019)]{S19}
Warut Suksompong.
\newblock Fairly allocating contiguous blocks of indivisible items.
\newblock \emph{Discret. Appl. Math.}, 260:\penalty0 227--236, 2019.

\bibitem[Woodall(1980)]{W80}
Douglas~Robert Woodall.
\newblock Dividing a cake fairly.
\newblock \emph{Journal of Mathematical Analysis and Applications}, 78\penalty0
  (1):\penalty0 233--247, 1980.

\end{thebibliography}
\appendix
\newpage\ \newpage
\section*{Supplementary Material of Paper "Approximately Envy-free and Equitable Allocations of Indivisible Items \\for Non-monotone Valuations" and Reproducibility Checklist}
The supplementary material is organized in seven distinct appendices (A-H). For a better organization, each distinct appendix starts in a new page. The reproducibility checklist is provided at the end of the document. 
\section{Explicit Description of the Approximate Equitability Notions (from Section \ref{model})}\label{app:explicit}
An allocation $\mathcal{A}=(A_1,\ldots, A_n)$ is:
\begin{itemize}
\item {\em equitable} (EQ): if, for any $i,j\in N$, $v_i(A_i)\geq v_j(A_j)$.
\item {\em equitable-up-to-any-item} (EQX): if, for any $i,j\in N$ such that $v_i(A_i)< v_j(A_j)$, all the following conditions hold: (i) $v_i(A_i)\geq v_j(A_j\setminus \{g\})$ for any good $g$ for $j$ w.r.t. $A_j$; (ii) $v_i(A_i\setminus \{c\})\geq v_j(A_j)$ for any chore $c$ for $i$ w.r.t. $A_i$; (iii) either there exists a good $g$ for $j$ w.r.t. $A_j$, or there exists a chore $c$ for $i$ w.r.t. $A_i$.
\item {\em equitable-up-to-one-item} (EQ1): if, for any $i,j\in N$ such that $v_i(A_i)< v_j(A_j)$, there exists $x\in A_i\cup A_j$ such that $v_i(A_i\setminus \{x\})\geq v_j(A_j\setminus \{x\})$.
\item {\em envy-free-up-to-one-good-and-one-chore} ($\EQOR$): if, for any $i,j\in N$ such that $v_i(A_i)<v_j(A_j)$, there exists a subset $X\subseteq M$ with $|A_i\cap X|\leq 1$ and $|A_j\cap X|\leq 1$, such that $v_i(A_i\setminus X)\geq v_j(A_j\setminus X )$.
\end{itemize}
\newpage \ \newpage
\section{Missing Proofs from Section \ref{sec:notwholly}:\\part on equitability}

\subsection{Proof of Lemma \ref{lem1_eq1p}}\label{app:non-negative}
The rounding procedure described in Figure \ref{fig:3} of Appendix \ref{app:pictures} verifies the following properties: (i) the right endpoint $t_n$ of bundle \( A_n=\lpar s_n,t_n\rpar \) is \( m \); (ii) for any \( j \in [n-1] \), the right endpoint \( t_j \) of bundle \( A_j=\lpar s_j,t_j\rpar \) and the left endpoint $s_{j+1}$ of bundle \( A_{j+1} \) satisfy \( t_j = s_{j+1} - 1 \); (iii) the left endpoint $s_1$ of bundle \( A_1 \) is \( 1 \). Based on these observations, all the connected bundles \( A_j \) are disjoint and their union covers all items in \( M \).

\subsection{Proof of Lemma \ref{lem2_eq1p}}
For the rounding procedure, refer to Figure \ref{fig:3} in Appendix \ref{app:pictures}. Fix $h,i,j\in [n]$ and let $\mathcal{\tilde{A}}'$ denote allocation $\mathcal{\tilde{A}}(\mathbf{x}_h^*)$. Let \(\tilde{A}_j(L,L) := [a_j,b_j]\), \(\tilde{A}_j(L,R) := [a_j, \min\{b_j+1/3, m\}]\), \(\tilde{A}_j(R,L) := [\min\{a_j+1/3, b_j\}, b_j]\), and \(\tilde{A}_j(R,R) := [\min\{a_j+1/3, b_j\}, \min\{b_j+1/3, m\}]\). Equivalently: \(\tilde{A}_j(L,L)\) represents the $j$-th fractional bundle \(\tilde{A}_j\) of the allocation $\mathcal{\tilde{A}}=\mathcal{\tilde{A}}(\bfx^*)$ associated with the first vertex of $\Delta^*$; \(\tilde{A}_j(L,R)\) is the fractional bundle obtained from \(\tilde{A}_j\) by moving the right knife one position (i.e., 1/3 of the size of an item) to the right, ensuring it remains before the last possible position; \(\tilde{A}_j(R,L)\) is the fractional bundle obtained by moving the left knife one position to the right (while keeping it before the right knife); \(\tilde{A}_j(R,R)\) is the fractional bundle obtained by moving both knives one position to the right, ensuring the left knife remains before the right knife. We observe that each possible configuration of the \(j\)-th bundle in $\mathcal{\tilde{A}}$ must be one of \(\tilde{A}_j(L,L)\), \(\tilde{A}_j(L,R)\), \(\tilde{A}_j(R,L)\), and \(\tilde{A}_j(R,R)\). Thus, to obtain the claim, it is sufficient to show that $v_i^-(A_j)\leq \tilde{v}_i(\tilde{A}_j(Y,Z))\leq v_i^+(A_j)$ for any $Y,Z\in \{L,R\}$. 

At this point, the remainder of the proof is straightforward but tedious, as it requires verifying the claim by systematically analyzing all nine cases and their respective sub-cases in the rounding procedure. In particular, for each of the nine cases, one must verify that inequality $v_i^-(A_j)\leq \tilde{v}_i(\tilde{A}_j(Y,Z))\leq v_i^+(A_j)$ holds for any $Y,Z\in \{L,R\}$.


As an example, we describe the analysis of cases 2(i), 2(ii), 6(i) and 6(ii), since the other cases are less complex and can be automatically verified by following analogous and simpler arguments. 

We recall that the virtual value $\tilde{v}_i([a,b])$ of a fractional bundle $[a,b]$ corresponds to one of the following three cases:
\begin{itemize}
\item {\em left-value (LV):}  if $a\equiv 0$, $\tilde{v}_i([a,b]):=v_i(\lpar a^-,b^-\rpar)$; 
\item {\em borderline-value (BV):} if $a\equiv 1$, $\tilde{v}_i([a,b])$ is set equal to the middle value among $v_i(\lpar a^-,b^-\rpar)$, $v_i(\lpar a^+,b^-\rpar)$ and $v_i(\lpar a^+,b^+\rpar)$; 
\item {\em right-value (RV):} if $a\equiv 2$, $\tilde{v}_i([a,b]):=v_i(\lpar a^+,b^-\rpar)$. 
\end{itemize}
Furthermore, we recall that the $j$-th bundle $\tilde{A}_j$ of the main allocation $\mathcal{\tilde{A}}$ is {\em left-first} (resp.,  {\em right-first}) in $\Delta^*$ if, among the two knives determining the endpoints of the $j$-th bundle across all allocations associated with $\Delta^*$, the first to move from left to right is the left (resp., right) one.

In the following, we proceed with the analysis of Cases 2(i), 2(ii), 6(i) and 6(ii):
\begin{itemize}
\item {\em Case 2(i)}: In this case, we have $b_j\in A_{j+1}$ and we set $A_j\gets \lpar a_j^-,b_j^-\rpar$. 

First, let us focus  on fractional bundle $\tilde{A}_j(L,Z)$ for any $Z\in \{L,R\}$; we observe that the virtual valuation applied to $\tilde{A}_j(L,Z)$ is left-value (LV), that is, $\tilde{v}_i(\tilde{A}_j(L,Z))$ is equal to $v_i(\lpar a_j^-,b_j^-\rpar)$. Thus, we have $v_i^-(A_j)\leq v_i(\lpar a_j^-,b_j^-\rpar)=\tilde{v}_i(\tilde{A}_j(L,Z))=v_i(\lpar a_j^-,b_j^-\rpar) \leq v_i^+(A_j)$, where the first (resp., the last) inequality holds since $v_i^-(A_j)$ (resp., $v_i^+(A_j)$) is the highest (resp., lowest) valuation obtainable by removing at most one boundary item, or none at all, from $A_j$.

Now, we consider a fractional bundle of type $\tilde{A}_j(R,Z)$ with $Z\in \{L,R\}$; the virtual valuation applied to $\tilde{A}_j(R,Z)$ is borderline-value (BV), that is, $\tilde{v}_i(\tilde{A}_j(R,Z))$ is equal to the middle value among $v_i(\lpar a_j^-,b_j^-\rpar)$, $v_i(\lpar a_j^+,b_j^-\rpar)$ and $v_i(\lpar a_j^+,b_j^+\rpar)$. Thus, we have $v_i^-(A_j)\leq \min\{v_i(\lpar a_j^-,b_j^-\rpar),v_i(\lpar a_j^+,b_j^-\rpar )\}\leq \tilde{v}_i(\tilde{A}_j(R,Z))\leq \max\{v_i(\lpar a_j^-,b_j^-\rpar),v_i(\lpar a_j^+,b_j^-\rpar )\} \leq v_i^+(A_j)$, where: the first (resp., the last) inequality holds again since $v_i^-(A_j)$ (resp., $v_i^+(A_j)$) is the highest (resp., lowest) valuation obtainable by removing at most one boundary item, or none at all, from $A_j$; the second (resp., third) inequality holds since the minimum (resp., maximum) among the two values $v_i((a_j^-,b_j^-))$ and $v_i((a_j^+,b_j^-))$ must be less than or equal to (resp., greater than or equal to) the middle value among the three values $v_i((a_j^-,b_j^-))$, $v_i((a_j^+,b_j^-))$ and $v_i((a_j^+,b_j^+))$.

We conclude that $v_i^-(A_j)\leq \tilde{v}_i(\tilde{A}_j(Y,Z))\leq v_i^+(A_j)$ holds for any $Y,Z\in \{L,R\}$, and this shows the claim in case~2(i). 

\item {\em Case 2(ii):} In this case, we have $b_j\notin A_{j+1}$ and we set $A_j=\lpar a_j^-,b_j^+\rpar$. 

First, let us focus  on fractional bundle $\tilde{A}_j(L,Z)$ for any $Z\in \{L,R\}$; we observe that the virtual valuation applied to $\tilde{A}_j(L,Z)$ is left-value (LV), that is, $\tilde{v}_i(\tilde{A}_j(L,Z))$ is equal to $v_i(\lpar a_j^-,b_j^-\rpar)$. Thus, we have $v_i^-(A_j)\leq \min\{v_i(\lpar a_j^-,b_j^+\rpar),v_i(\lpar a_j^-,b_j^-\rpar )\}\leq \tilde{v}_i(\tilde{A}_j(L,Z))\leq \max\{v_i(\lpar a_j^-,b_j^+\rpar),v_i(\lpar a_j^-,b_j^-\rpar )\} \leq v_i^+(A_j)$, where the first and the last inequality holds for the usual reasons, while the second and third inequalities hold since $\tilde{v}_i(\tilde{A}_j(L,Z))=v_i(\lpar a_j^-,b_j^-\rpar)$. 

Now, we consider each fractional bundle $\tilde{A}_j(R,Z)$ with $Z\in \{L,R\}$; the virtual valuation applied to $\tilde{A}_j(R,Z)$ is borderline-value (BV), that is, $\tilde{v}_i(\tilde{A}_j(R,Z))$ is equal to the middle value among $v_i(\lpar a_j^-,b_j^-\rpar)$, $v_i(\lpar a_j^+,b_j^-\rpar)$ and $v_i(\lpar a_j^+,b_j^+\rpar)$. Thus, we have $v_i^-(A_j)\leq \min\{v_i(\lpar a_j^-,b_j^-\rpar ),v_i(\lpar a_j^+,b_j^+\rpar )\}\leq \tilde{v}_i(\tilde{A}_j(R,Z))\leq \max\{v_i(\lpar a_j^-,b_j^-\rpar ),v_i(\lpar a_j^+,b_j^+\rpar )\} \leq v_i^+(A_j)$, where the first and the last inequality holds for the usual reasons, while, similarly to case 2(i), the second (resp., third) inequality holds since the minimum (resp., maximum) among the two values $v_i((a_j^-,b_j^-))$ and $v_i((a_j^+,b_j^+))$ must be less than or equal to (resp., greater than or equal to) the middle value among the three values $v_i((a_j^-,b_j^-))$, $v_i((a_j^+,b_j^-))$ and $v_i((a_j^+,b_j^+))$.

We conclude that $v_i^-(A_j)\leq \tilde{v}_i(\tilde{A}_j(Y,Z))\leq v_i^+(A_j)$ holds for any $Y,Z\in \{L,R\}$, and this shows the claim in case~2(ii).

\item {\em Case 6(i)}: In this case, we have $A_j=\lpar a_j^+,b_j^+\rpar$. As the fractional bundle $[a_j,b_j]$ is left-first in $\Delta^*$, the $j$-th bundle will never be equal to $\tilde{A}_j(L,R)$ over all allocations associated with vertices in $\Delta^*$. Thus, it is sufficient showing that $v_i^-(A_j)\leq \tilde{v}_i(\tilde{A}_j(Y,Z))\leq v_i^+(A_j)$ holds for any $(Y,Z)\in \{(L,L),(R,L),(R,R)\}$. 

First, let us focus  on fractional bundle $\tilde{A}_j(L,L)$; the virtual valuation applied to $\tilde{A}_j(L,L)$ is borderline-value (BV), that is, $\tilde{v}_i(\tilde{A}_j(L,L))$ is equal to the middle value among $v_i(\lpar a_j^-,b_j^-\rpar)$, $v_i(\lpar a_j^+,b_j^-\rpar)$ and $v_i(\lpar a_j^+,b_j^+\rpar)$. Thus, for the usual reasons, we have $v_i^-(A_j)\leq \min\{v_i(\lpar a_j^+,b_j^-\rpar ),v_i(\lpar a_j^+,b_j^+\rpar )\}\leq \tilde{v}_i(\tilde{A}_j(L,L))\leq \max\{v_i(\lpar a_j^+,b_j^-\rpar ),v_i(\lpar a_j^+,b_j^+\rpar )\} \leq v_i^+(A_j)$. 

Now, we consider bundle $\tilde{A}_j(R,L)$; the virtual valuation applied to $\tilde{A}_j(R,L)$ is right-value (RV), that is, $\tilde{v}_i(\tilde{A}_j(R,Z))$ is equal to $v_i(\lpar a_j^+,b_j^-\rpar)$. Thus, we have $v_i^-(A_j)\leq \min\{v_i(\lpar a_j^+,b_j^-\rpar ),v_i(\lpar a_j^+,b_j^+\rpar )\}\leq \tilde{v}_i(\tilde{A}_j(R,Z))\leq \max\{v_i(\lpar a_j^+,b_j^-\rpar ),v_i(\lpar a_j^+,b_j^+\rpar )\} \leq v_i^+(A_j)$, where the first and the second inequalities hold for the usual reasons, and the other inequalities hold since $\tilde{v}_i(\tilde{A}_j(R,Z))=v_i(\lpar a_j^+,b_j^-\rpar)$.

Finally, we consider bundle $\tilde{A}_j(R,R)$; the virtual valuation applied to $\tilde{A}_j(R,R)$ is again right-value (RV), but now the right knife has position $b^+$, and then $\tilde{v}_i(\tilde{A}_j(R,R))=v_i(\lpar a_j^+,b_j^+\rpar)$. Thus, for the usual reasons, we have $v_i^-(A_j)\leq v_i(\lpar a_j^+,b_j^+\rpar )=\tilde{v}_i(\tilde{A}_j(R,R))=v_i(\lpar a_j^+,b_j^+\rpar )\leq v_i^+(A_j)$. 

We conclude that $v_i^-(A_j)\leq \tilde{v}_i(\tilde{A}_j(Y,Z))\leq v_i^+(A_j)$ holds for any $(Y,Z)\in \{(L,L),(R,L),(R,R)\}$, and this shows the claim in case 6(i). 

\item {\em Case 6(ii)}: In this case, we have $A_j=\lpar a_j^-,b_j^+\rpar$. As the fractional bundle $[a_j,b_j]$ is right-first in $\Delta^*$, the $j$-th bundle will never be equal to $\tilde{A}_j(R,L)$ over all allocations associated with vertices in $\Delta^*$. Thus, it is sufficient showing that $v_i^-(A_j)\leq \tilde{v}_i(\tilde{A}_j(Y,Z))\leq v_i^+(A_j)$ holds for any $(Y,Z)\in \{(L,L),(L,R),(R,R)\}$. 

First, let us focus on fractional bundle $\tilde{A}_j(L,L)$; the virtual valuation applied to $\tilde{A}_j(L,L)$ is borderline-value (BV), that is, $\tilde{v}_i(\tilde{A}_j(L,L))$ is equal to the middle value among $v_i(\lpar a_j^-,b_j^-\rpar)$, $v_i(\lpar a_j^+,b_j^-\rpar)$ and $v_i(\lpar a_j^+,b_j^+\rpar)$. Thus, for the usual reasons, we have $v_i^-(A_j)\leq \min\{v_i(\lpar a_j^-,b_j^-\rpar ),v_i(\lpar a_j^+,b_j^+\rpar )\}\leq \tilde{v}_i(\tilde{A}_j(L,L))\leq \max\{v_i(\lpar a_j^-,b_j^-\rpar ),v_i(\lpar a_j^+,b_j^+\rpar )\} \leq v_i^+(A_j)$. 

Now, we consider bundle $\tilde{A}_j(L,R)$; the virtual valuation applied to $\tilde{A}_j(R,L)$ is again borderline-value (BV), but the right knife is now placed in position $b^+$, that is, $\tilde{v}_i(\tilde{A}_j(L,R))$ is equal to the middle value among $v_i(\lpar a_j^-,b_j^+\rpar)$, $v_i(\lpar a_j^+,b_j^+\rpar)$ and $v_i(\lpar a_j^+,(b_j+1)^+\rpar)$. Thus, for the usual reasons, we have  $v_i^-(A_j)\leq \min\{v_i(\lpar a_j^-,b_j^+\rpar ),v_i(\lpar a_j^+,b_j^+\rpar )\}\leq \tilde{v}_i(\tilde{A}_j(L,L))\leq \max\{v_i(\lpar a_j^-,b_j^+\rpar ),v_i(\lpar a_j^+,b_j^+\rpar )\} \leq v_i^+(A_j)$. 

Finally, we consider bundle $\tilde{A}_j(R,R)$; the virtual valuation applied to $\tilde{A}_j(R,L)$ is right-value (RV) and the right knife is again in position $b^+$, that is, $\tilde{v}_i(\tilde{A}_j(R,R))=v_i(\lpar a_j^+,b_j^+\rpar)$. Thus, for the usual reasons, we have $v_i^-(A_j)\leq \min\{v_i(\lpar a_j^-,b_j^+\rpar ),v_i(\lpar a_j^+,b_j^+\rpar )\}\leq \tilde{v}_i(\tilde{A}_j(R,R))\leq \max\{v_i(\lpar a_j^-,b_j^+\rpar ),v_i(\lpar a_j^+,b_j^+\rpar )\} \leq v_i^+(A_j)$. 

We conclude that $v_i^-(A_j)\leq \tilde{v}_i(\tilde{A}_j(Y,Z))\leq v_i^+(A_j)$ holds for any $(Y,Z)\in \{(L,L),(L,R),(R,R)\}$, and this shows the claim in case 6(ii). 

\end{itemize}

\subsection{Full Details on Efficient Computation of $\EQOP$ Allocations (Theorem \ref{thm2_eq1p_app}).}
Full details on design, correctness and complexity of the dynamic programming algorithm introduced in Section \ref{sec:notwholly} are provided in the proof of Theorem \ref{thm2_eq1p_app} below (for the pseudocode, see Algorithm \ref{alg4} of Appendix \ref{app:algorithms}). 

We say that a connected allocation $\mathcal{A}$ is {\em well-ordered} if the $i$-th leftmost bundle is assigned to agent $i$, for any $i\in [n]$; we note the allocation $\mathcal{A}$ constructed in Theorem \ref{thm1_eq1p} to show the existence of $\EQOP$ allocations was well-ordered. 
In this section, we again restrict our attention to well-ordered allocations, and  this assumption is without loss of generality (see Remark \ref{rema}). Let $C_v$ be the set of the valuations $v_i(S)$ that each agent $i$ has for any bundle $S$, and, given $c\in C_v$, we say that an allocation $\mathcal{A}=(A_1,\ldots, A_n)$ is {\em $c$-feasible} if, for each $i\in [n]$, $v_i^+(A_i(c))\geq c\geq v_i^-(A_j(c))$ holds.
\begin{lemma}\label{lem_eq1p_comp}
$\mathcal{A}$ is a well-ordered $c$-feasible allocation for some $c\in C_v$ $\Leftrightarrow$ $\mathcal{A}$ is a well-ordered $\EQOP$ allocation.
\end{lemma}

\begin{proof}
$\Rightarrow$): Let $\mathcal{A}$ be a $c$-feasible well-ordered allocation for some $c\in C_v$. Then, $v_i^+(A_i)\geq c\geq v_j^-(A_j)$ holds for any $i,j\in N$, that is, $\mathcal{A}$ is $\EQOP$.

$\Leftarrow$): Given an $\EQOP$ well-ordered allocation $\mathcal{A}$, let $c^+:=\min_{j\in N}v_j^+(A_j)$ and $c^-:=\max_{j\in N}v_j^-(A_j)$. Then, we have that $v_i^+(A_i)\geq c^+\geq c^-\geq v_i^-(A_i)$ holds for any $i\in N$, where the intermediate inequality follows from the $\EQOP$ guarantee. Let $j$ be the index minimizing $v_j^+(A_j)$, that is, $c^+=v_j^+(A_j)$. We observe that $v^+_j(A_j)=v_j(S)$, for some bundle $S$ obtained from $A_j$ by possibly removing one item from the board of $A_j$. Thus, by setting $c:=c^+$, we have $c\in C_v$. Then, by the previous inequalities, we have $v_i^+(A_i)\geq c\geq v_i^-(A_i)$ for any $i\in [n]$, that is, $\mathcal{A}$ is a well-ordered $c$-feasible allocation with $c\in C_v$.
\end{proof}

Then, we obtain Theorem \ref{thm2_eq1p}, that is restated and proven below:
\begin{theorem}[Claim of Theorem \ref{thm2_eq1p}]\label{thm2_eq1p_app}
Under non-negative valuations, an $\EQOP$ allocation can be found in polynomial time $O(n^2m^4)$.
\end{theorem}
\begin{proof}
By Lemma~\ref{lem_eq1p_comp}, to find an $\EQOP$ allocation \( \mathcal{A} \), it is sufficient to check, for each \( c \in C_v \), whether there exists a well-ordered $c$-feasible allocation. Furthermore, the existence of a well-ordered $\EQOP$ allocation was guaranteed by Theorem~\ref{thm1_eq1p} (whose proof, indeed, yielded a well-ordered allocation), so it is certain that such a \( c \in C_v \) can be found. 

For any \( c \in C_v \), the problem of determining whether there exists a well-ordered $c$-feasible allocation and, if so, finding such an allocation can be solved using a dynamic programming algorithm in polynomial time \( O(n m^2) \). Indeed, one can iteratively solve the sub-problem \( B_c[i][j] \) which determines whether there exists a partial well-ordered $c$-feasible allocation that allocates the first \( j \) items to the first \( i \) agents.

Specifically, for any \( i \in [n] \) and \( j \in [m] \cup \{0\} \), let \( B_c[i][j] \) be the boolean value that is TRUE (resp. FALSE) if there exists a (resp. if there is no) partial $c$-feasible well-ordered allocation $\mathcal{A}$ of the first \( j \) items to the first \( i \) agents. The solution to the original problem is then obtained by evaluating \( B_c[n][m] \). The values \( B_c[i][j] \) satisfy the following recurrence relation:
\begin{itemize}
\item {\em Base Case} (\( i = 1 \), \( j \in [m] \cup \{0\} \)):  
  \( B_c[1][j] = \text{TRUE} \) if the bundle $\lpar 1, j\rpar$ satisfies \( v_1^+(\lpar 1, j\rpar) \geq c \geq v_1^-(\lpar 1, j\rpar) \);  
  otherwise, \( B_c[1][j] = \text{FALSE} \).
  \item {\em Recursive Step} (\( i \in [n] \setminus \{1\} \), \( j \in [m] \cup \{0\} \)):  
  \( B_c[i][j] = \text{TRUE} \) if there exists \( \ell \in [j+1] \) such that the bundle $\lpar \ell, j\rpar$ satisfies both  
  \( v_i^+(\lpar \ell, j\rpar) \geq c \geq v_i^-(\lpar \ell, j\rpar\)  
  and \( B_c[i-1][\ell-1] = \text{TRUE} \);  
  otherwise, \( B_c[i][j] = \text{FALSE} \).
\end{itemize}

The above recurrence relation for \( B_c[i][j] \) can be exploited by a dynamic-programming algorithm to compute all the values \( B_c[i][j] \), for \( i \in [n]\) and \( j \in [m] \cup \{0\} \), in \( O(nm^2) \) time (i.e., \( O(m) \) to compute each cell \( B_c[i][j] \), multiplied by the total number of cells, $O(mn)$)\footnote{For $i = n$, it is sufficient to find the value $B_c[n][j]$ for $j = m$ only, since the bundle assigned to agent $n$ in the complete allocation is either empty or must include the last item $m$. Nonetheless, for the sake of simplicity, we compute all values $B_c[n][j]$, as this does not impact the asymptotic time complexity.}. Thus, the overall execution of this dynamic-programming approach for all \( c \in C_v \) requires \( O((nm^2) \cdot (nm^2)) = O(n^2 m^4) \) steps.

A pseudo-code of the procedure outlined above is given in Algorithm \ref{alg4} of Appendix \ref{app:algorithms}.
\end{proof}

We observe that, under identical valuations, the cardinality of the set \( C_v \) is \( O(m^2) \), as the value of each connected bundle does not depend on the agent evaluating it. 
In this case, the complexity of the algorithm outlined in the proof of Theorem~\ref{thm2_eq1p} decreases by a factor of \( n \). Moreover, since all the considered approximate envy-freeness and equitability guarantees coincide under identical valuations, we obtain the following corollary as a direct consequence of Theorems \ref{thm1_eq1p} and \ref{thm2_eq1p_app} (restated above as Theorem \ref{thm2_eq1p_app}).
\begin{corollary}\label{coro2_eq1p}
Under identical non-negative valuations, $\EQOP$ (or, equivalently, $\EFOP$) allocations always exist and can be efficiently computed in polynomial time \( O(n^2m^3) \).
\end{corollary}

\begin{remark}\label{rema}
We observe that the $\EQOP$ allocations \( \mathcal{A} \) considered in Theorem \ref{thm2_eq1p} (as well as that of Theorem \ref{thm1_eq1p}) is well-ordered, that is, it assigns the left-most bundle to agent 1, the second left-most bundle to agent~2, and so forth. Anyway, the existence and computation of $\EQOP$ allocations can be easily generalized to accommodate any fixed ordering of the agents, that is, we can decide in advance which agent is assigned the $i$-th leftmost bundle. Indeed, by Theorem \ref{thm2_eq1p}, for any fixed ordering $\rho:[n]\rightarrow [n]$ of the agents, we can compute in polynomial time another $\EQOP$ allocation that is well-ordered w.r.t. the new agents order, that is, it assigns the left-most bundle to agent $\rho(1)$, the second left-most bundle to agent $\rho(2)$, and so forth.
\end{remark}
\newpage 
\section{Missing Proofs from Section \ref{sec:notwholly}:\\part on envy-freeness}\label{app:ef1}
\subsection{Full details on the proof of Theorem \ref{thm3_ef1p}}
To show the existence of an $\EFOP$ allocation, we employ the same framework as in the approximate equitability case, with minor modifications. Specifically, we use the same triangulation $ T $ as in the previous case but equip it with $ n $ distinct coloring functions $ L_1, \ldots, L_n $, instead of the single coloring function $ L $ used earlier. 

Here, each vertex $\mathbf{x} \in V(T)$ corresponds to a partition $\mathcal{\tilde{S}}(\mathbf{x}) = (\tilde{S}_1(\bfx), \ldots, \tilde{S}_n(\bfx))$ of the interval $[0, m]$, referred to as a \emph{fractional connected partition}. This partition divides the interval into $n$ connected sub-intervals (fractional connected bundles), arranged sequentially from the left-most to the right-most. Each fractional bundle is temporarily unassigned to any agent, unlike in the equitability case (where the $ i $-th left-most fractional bundle was automatically assigned to agent $ i $).

Let $ \tilde{v}_i $ be the virtual valuation defined as in the equitability case. For any $ i \in [n] $, consider a distinct coloring function $ L_i $ that assigns to each vertex $ \bfx \in V(T) $ the index $ j $ of the bundle $ \tilde{S}_j(\bfx) $ in $ \mathcal{\tilde{S}}(\bfx) $ that maximizes the virtual valuation $ \tilde{v}_i(\tilde{S}_j(\bfx)) $ of agent $ i $, with ties broken in favor of non-empty bundles and, in case of further ties, arbitrarily. Also in this case, each $ L_i $ is a special coloring function (since the empty bundle is the least valuable for any agent).

By the Generalized Sperner's Lemma (Theorem~\ref{lem:sperner_gen}), there exists at least one jointly fully-colored elementary $(n-1)$-simplex $\Delta^* = \text{conv}(\bfx_1^*, \ldots, \bfx_n^*) \in T$ under the coloring functions $L_1, \ldots, L_n$. Consequently, by exploiting the construction of these coloring functions, for any $i\in [n]$ and for some permutations $\sigma,\tau: [n] \to [n]$ (independent on $i$), the fractional bundle indexed by $\tau(i)$ is the most valuable bundle for agent~$i$ in the fractional partition $\mathcal{\tilde{S}}(\bfx_{\sigma(i)}^*)$ derived from  the ${\sigma(i)}$-th vertex $\bfx_{\sigma(i)}^*$ of $\Delta^*$, under virtual valuation $\tilde{v}_i$. Starting from the fractional partition $\mathcal{\tilde{S}}:=\mathcal{\tilde{S}}(\bfx_1^*)$ associated with the first vertex $\bfx_1^*$ of $\Delta^*$, we use the same rounding procedure as in the equitability case, to obtain a partition $\mathcal{S}=(S_1,\ldots, S_n)$ of $[m]$ made of integral connected bundles, referred to as {\em connected integral partition}. The claims of Lemma \ref{lem1_eq1p} and \ref{lem2_eq1p} obviously apply to $\mathcal{\tilde{S}}$, and can be restated as follows:
\begin{lemma}\label{lem1_ef1p}
$\mathcal{S}$ is a connected integral partition.
\end{lemma}
\begin{lemma}\label{lem2_ef1p}
For any $h,i,j\in [n]$, the connected (integral) partition $\mathcal{S}$ satisfies $v_i^-({S}_j)\leq \tilde{v}_i(\tilde{S}_j(\bfx^*_h))\leq v_i^+({S}_j)$.
\end{lemma}
Now, we can determine the final connected (integral) allocation $ \mathcal{A} = (A_1, \ldots, A_n) $ by assigning, for each $ i \in [n] $, the bundle $ S_{\tau(i)} $ from the connected partition $ \mathcal{S} $ to agent $ i $ (i.e., $ A_i := S_{\tau(i)} $). We can then prove Theorem \ref{thm3_ef1p}, restated below for completeness:
\begin{theorem}[Statement of Theorem \ref{thm3_ef1p}]
Under non-negative valuations, $\mathcal{A}$ is an $\EFOP$ allocation.
\end{theorem}
\begin{proof}
First of all, $ \mathcal{A} $ is a connected allocation by Lemma~\ref{lem1_ef1p}, as it is obtained by permuting the bundles of $\mathcal{S}$ trough permutation $\tau$ ($A_i=S_{\tau(i)}$ for any $i\in [n]$). Now, we show the $\EFOP$ guarantee. 
By the construction of the jointly fully-colored simplex $ \Delta^* = \text{conv}(\bfx_1^*, \ldots, \bfx_n^*) $ and the coloring functions $ L_1, \ldots, L_n $, we have that  $\tau(i)$ is one of the indices $j$ that maximize the valuation $ \tilde{v}_i(\tilde{S}_j(\bfx_{\sigma(i)}^*)) $, for any $ i \in [n] $ (i.e., the $ \tau(i) $-th bundle of partition $ \mathcal{\tilde{S}}(\bfx_{\sigma(i)}^*) $ maximizes the valuation of agent $ i $ among all bundles of that partition). Thus, for any $ i, j \in N $, we have 
$
v_i^+(A_i)=v_i^+(S_{\tau(i)}) \geq \tilde{v}_i(\tilde{S}_{\tau(i)}(\bfx_{\sigma(i)}^*)) \geq \tilde{v}_i(\tilde{S}_{\tau(j)}(\bfx_{\sigma(i)}^*))\geq v_i^-(S_{\tau(j)})= v_i^-(A_j),
$
where the second inequality follows from the above observation, and the first and last inequalities follow from Lemma~\ref{lem2_eq1p}. Since $ v_i^+(A_i) \geq v_i^-(A_j) $ for any $ i, j \in N $, we conclude that $ \mathcal{A} $ satisfies the $\EFOP$ condition, and thus the claim holds.
\end{proof}
\begin{remark}
As noted in Remark~\ref{rema}, in the case of equitability one can decide in advance which agent is assigned the $i$-th leftmost bundle. However, unlike in that case, for $\EFOP$ allocations this assignment cannot be predetermined, since the ordering of agents to whom the bundles are assigned from left to right is determined by the jointly fully-colored simplex returned by the Generalized Sperner’s Lemma in the proof of Theorem \ref{thm3_ef1p}. This constitutes the main obstacle preventing the extension of the dynamic programming approach used for equitability in Theorem~\ref{thm2_eq1p} to compute $\EFOP$ allocations. Indeed, such an approach would first require knowing the ordering of agents to whom the bundles are assigned from left to right in the future $\EFOP$ allocation, and then applying a similar computational method as in Theorem~\ref{thm2_eq1p} to obtain such an $\EFOP$ allocation. With this respect, we point out that, differently from the case of equitability, it is not guaranteed that, for any agents ordering to whom the bundles are assigned from left to right, there exists an $\EFOP$ allocation consistent with that assignment.
\end{remark}

\newpage\ \newpage

\section{Missing Proofs from Section \ref{sec:multisperner}}\label{app:non_positive}
\subsection{Proof of Theorem \ref{multiSperner} (Multi-coloring Sperner's Lemma)}
We first assume, without loss of generality, that for any vertex \( \bfx \) located on the boundary of \( \Delta \), the condition \( \LL(\bfx) = \{i \in [n] : \bfx \in F_i\} \) holds, and that \( | \LL(\bfx) | = 1 \) for any internal vertex \( \bfx \) not located on the boundary. Indeed, if the claim holds under this restriction, it also holds in the more general case where \( \LL(\bfx) \supseteq \{i \in [n] : \bfx \in F_i\} \) for boundary vertices \( \bfx \) and \( | \LL(\bfx) | \geq 1 \) for the other vertices.

Let \( L \) be the standard coloring function associated with the multi-coloring one $\LL$, that assigns to each vertex \( \bfx \in V(T) \) the color  
$L(\bfx) = \min\{i\in \LL(\bfx)\}$;  we refer to $L$ as the {\em minimal restriction} of $\LL$.
We will show, by induction on $n\geq 2$, that the number of fully-colored simplices with respect to the minimal restriction \( L \) is odd, which implies the existence of at least one such simplex. By the construction of \( L \), any fully-colored simplex with respect to \( L \) is also fully-colored with respect to the original multi-coloring function \( \LL \), thereby proving the claim of the theorem.  

{\em Base Case ($n=2$):} 
In this case, \( \Delta \) is the \( 1 \)-simplex \( [\vv_1, \vv_2] \), whose \( 0 \)-dimensional faces are the two endpoint vertices \( \vv_1 \) and \( \vv_2 \). The triangulation \( T \) can be viewed as a path consisting of vertices \( V(T) \), connected by edges that correspond to the (contiguous) elementary \( 1 \)-simplices $[\bfx,\mathbf{y}]$ of \( T \). We have \( L(\bfx) \in [2] \) for every \( \bfx \in V(T) \). Moreover, since \( \LL \) is special, and $\vv_1$ (resp., $\vv_2$) represents the $(n-2)$-face opposite to $\vv_2$ (resp., $\vv_1$), it follows that \( L(\vv_1) = 2 \) (resp., \( L(\vv_2) = 1 \)). Thus, since $L$ assigns colors in $\{1,2\}$ and the colors assigned by $L$ to the endpoints $\vv_1$ and $\vv_2$ are different, we immediately have that the number of elementary $1$-simplices $[\bfx,\mathbf{y}]$ with $L(\bfx)\neq L(\mathbf{y})$ (i.e., those which are fully-colored w.r.t. $L$) must be odd. 

{\em Inductive Step:} Assume that the claim holds for $k=n-1$, and let us show it for $k=n$. Let $F'=\bigcup_{i\in [n-1]}F_i$ be the union of all the $(n-2)$-faces of $\Delta$, except for the $n$-th face. The remainder of the proof proceeds with the following two sub-steps:
\begin{itemize}
\item {\em Sub-step 1 (Recovering the inductive hypothesis)}: We have that the triangulation \( T' \) induced on \( F' \) by \( T \) is topologically equivalent\footnote{Given two topological spaces \( X, Y \), a {\em homeomorphism} \( f: X \to Y \) is a bijective continuous function whose inverse is also continuous.  Two triangulations \( T_X \) and \( T_Y \) of topological spaces \( X \) and \( Y \), respectively, are {\em topologically equivalent} if there exists a homeomorphism \( f: X \to Y \) that maps each simplex of \( T_X \) to a simplex of \( T_Y \), preserving the adjacency relation, i.e., $ f(\Delta_1 \cap \Delta_2) = f(\Delta_1) \cap f(\Delta_2)$ for any simplices  $\Delta_1, \Delta_2 \in T_X.$} to the triangulation of an \( (n-2) \)-simplex ( $\Delta' = \text{conv}(\vv_1', \ldots, \vv_{n-1}')$. This is established via a homeomprphism \( f \) that maps each vertex \( \vv_i \) to \( \vv_i' \) for \( i \in [n-1] \), sends the vertex \( \vv_n \) into the interior of \( \Delta' \), and maps each \( (n-3) \)-face \( F_i \cap F_n \) to an \( (n-3) \)-face of \( \Delta' \). 
In particular, such a homeomorphism \( f \) can be obtained by projecting each point located on $F'$ onto the hyperplane containing the face \( F_n \), parallely to the axis connecting vertex $\vv_n$ to the barycenter $\vv_n'$ of $F_n$. See Figure \ref{fig:pir} for a visualization of the projection \( f \) in the case \( n = 4 \).

Thus, we can topologically regard \( F' \) as an \( (n-2) \)-simplex \( \Delta' \) and interpret \( T' \) as its associated triangulation.  

Let \( \LL' : V(T') \to 2^{[n-1]} \setminus \{\emptyset\} \) be the multi-coloring function that assigns to each vertex \( \bfx \in V(T') \) the set \( \LL(\bfx) \setminus \{n\} \), that is, the restriction of \( \LL \) to the first \( n-1 \) colors. By  relying on the fact that $\LL$ is a special multi-coloring function for $T$, we have that \( \LL' \) is also a special multi-coloring function for \( T' \), and its minimal restriction \( L' \) coincides with the minimal restriction \( L \) of $\LL$, when restricted to vertices in \( V(T') \) (i.e., located on $F'$). Thus, we can apply the inductive hypothesis to the triangulation \(T'\) and the multi-coloring function \(L'\) on \(V(T')\), since \(\dim(\Delta') = n-1\). This allows us to conclude that the number of fully-colored elementary \((n-2)\)-simplices in \(T'\) with respect to the coloring function \(L'\) and using colors from \([n-1]\) is odd. Equivalently, we have that the number of fully-colored elementary \((n-2)\)-simplices in $F'$, with respect to the coloring function \(L\)  and using colors from $[n-1]$, is odd. 
 
\item {\em Sub-step 2 (From dimension $n-2$ to $n-1$ through a parity argument on graphs)}: Let \( G \) be the undirected graph whose nodes represent the elementary \( (n-1) \)-simplices of \( \Delta \), with edges connecting pairs of \( (n-1) \)-simplices that share an \( (n-2) \)-dimensional face that is fully-colored with respect to \( L \) using colors from \( [n-1] \).  
By leveraging the construction of the edges, we see that no node in \( G \) has a degree greater than 2. This means that \( G \) consists of a union of undirected paths or cycles (excluding isolated nodes). Consequently, the number of nodes with degree 1 in \( G \) must be even.  
Moreover, we observe that a node has degree \( 1 \) in \( G \) if and only if it corresponds to an elementary \((n-1)\)-simplex of one of the following types:  
 (i) an {\em almost} fully-colored \( (n-1) \)-simplex on the {\em boundary}, where each color \( i \in [n-1] \) appears exactly once, except for one color that appears twice, and that has a face touching \( F' \); (ii) a fully-colored elementary \( (n-1) \)-simplex (where each color $i\in [n]$ appears exactly once). The simplices of type (i) are bijectively associated with the elementary \( (n-2) \)-simplices established in Sub-step 1, and thus they appear in an odd number. Since the total number of nodes of degree 1 in \( G \) is even, it follows that the number of nodes of type (ii), i.e., the desired fully-colored \( (n-1) \)-simplices, must be odd, and this concludes the proof of the theorem.
\end{itemize}
The Multi-coloring Sperner's lemma, applied to an instance with \( n = 3 \), along with part of its proof (in particular, Sub-step 2), is illustrated in Figure \ref{fig:4} of Appendix~\ref{app:pictures}.

\subsection{Full Details on Existence of $\EQOP$ allocations.}
To show the existence of an $\EQOP$ allocation, we employ the same framework as in the case of non-negative valuations, with minor modifications. We use the same triangulation \( T \) as in that case, but we equip $T$ with the multi-coloring function $\LL$ that assign to each vertex $\bfx\in V(T)$ the set $\LL(\bfx)$ of indices $j$ which maximize $\tilde{v}_j(\tilde{A}_j(\bfx))$, where $\tilde{v}_j$ is the virtual valuation defined as in Section \ref{sec:notwholly}. 

Differently from the case of non-negative valuations, under non-positive valuations the empty bundle is always the best one for each agent. Thus, given $i\in [n]$, since each vertex $\bfx$ located on the $(n-2)$-face $F_i$ opposite to $\vv_i$ corresponds to  an allocation having its $i$-th bundle empty, we have that $i\in \LL(\bfx)$. Then, $\LL$ is a special multi-coloring function and, by the Multi-coloring Sperner's lemma (Theorem \ref{multiSperner}), there exists an elementary \((n-1)\)-simplex \(\Delta^* = \text{conv}(\bfx^*_1, \ldots, \bfx^*_n)\) that is fully-colored under the multi-coloring function \(\LL\). This means there exists a permutation \(\sigma: [n] \rightarrow [n]\) such that \(i \in \LL(\bfx^*_{\sigma(i)})\) for each \(i \in [n]\). Let \(\mathcal{\tilde{A}}(\bfx^*_1), \ldots, \mathcal{\tilde{A}}(\bfx^*_n)\) be the sequence of \(n\) fractional allocations determined from \(\Delta^*\) as in the case of non-negative valuations (see Section \ref{sec:notwholly}). By leveraging the construction of \(\LL\) and the above sequence of fractional allocations, we have that for each \(i \in [n]\), \(i\) is one of the indices \(j\)  that maximize \(\tilde{v}_j(\tilde{A}_j(\bfx^*_{\sigma(i)}))\) (that is, the $i$-th bundle of allocation  $\tilde{\mathcal{A}}(\bfx^*_{\sigma(i)})$ maximizes the virtual valuation $\tilde{v}_j$ among all bundles $\tilde{A}_j(\bfx^*_{\sigma(i)})$ of that allocation). Thus, by exploiting the same reasoning as in Theorem~\ref{thm1_eq1p}, we have that the integral connected allocation $\mathcal{A}$ obtained by rounding the first fractional allocation $\mathcal{\tilde{A}}(\bfx^*_1)$ as in Section \ref{sec:notwholly}, satisfies the desired approximate equitability guarantees.
\begin{theorem}\label{thm1_eq1p_nonpos}
$\mathcal{A}$ is an $\EQOP$ connected allocation, if the agents' valuations are non-positive.
\end{theorem}
Furthermore, by exploiting the same algorithmic framework as in Theorem \ref{thm2_eq1p},
we also obtain the following computational result. 
\begin{theorem}\label{thm2_eq1p_nonpos}
An $\EQOP$ allocation for instances with non-positive valuations can be found in polynomial time $O(n^2m^4)$.
\end{theorem}
If valuations are monotone non-increasing, there are chores only. Thus, the $\EQOP$ guarantee is equivalent to the stronger {\em equitability-up-to-one-chore-over-paths} (EQ1P), where each agent \( i \) can remove at most one outer chore \( c \) from her own bundle \( A_i \) to obtain a valuation at least as large as that achieved by each other agent. Then, we obtain the following corollary of Theorem \ref{thm2_eq1p_nonpos}:
\begin{corollary}
An EQ1P allocation for instances with monotone non-increasing valuations always exists and can be found in polynomial time. 
\end{corollary}
This existential and computational result complement the result of \cite{MSVV21}, that holds for monotone non-decreasing valuations only. 

Finally, analougsly to the case of non-negative valuations, we have the following corollary of Theorem \ref{thm2_eq1p_nonpos} and a similar remark to Remark \ref{rema}.

\begin{corollary}\label{coro2_eq1p_nonpos}
Under identical non-positive valuations, $\EQOP$ (or, equivalently, $\EFOP$) allocations always exist and can be efficiently computed in polynomial time \( O(n^2m^3) \).
\end{corollary}

\begin{remark}\label{rema_nonpos}
Under non-positive valuations, for any fixed ordering $\rho:[n]\rightarrow [n]$ of the agents, we can compute in polynomial time another $\EQOP$ allocation that is well-ordered w.r.t. the new agents order, that is, it assigns the left-most bundle to agent $\rho(1)$, the second left-most bundle to agent $\rho(2)$, and so forth.
\end{remark}

\subsection{Full Details on Existence of $\EFOP$ allocations.}

Before showing the existence of $\EFOP$ allocations, we first provide another multi-coloring variant of  Sperner's lemma, instantiated for the specific Kuhn's triangulation, but applied to $n$ multi-coloring functions, as in the Generalized Sperner's Lemma (Theorem \ref{lem:sperner_gen}). Specifically, let $\Delta=\{ \bfx = (x_1, \ldots, x_{n-1}) \in \mathbb{R}^{n-1} : 0 \leq x_1 \leq x_2 \leq \ldots \leq x_{n-1} \leq m \}$ be the $(n-1)$-simplex and $T$ be the related Kuhn's triangulation as defined in Section \ref{sec:notwholly}. Furthermore, let $\LL_1,\ldots, \LL_n$ be $n$ special multi-coloring functions of $V(T)$, as defined in the previous analysis of the $\EQOP$ guarantee. An elementary $(n-1)$-simplex $\Delta^*=\text{conv}(\bfx^*_1,\ldots, \bfx^*_n)\in T$ is {\em jointly fully-colored} under $\LL_1,\ldots, \LL_n$ if there exist two permutations $\sigma,\tau:[n]\rightarrow [n]$ such that $\tau(i)\in \LL_i(\bfx^*_{\sigma(i)})$ for any $i\in [n]$. 
\begin{theorem}[Generalized Multi-coloring Sperner's Lemma]\label{multiSperner_super}
Let $\LL_1, \dots, \LL_{n}$ be $n$ special multi-coloring functions of the Kuhn's triangulation $T$ of the $(n-1)$-simplex $\Delta$. Then, there exists a jointly-fully-colored elementary $(n-1)$-simplex $\Delta^* \in T$ under multi-coloring functions $\LL_1,\ldots, \LL_{n}$.
\end{theorem}
\begin{proof}

Given two vertices $\bfx$ and $\textbf{y}$, a {\em simple path} $P$ from $\bfx$ to $\textbf{y}$ in $T$ is a sequence of adjacent elementary $1$-simplices of $T$ connecting $\bfx$ and $\textbf{y}$, and its {\em length} is given by the number of such simplices.
Let $\Phi:V(T)\rightarrow [n]$ be an auxiliary standard coloring function such that $\Phi(\bfx)=1+(d(\bfx,\textbf{0}_{n-1})\text{ mod } n)$, where $d(\bfx,\textbf{y})$ is the length of the shortest path connecting $\bfx$ and $\textbf{y}$,  $\textbf{0}_{n-1}=(0,\ldots, 0)$ is the origin vertex, and ``mod $n$'' denote the remainder operator modulo $n$. Let $\overline{\LL}$ be a new {\em aggregated multi-coloring function} that assigns color $\overline{\LL}(\bfx):=\LL_{\Phi(\bfx)}(\bfx)$ to each vertex $\bfx\in V(T)$, where $\LL_1,\ldots, \LL_n$ are the input special multi-coloring functions. We observe that, since each $\LL_i$ is special, then $\overline{\LL}$ is special, too. Then, by relying on the standard Multi-coloring Sperner's lemma (Theorem \ref{multiSperner}), there exists an elementary $(n-1)$-simplex $\Delta^*=\text{conv}(\bfx_1^*,\ldots, \bfx^*_n)\in T$ that is fully-colored under $\overline{\LL}$, i.e., there exists a permutation $\pi:[n]\rightarrow [n]$ such that $j\in \overline{\LL}(\bfx_{\pi(j)}^*)$ for any $j\in [n]$. 

It remains to show that $\overline{\LL}$ refers to a distinct multi-coloring function $\LL_i$ at each distinct vertex $\bfx^*_{\sigma(i)}$, for some permutation $\sigma:[n]\rightarrow [n]$. Together with the fact that $\Delta^*$ is fully-colored w.r.t. $\overline{\LL}$, this will imply that $\Delta^*$ is jointly fully-colored w.r.t. the multi-coloring functions $\LL_1,\ldots, \LL_n$.

By construction of $\Phi$, we have that $\Phi$ assigns a distinct color in $[n]$ to each vertex of $\Delta^*$. Indeed, for any vertices $\bfx,\textbf{y}\in V(T)$, the distance $d(\bfx,\textbf{y})$ between $\bfx$ and $\textbf{y}$ (used to describe $\Phi$) can be equivalently defined as the scaled Manhattan distance\footnote{Given two vectors $(x_1,\ldots, x_k),(y_1,\ldots, y_k)\in \mathbb{R}^k$, the {\em Manhattan distance} between $\bfx$ and $\textbf{y}$ is given by $\sum_{h=1}^k|x_h-y_h|$.} $3\sum_{h=1}^{n-1}|x_h-y_h|$. Furthermore, by exploiting the structure of the elementary $(n-1)$-simplices in the Kuhn's triangulation \( T \) and the definition of Manhattan distance, we have that, given an elementary \((n-1)\)-simplex of \( T \), after an appropriate reordering of its vertices, each vertex is at distance 1 from the previous one according to the above scaled Manhattan distance, and the $i$-th-vertex in this ordering is at distance $i-1$ from the first vertex. Thus, $\Phi$ attains distinct values in $[n]$ at each of the $n$ vertices of $\Delta^*$, then allowing to define a permutation $\sigma:[n]\rightarrow [n]$ such that $\Phi(\bfx^*_{\sigma(i)})=i$ for any $i\in  [n-1]$. 

We recall that 
$j\in \overline{\LL}(\bfx^*_{\pi(j)})=\LL_{\Phi(\bfx^*_{\pi(j)})}(\bfx^*_{\pi(j)})$ holds for any $j\in [n]$. Let $\tau:=\pi^{-1}\circ \sigma$, where $\sigma$ is the permutation such that $\Phi(\bfx^*_{\sigma(i)})=i$ for any $i\in  [n-1]$. By using $\tau(i)$ in place of $j$ in the above membership relation involving $\overline{\LL}$, we have $\tau(i)\in \LL_{\Phi(\bfx^*_{\pi(\tau(j))})}(\bfx^*_{\pi(\tau(i))})=\LL_{\Phi(\bfx^*_{\sigma(i)})}(\bfx^*_{\sigma(i)})=\LL_{i}(\bfx^*_{\sigma(i)})$ for any $i\in [n]$, where the first equality holds by $\tau=\pi^{-1}\circ \sigma$ and the last one holds by $\Phi(x^*_{\sigma(i)})=i$. We conclude that $\Delta^*$ is jointly fully-colored under $\LL_1,\ldots, \LL_n$.
\end{proof}

To establish the existence of an $\EFOP$ allocation, we adopt the same framework as in the approximate equitability case, with minor modifications. Specifically, we use the same triangulation \( T \) as before but equip it with \( n \) distinct multi-coloring functions, \( \LL_1, \ldots, \LL_n \) (defined below), instead of the single multi-coloring function \( \LL \) used earlier.

As in the case of the approximate envy-freeness guarantee for non-negative valuations, each vertex \( \bfx \in V(T) \) now corresponds to a fractional connected partition \( \tilde{\mathcal{S}}(\bfx) = (\tilde{S}_1(\bfx), \ldots, \tilde{S}_n(\bfx)) \) of the interval \([0, m]\), where each fractional bundle is initially unassigned to any agent (unlike in the equitability case). For any \( i \in [n] \), we consider a distinct multi-coloring function \( \LL_i \) that assigns to each vertex \( \bfx \in V(T) \) the set \( \LL_i(\bfx) \) of indices \( j \) corresponding to the bundles \( \tilde{S}_j(\bfx) \) that maximize the valuation \( \tilde{v}_i(\tilde{S}_j(\bfx)) \) of agent \( i \), where \( \tilde{v}_i \) is the virtual valuation considered above. As in the case of equitability for non-positive valuations, each \( \LL_i \) is again a special multi-coloring function, since the empty bundle is the most valuable for any agent.

By Theorem~\ref{multiSperner_super} (Generalized Multi-coloring Sperner's lemma), there exists at least one jointly fully-colored elementary \((n-1)\)-simplex \(\Delta^* = \text{conv}(\bfx_1^*, \ldots, \bfx_n^*) \in T\) under the coloring functions \(\LL_1, \ldots, \LL_n\). Consequently, by exploiting the construction of these coloring functions, for some permutations \(\sigma,\tau:[n] \to [n]\) and for any $i\in [n]$, we have that $\tau(i)$ is one of the indices $j$ maximizing $\tilde{v}_i(\tilde{S}_j(\bfx^*_{\sigma(i)}))$ (that is, the fractional bundle indexed by \(\tau(i)\) is the most valuable bundle for agent~\(i\) in allocation $\mathcal{\tilde{S}}(\bfx^*_{\sigma(i)})$, under virtual valuation $\tilde{v}_i$). 
Then, we can apply the same rounding procedure as in the case of non-negative valuations (Section \ref{sec:notwholly}) to the partition \( \tilde{\mathcal{S}} \) to obtain a connected integral allocation \( \mathcal{A} = (A_1, \ldots, A_n) \) of \([m]\), obtained by assigning, for each $ i \in [n] $, the bundle $ S_{\tau(i)} $ from the connected partition $ \mathcal{S} $ to agent $ i $ (i.e., $ A_i := S_{\tau(i)} $). Then, we obtain the following theorem, analogous to Theorem \ref{thm3_ef1p}:

\begin{theorem}\label{thm3_ef1p_nonpos}
Under non-positive valuations, $\mathcal{A}$ is an $\EFOP$ allocation.
\end{theorem}

If valuations are monotone non-increasing, the $\EFOP$ guarantee is equivalent to the stronger {\em envy-freeness-up-to-one-chore-over-paths} (EF1P), where each agent \( i \) can remove at most one outer chore \( c \) from her own bundle \( A_i \) to obtain a valuation at least as large as that achieved for the other bundles. Then, we obtain the following corollary:
\begin{corollary}
An EF1P allocation for instances with monotone non-increasing valuations always exists. 
\end{corollary}
This result complements the findings of \citet{BCFIMPVZ22,I23}, that hold for monotone non-decreasing valuations only.

\newpage

\section{Objective Valuations}\label{sec:objective}
We recall that, under objective valuations, any item $x$ is either a good or a chore (independently on the considered agents and bundles); in such a case, we can partition \( M \) into a set of goods \( G \) and a set of chores \( C \), and we choose arbitrarily how to classify dummy items that qualify as both. 

For the remainder of this section, we assume w.l.o.g. that $m \geq n-1$. Indeed, the following remark explains how to handle the case $m<n-1$.
\begin{remark}\label{rema_compl_agents}
If the initial instance $I$ had $n-1 > m$, we could remove $n-m-1$ arbitrary players, thereby obtaining a new instance $I'$ with $n' := m+1 > 0$ players. Let $\mathcal{A}'$ be an allocation in $I'$ that satisfies one of the considered approximate equitability criteria.\footnote{We note that these arguments do not, in general, apply under exact or approximate envy-freeness.} We observe that the allocation $\mathcal{A}$ obtained from $\mathcal{A}'$ by assigning empty bundles to all $n-m-1$ excluded players also satisfies the same criterion. Since $I'$ has $m$ items and $m+1$ players, at least one player $i$ in $\mathcal{A}'$ necessarily receives the empty bundle. Therefore, when extending $\mathcal{A}'$ to $\mathcal{A}$, the fairness guarantees established in $\mathcal{A}'$ are preserved, as the excluded players also receive empty bundles and can be treated analogously to player $i$.

In light of this observation, any polynomial or pseudo-polynomial algorithm for computing approximately equitable allocations in instances with $m \geq n-1$ can be adapted to instances $I$ with $n-1 > m$ by applying it to the reduced instance $I'$ with $n' = O(m)$ players. Consequently, in this case, the dependence of the running time on $n$ can be replaced by a dependence on $m = O(n)$.
\end{remark}
In Algorithm \ref{alg1} of Appendix \ref{app:algorithms}, we present a {\em local-search algorithm} that returns an $\EQXR$ allocation in pseudo-polynomial time under objective valuations; this algorithm is a variant of the local-search framework of \cite{BBPP24}. Algorithm \ref{alg1} starts from an allocation assigning all items to the first agent. Then, it executes either the while-loop at lines 4–7 if $v_1(M) > 0$, or the while-loop at lines 12–15 if $v_1(M) < 0$, and simply returns the initial allocation otherwise (i.e., if $v_1(M)=0$). The former while-loop, referred to as {\em good-moving}, repeatedly moves goods from the bundle of some envied agent (in terms of inequity) to the least valuable bundle. The latter, referred to as {\em chore-moving}, repeatedly moves chores from the bundle of some envious agent to the bundle of the most envied one. Both loops terminate when the $\EQXR$ condition is satisfied\footnote{
In the full proof, for completeness, we analyze both the cases $v_1(M) > 0$ and $v_1(M) < 0$, corresponding to the execution of the good-moving while-loop and the chore-moving while-loop, respectively. Anyway, to handle the case $v_1(M) < 0$, one could directly appeal to a mirroring argument with respect to the case $v_1(M) > 0$. Indeed, consider a fair allocation instance with valuations $v_i'$ obtained by multiplying the original valuations $v_i$ by $-1$. We observe that: the case $v_1'(M) > 0$ corresponds to $v_1(M) < 0$; executing the good-moving while-loop with respect to $v_i'$ is equivalent to executing the chore-moving while-loop with respect to the original valuations $v_i$ (since inequalities reverse, indices $i,j$ are exchanged, and goods become chores); and property (i) of $\EQXR$ under $v_i'$ is equivalent to property (ii) of $\EQXR$ under $v_i$. Thus, when $v_1(M) < 0$, the ability of the good-moving while-loop to compute allocations satisfying property (i) of the $\EQXR$ guarantee under valuations $v_i'$ can be directly translated into the ability of the chore-moving while-loop to compute allocations satisfying property (ii) of the $\EQXR$ guarantee under the original valuations $v_i$.}.

\begin{theorem}\label{thm1}
Given an allocation instance $I=(N,M=G\cup C,(v_i)_{i\in N})$ with objective valuations, the local-search algorithm returns an $\EQXR$ allocation in pseudo-polynomial time $O(V_{max}\cdot n\cdot m^2)$, with $V_{max}:=\max_{i\in N}\max\{v_i(G),|v_i(C)|\}$.
\end{theorem}
\begin{proof}[Sketch of the proof]
The full proof of the theorem is deferred to Appendix~\ref{app:objective}; here, we provide the main intuitions only. If $v_1(M)=0$, the allocation returned by the algorithm is trivially EQ, and then $\EQXR$. If $v_1(M)>0$ or $v_1(M)<0$, one can first show that the violation of the while-condition of each while-loop in the local-search algorithm is equivalent to finding an $\EQXR$ allocation (Lemma~\ref{lem3} of Appendix~\ref{app:objective}).   By this observation, to prove that the algorithm returns an $\EQXR$ allocation, it suffices to show that the executed while-loop (good-moving or chore-moving) terminates. To establish termination, we use a potential function argument. Specifically, we define two ad-hoc potential functions, one for each while-loop. Each potential function depends only on the current allocation, returns a triplet of values, and increases in every iteration (refer to Lemma \ref{lem4} and \ref{lem5} of Appendix~\ref{app:objective}), according to a lexicographic order $\succ$ of the triplet values (i.e., \( T_1 \succ T_2 \) if and only if the first differing component between triplets \( T_1 = (x_1, y_1, z_1) \) and \( T_2 = (x_2, y_2, z_2) \) has a higher value in \( T_1 \)). Since each potential function is necessarily bounded (e.g., by the maximal triplet achievable across all possible allocations), it follows that each while-loop must terminate. For the good-moving while-loop (i.e., case $v_1(M)>0$), we consider the potential function $\Phi$ that assigns, to each allocation $\mathcal{A}$, the triplet  $(x(\mathcal{A}),y(\mathcal{A}),z(\mathcal{A}))$, such that $x(\mathcal{A})$ is the minimum valuation in $\mathcal{A}$, $y(\mathcal{A})$ is minus the number of agents obtaining the minimum valuation, and  $z(\mathcal{A})$ is the number of items allocated to such agents. For the chore-moving while loop (i.e., case $v_1(M)<0$), we consider the potential function $\Xi$ that assigns, to each allocation $\mathcal{A}$, the triplet $(x'(\mathcal{A}),y'(\mathcal{A}),z'(\mathcal{A}))$, such that $x'(\mathcal{A})$ is minus the maximum valuation in $\mathcal{A}$, $y'(\mathcal{A})$ is minus the number of agents obtaining the maximum valuation, and $z'(\mathcal{A})$ is the number of items allocated to such agents.

To estimate the time complexity of the algorithm, we observe that it is equal to the time required for each iteration of the while-loops, multiplied by the total number of iterations. The while-condition can be checked in time $O(m+n)=O(m)$, as it requires to check all bundles, and all items for each bundle\footnote{$O(m+n)=O(m)$ holds since we assumed w.l.o.g. that $n=O(m)$ (by Remark \ref{rema_compl_agents} of Appendix \ref{sec:objective}).}.

To bound the number of iterations, we first observe that the potential function increases at each iteration. Consequently, the number of iterations is bounded by the maximum value of the potential functions \( \Phi \) and \( \Xi \), that is \( O(V_{\text{max}} \cdot n \cdot m) \) (as established in Lemma \ref{lem_thm1_comple} in Appendix~\ref{app:objective}).

Thus, the overall time complexity is \( O(V_{\text{max}} \cdot n \cdot m^2) \)
%
\end{proof}
%

By observing that, under monotone non-decreasing (resp. non-increasing) valuations, property (i) (resp. (ii)) of the $\EQXR$ guarantee is equivalent to the stronger EQX notion, we obtain the following corollary of Theorem~\ref{thm1}.
\begin{corollary}\label{thm1_cor1}
The local-search algorithm returns an EQX allocation in pseudo-polynomial time, under monotone (non-decreasing and non-increasing) valuations. 
\end{corollary}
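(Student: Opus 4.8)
The plan is to derive Corollary~\ref{thm1_cor1} directly from Theorem~\ref{thm1} by showing that, under the stated monotonicity hypotheses, the EQX* allocation returned by the local-search algorithm is in fact EQX. The crucial observation is structural: when all valuations are monotone non-decreasing, every item is a good for every agent w.r.t.\ every bundle, so the set of chores is empty and the chore-moving branch is never executed; dually, when all valuations are monotone non-increasing, every item is a chore and the good-moving branch is never executed. Thus only one of the two potential-function arguments of Theorem~\ref{thm1} is ever invoked, and the algorithm still terminates in pseudo-polynomial time.

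The next step is to compare the EQX* and EQX definitions under each monotonicity regime. Recall that EQX* asks, for each agent $i$, that at least one of two clauses holds: clause (i), a ``remove-a-good-from-the-richer-bundle'' condition, or clause (ii), a ``remove-a-chore-from-my-own-bundle'' condition. Under monotone non-decreasing valuations there are no chores, so clause (ii) is vacuous (indeed impossible to satisfy nontrivially, since no agent has a chore w.r.t.\ her bundle), hence EQX* forces clause (i) to hold for every agent; but clause (i) ranging over all goods — which is now all items — is precisely the EQX requirement in the ``remove-an-item-from-$A_j$'' direction, and since there are no chores, the ``remove-a-chore-from-$A_i$'' direction of EQX is vacuous too. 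Therefore EQX* and EQX coincide in this regime. The non-increasing case is symmetric with the roles of goods and chores, and hence of clauses (i) and (ii), interchanged: there clause (i) is vacuous, EQX* forces clause (ii) for every agent, and this is exactly EQX when all items are chores. This is the content of the sentence preceding the corollary in the excerpt, and I would spell it out carefully, being attentive to the degenerate ``both a good and a chore'' items (those $x$ with $v_i(S)=v_i(S\setminus\{x\})$), which are harmless because removing them changes no valuation.

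Finally, I would assemble the pieces: invoke Theorem~\ref{thm1} to get an EQX* allocation computed in time $O(V_{\max}\cdot n\cdot m^2)$, then apply the equivalence established above to upgrade ``EQX*'' to ``EQX'' in each of the two monotone cases, yielding the pseudo-polynomial bound claimed in the corollary. I do not anticipate a serious obstacle here; the one point demanding care is the precise handling of items that are simultaneously goods and chores (the weak inequalities in the definitions), so that ``for each such good $g$'' / ``for each such chore $c$'' quantifiers in the EQX* clauses really do sweep out exactly the EQX quantifiers when the item set is homogeneously goods or homogeneously chores. Checking that the quantifier ranges match up, and that the existence sub-clauses (``there exists a good / chore'') are automatically met or vacuous as appropriate, is the main thing to verify.
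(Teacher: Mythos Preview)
Your proposal is correct and follows essentially the same approach as the paper: invoke Theorem~\ref{thm1} to obtain an EQX* allocation in pseudo-polynomial time, then observe that under monotone non-decreasing (resp.\ non-increasing) valuations clause (i) (resp.\ clause (ii)) of the EQX* definition coincides with the full EQX condition. The paper states this equivalence in a single sentence before the corollary, while you spell it out in more detail (including the discussion of which branch of the algorithm runs and the handling of items that are simultaneously goods and chores), but the underlying argument is the same.
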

In the following, we present a {\em greedy algorithm} that returns in polynomial time an EQ1 allocation under objective valuations (for the pseudo-code, see Algorithm \ref{alg2} in Appendix~\ref{app:objective}). This algorithm can be seen as a  straightforward generalization of the greedy approach introduced by \citet{HS25}, which was specifically designed for the case of objective additive valuations.

Starting from the empty allocation, the algorithm first processes all goods, and repeatedly assigns each good (in an arbitrary order) to the least valuable bundle of the current iteration. Then, it processes all chores, and repeatedly assigns each chore to the most valuable bundle of the current iteration. We observe that the initial local-search algorithm clearly returns an EQ1 allocation but operates in pseudo-polynomial time, whereas the greedy algorithm typically runs faster, operating in polynomial time.
\begin{theorem}\label{thm2}
Given an allocation instance $I=(N,M=G\cup C,(v_i)_{i\in N})$ with objective valuations, the greedy algorithm returns an EQ1 allocation in polynomial time $O(m\log n)$.
\end{theorem}
\begin{proof}[Sketch of the proof]
It can be shown by induction that the partial allocation \( \mathcal{A}^t \) obtained after assigning the first \( t \) items is EQ1, and consequently, the final allocation \( \mathcal{A} = \mathcal{A}^t \) is also EQ1. Full details are deferred to Appendix~\ref{app:objective}.

Regarding the time complexity, the algorithm performs at most $O(m)$ item insertions. For each insertion, it identifies and updates either the least or the most valuable bundle among $n$, which can be done in $O(\log n)$ time using a heap-tree data structure. Therefore, the overall time complexity is $O(m \log n)$.
\end{proof}

Finally, we consider a variant of the above greedy algorithm, called {\em strongly-greedy algorithm}, that is specialized for objective valuations that are also additive (for the pseudo-code, refer to Algorithm \ref{alg3} in Appendix~\ref{app:objective}). 
This algorithm operates like the standard greedy algorithm, but in each iteration, when a good (or a chore) is assigned to an agent, it selects the one that maximizes (or minimizes) that agent's valuation. Under objective additive valuations, the strongly-greedy algorithm returns an $\EQXR$ allocation, ensuring stronger equity than the EQ1 guarantee provided by \citet{HS25} or the standard greedy algorithm (Algorithm \ref{alg2} in Appendix \ref{app:algorithms}).
\begin{theorem}\label{thm3}
Given an allocation instance $I=(N,M=G\cup C,(v_i)_{i\in N})$ with additive objective valuations, the strongly-greedy algorithm returns an $\EQXR$ allocation in polynomial time $O(mn\log m)$.
\end{theorem}
Theorem~\ref{thm3} can be shown in a manner similar to Theorem~\ref{thm2}, with a sketch of the proof provided in Appendix~\ref{app:objective}.

\newpage
\section{Missing Proofs from Appendix \ref{sec:objective}}\label{app:objective}
\subsection{Full Proof of Theorem \ref{thm1} - Correctness}
We first show the correctness of the algorithm. By relying on the two technical lemmas provided below (Lemmas~\ref{lem1} and \ref{lem2}), we will first show that the violation of the while-condition of each while-loop is equivalent to finding an $\EQXR$ allocation (Lemma~\ref{lem3}). 
\begin{lemma}\label{lem1}
Under objective valuations, for any $i\in N$ and  non-empty bundle $S\subseteq M$ such that $S$ contains goods (resp. chores) only, we have $v_i(S)\geq 0$ (resp. $v_i(S)\leq 0$). Equivalently, if $i$ and $S$ satisfy $v_i(S)>0$ (resp. $v_i(S)< 0$), then there exists at least one good (resp. one chore) in $S$.
\end{lemma}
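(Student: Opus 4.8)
The plan is to prove the first (inequality) statement directly by a telescoping argument, and then obtain the ``equivalently'' reformulation by contraposition, using the defining property of objective valuations that every item is either a good or a chore.

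First I would fix $i\in N$ and a non-empty bundle $S\subseteq M$ consisting of goods only, enumerate its elements in an arbitrary order as $S=\{x_1,x_2,\ldots,x_k\}$, and write the telescoping identity
\[
v_i(S)=\sum_{j=1}^{k}\Bigl(v_i(\{x_1,\ldots,x_j\})-v_i(\{x_1,\ldots,x_{j-1}\})\Bigr),
\]
which is valid because $v_i(\emptyset)=0$. For each index $j$, the item $x_j$ is a good, hence a good for $i$ with respect to the bundle $\{x_1,\ldots,x_j\}$, so by definition $v_i(\{x_1,\ldots,x_j\})\ge v_i(\{x_1,\ldots,x_j\}\setminus\{x_j\})=v_i(\{x_1,\ldots,x_{j-1}\})$. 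Thus every summand is non-negative and $v_i(S)\ge 0$. The chores-only case is entirely symmetric: each $x_j$ is a chore for $i$ w.r.t.\ $\{x_1,\ldots,x_j\}$, so every summand is non-positive and $v_i(S)\le 0$.

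For the ``equivalently'' part I would argue by contraposition. Suppose $v_i(S)>0$ but $S$ contains no good. Under objective valuations each item of $M$ is either a good or a chore, so any $x\in S$ that is not a good is a chore; hence $S$ would consist of chores only, and the first part would give $v_i(S)\le 0$, a contradiction. Symmetrically, if $v_i(S)<0$ then $S$ cannot consist of goods only, so it contains a chore.

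The main obstacle, such as it is, is purely bookkeeping: one must build the telescoping sum by inserting items one at a time in a fixed order, so that at each step the newly added item is compared against exactly the sub-bundle obtained so far, and one must use that ``good''/``chore'' in the hypothesis refers to the global notion (good, resp.\ chore, for every agent w.r.t.\ every bundle), which is precisely what makes each intermediate comparison legitimate. No deeper difficulty arises; the lemma is an immediate consequence of the definition of objective valuations combined with the telescoping trick.
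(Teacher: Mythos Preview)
Your proof is correct and follows essentially the same approach as the paper: a telescoping sum over an arbitrary enumeration of $S$, using the global good/chore property at each step, followed by contraposition for the second claim. If anything, your indexing is cleaner than the paper's.
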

The proof of Lemma \ref{lem1} is straightforward and directly follows from the definition of goods and chores. 

\begin{lemma}\label{lem2}
Assume that $v_1(M)> 0$ (resp. $v_1(M)< 0$), and let $\mathcal{A}^t=(A_1^t,\ldots, A_n^t)$ denote the allocation obtained after each iteration $t\geq 1$ of the good-moving (resp. chore-moving) while-loop, with $\mathcal{A}^0$ denoting the initial allocation. Then, for any $t\geq 0$, we have that $A_1^t$ contains at least one good (resp. chore) and each non-empty bundle $A_h^t$ with $h\neq 1$ contains only goods (resp. chores). 
\end{lemma}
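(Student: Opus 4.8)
The plan is to prove the statement by induction on $t$, carrying out the good-moving case ($v_1(M)>0$) in full and noting that the chore-moving case ($v_1(M)<0$) follows by an entirely symmetric argument, obtained by interchanging ``good'' with ``chore'', ``least valuable bundle'' with ``most valuable bundle'', and the conclusion ``$v_i(S)\geq 0$'' of Lemma~\ref{lem1} with ``$v_i(S)\leq 0$''. For the base case $t=0$, I would simply note that $A_1^0=M$ while $A_h^0=\emptyset$ for $h\neq 1$, so the condition on the bundles $A_h^0$ with $h\neq 1$ holds vacuously, and since $v_1(M)>0$, Lemma~\ref{lem1} provides a good in $M=A_1^0$.

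For the inductive step, assume the claim holds for $\mathcal{A}^t$ and suppose iteration $t+1$ moves a good $g\in A_j^t\cap G$ from $A_j^t$ to $A_i^t$, so that $A_i^{t+1}=A_i^t\cup\{g\}$, $A_j^{t+1}=A_j^t\setminus\{g\}$, and every other bundle is unchanged. First I would observe that $i\neq j$: otherwise the while-condition would read $v_i(A_i^t)<v_i(A_i^t\setminus\{g\})$, contradicting that $g\in G$ is a good for $i$ w.r.t.\ $A_i^t$. I would then split into the three cases $i=1$, $j=1$, and $1\notin\{i,j\}$. In every case the bundles not touched by the move keep satisfying the inductive hypothesis; a bundle $A_h^t$ with $h\neq 1$ receiving $g$ becomes a non-empty bundle of goods only (a good added to an all-goods or empty bundle); and a bundle $A_h^t$ with $h\neq 1$ losing $g$ remains a bundle of goods only. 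So the only point left to check is that $A_1^{t+1}$ still contains a good.

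The crux, and the step I expect to be the main obstacle, is the case $j=1$, where $A_1$ loses the good $g$ and one must rule out that $g$ was its last good. Here I would use that $i\neq j=1$, so by the inductive hypothesis $A_i^t$ is either empty or contains only goods; hence $v_i(A_i^t)\geq 0$ (by Lemma~\ref{lem1} if $A_i^t\neq\emptyset$, and by $v_i(\emptyset)=0$ otherwise). Combining this with the while-condition $v_i(A_i^t)<v_1(A_1^t\setminus\{g\})$ gives $v_1(A_1^{t+1})=v_1(A_1^t\setminus\{g\})>0$, and a second application of Lemma~\ref{lem1} yields a good in $A_1^{t+1}$. In the two remaining cases $A_1$ either gains $g$ or is untouched, so it still contains a good by the inductive hypothesis, completing the induction.

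For the chore-moving case the analogous crux is the case $i=1$, in which $A_1$ loses a chore $c$: since $j\neq i=1$, the inductive hypothesis makes $A_j^t$ empty or all-chores, so $v_j(A_j^t)\leq 0$, and the while-condition $v_1(A_1^t\setminus\{c\})<v_j(A_j^t)$ forces $v_1(A_1^{t+1})<0$, whence $A_1^{t+1}$ contains a chore by Lemma~\ref{lem1}; the other cases and bundles are handled exactly as above with ``chore'' in place of ``good''.
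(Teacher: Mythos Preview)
Your proof is correct and follows essentially the same approach as the paper's: induction on $t$, with the crux being the case where agent~$1$ loses an item and Lemma~\ref{lem1} is applied twice (first to get $v_i(A_i^t)\geq 0$ from the inductive hypothesis on $A_i^t$, then to extract a good from $A_1^{t+1}$ once $v_1(A_1^{t+1})>0$ is established via the while-condition). Your explicit observation that $i\neq j$ is a small clarifying addition that the paper leaves implicit.
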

\begin{proof}[Proof of Lemma~\ref{lem2}]
We first show the claim in the case $v_1(M)> 0$, and we proceed by induction on $t\geq 0$. 
We start with $t=0$ (base case). We have $v_1(A_1^0)=v_1(M)> 0$, since $A_1^0=M$ and we are in the case $v_1(M)> 0$. Thus, by Lemma \ref{lem1}, we have that $A_1^0$ contains at least one good. Since $A_h^0=\emptyset$ for any $h\neq 1$, the above property is sufficient to establish the claim for $t=0$. 

Now, assume that the claim holds for an arbitrary $t$ (inductive hypothesis), and let us show it for $t+1$ (inductive step). For all agents $h$ such that $A_h^t=A_h^{t+1}$ (i.e., whose bundle remains unchanged in iteration $t+1$) the claim trivially holds by the inductive hypothesis. Next, we focus on agents who either receive or lose one good in each iteration $t+1$. Let $h$ be any such an agent. If $h\neq 1$, by the inductive hypothesis, we have that $A_h^{t}$ contains goods only. So, if $h$ receives a good, then $A_h^{t+1}$ still contains good only; if $h$ loses a good, then $A_h^{t+1}$, if not empty, also contains goods only. Now, assume that $h=1$. By the inductive hypothesis, we have that $A_1^{t}$ contains at least one good. If agent $1$ receives a good in iteration $t+1$, we then have that $A_1^{t+1}$ contains at least one good as well. If agent $1$ loses a good in iteration $t+1$, by lines 4 and 5 of Algorithm \ref{alg1}, there exist an agent $i$ and a good $g\in A_1^t$ such that $v_1(A_1^t\setminus\{g\})>v_i(A_i^t)$. We derive $v_1(A_1^{t+1})=v_1(A_1^t\setminus\{g\})>v_i(A_i^t)\geq 0$, where the last inequality holds since $A_i^t$ contains goods only (by Lemma \ref{lem1} and the inductive hypothesis). Then, since $v_1(A_1^{t+1})> 0$, Lemma \ref{lem1} implies that $A_1^{t+1}$ contains at least one good. We conclude that the inductive step holds for any agent $h$, and the claim follows in the case $v_1(M)\geq 0$.

For the case $v_1(M)< 0$, the claim can be shown by using an argument symmetric to that of the previous case. Let $v_i'$ be the valuation obtained by multiplying the value of $v_i$  by minus one, that is, $v'_i(S)=-v_i(S)$ for any $i\in N$ and $S\subseteq M$. We observe that $v'_1(M)=-v_1(M)>0$.  Thus, under the valuations $v_i'$, the previous analysis implies that, for any $t \geq 0$, the bundle $A_1^t$ contains at least one good, while each non-empty bundle $A_h^t$ with $h \neq 1$ contains only goods. Returning to the original valuations $v_i$, this implies that, for any $t \geq 0$, $A_1^t$ contains at least one chore, while each non-empty bundle $A_h^t$ with $h \neq 1$ contains only chores.
\end{proof}
\begin{lemma}\label{lem3}
Assume that $v_1(M)>0$ (resp. $v_1(M)< 0$). If $\mathcal{A}$ is an allocation found by the good-moving (resp. chore-moving) while-loop that violates the corresponding while-condition, then $\mathcal{A}$ is $\EQXR$.
\end{lemma}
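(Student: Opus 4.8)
\emph{Plan.}
It suffices to treat the good-moving loop (the case $v_1(M)>0$); the chore-moving case follows by the symmetric argument, with goods/chores, minima/maxima, and conditions (i)/(ii) of EQX* interchanged. Let $\mathcal{A}=(A_1,\dots,A_n)$ be the allocation at the moment the good-moving loop stops and violates its while-condition, and let $i^*\in N$ be the value held by the loop variable $i$ at that point. The argument rests on three facts. First, by Lemma~\ref{lem2}, $A_1$ contains at least one good (so $A_1\cap G\neq\emptyset$) and every non-empty bundle $A_h$ with $h\neq 1$ consists of goods only. Second, $v_{i^*}(A_{i^*})=\min_{h\in N}v_h(A_h)$: this is ensured by line~6 after any iteration, and it also holds when the loop performs no iteration, since then $\mathcal{A}=(M,\emptyset,\dots,\emptyset)$, $i^*=n$, and $v_n(\emptyset)=0=\min_{h}v_h(A_h)$ because $v_1(M)>0$. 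Third, $v_h(A_h)\geq 0$ for every $h\in N$: for $h\neq 1$ this is Lemma~\ref{lem1} applied to the bundle $A_h$ of goods, and for $h=1$ it follows from $v_1(A_1)>0$, which is maintained throughout the loop (as observed inside the proof of Lemma~\ref{lem2}: $v_1(M)>0$ initially, receiving a good never decreases $v_1(A_1)$, and when agent~$1$ releases a good the resulting value exceeds $v_i(A_i)\geq 0$). Combining the last two facts, $\min_{h}v_h(A_h)=v_{i^*}(A_{i^*})\geq 0$.

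Negating the while-condition at termination gives the certificate: for every $j\in N$ and every good $g\in A_j\cap G$, $v_{i^*}(A_{i^*})\geq v_j(A_j\setminus\{g\})$. I would now prove that EQX* condition~(i) holds for every agent $k\in N$. Fix $k$ and any $j$ with $v_k(A_k)<v_j(A_j)$. From $v_j(A_j)>v_k(A_k)\geq 0=v_j(\emptyset)$ we get $A_j\neq\emptyset$, hence a good for $j$ w.r.t.\ $A_j$ exists: if $j\neq 1$, any item of $A_j$ works, since $A_j$ is made of goods; if $j=1$, any item of the non-empty set $A_1\cap G$ works, since a member of $G$ is a good for every agent and every bundle. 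Moreover every such good $g$ lies in $A_j\cap G$ --- immediate for $j\neq 1$, and, modulo a degenerate case addressed below, also for $j=1$ --- so the certificate and the min-identity yield $v_j(A_j\setminus\{g\})\leq v_{i^*}(A_{i^*})\leq v_k(A_k)$, which is exactly the inequality required by EQX* condition~(i). As $j$ was arbitrary, condition~(i) holds for $k$, and as $k$ was arbitrary, $\mathcal{A}$ is EQX*. The chore-moving case proceeds symmetrically: one uses the ``chore'' variant of Lemma~\ref{lem2}, notes that the loop variable $j$ satisfies $v_{j}(A_{j})=\max_h v_h(A_h)\leq 0$ at termination, and verifies EQX* condition~(ii) for every agent, with each chore to be removed lying in the relevant $A_i\cap C$.

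\emph{Main obstacle.} The delicate point is matching the algorithm's termination certificate --- which only controls removals of items of $G$ from bundles --- with the EQX* definition, which refers to an item being a good for $j$ w.r.t.\ $A_j$ in the local sense $v_j(A_j)\geq v_j(A_j\setminus\{g\})$. For $j\neq 1$ there is nothing to check, since $A_j\subseteq G$; but $A_1$ may retain chores, and an item $g\in A_1\cap C$ with $v_1(A_1)=v_1(A_1\setminus\{g\})$ is simultaneously a good and a chore for $1$ w.r.t.\ $A_1$, a case the certificate says nothing about. Handling this degenerate coincidence --- e.g.\ by reading ``good'' as ``element of $G$'' in the objective model once $G$ and $C$ are fixed, or by a dedicated check that such an item cannot coexist with a termination state in which condition~(i) would fail for some agent --- is where the proof needs the most care; the asymmetric status of agent~$1$ among all agents is the source of this subtlety, while everything else follows mechanically from Lemmas~\ref{lem1} and~\ref{lem2} together with the negated while-condition.
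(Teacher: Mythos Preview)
Your approach is the paper's: use Lemma~\ref{lem2} to guarantee each non-empty bundle contains a good, note that the loop variable $i$ holds a min-value agent at termination, and derive EQX* condition~(i) from the negated while-condition (symmetrically, condition~(ii) for the chore loop). The paper is simply terser---it asserts that once every non-empty bundle carries a good, ``the remaining part of property~(i) of the EQX* guarantee is ensured by the violation of the while-condition,'' without examining the mismatch you flag between the algorithm's quantifier $g\in A_j\cap G$ and the definition's local goods for $j$ w.r.t.\ $A_j$.

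You are right to flag this, and under the paper's literal definitions it is a real gap, not a technicality. Take $n=2$, $G=\{g\}$, $C=\{c\}$, $v_1(\{g,c\})=v_1(\{g\})=2$, $v_1(\{c\})=v_1(\emptyset)=0$, and any $v_2$ making $c$ strictly a chore (e.g.\ $v_2(\{g,c\})=1$, $v_2(\{g\})=2$, $v_2(\{c\})=-1$). The valuations are objective and $c$ is not dummy; the good-moving loop halts immediately at $(\{g,c\},\emptyset)$; yet $c$ is a local good for agent~$1$ w.r.t.\ $\{g,c\}$ and $v_2(\emptyset)=0<2=v_1(\{g\})=v_1(A_1\setminus\{c\})$, so condition~(i) fails for agent~$2$ while condition~(ii) fails trivially since $A_2=\emptyset$. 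Hence your second proposed fix---arguing that such a coincidence cannot arise at termination---cannot work. Your first fix is the intended one and is what the paper adopts implicitly: in the objective model with the partition $M=G\cup C$ fixed, ``good for $j$ w.r.t.\ $A_j$'' in the EQX/EQX* clauses is read as ``element of $A_j\cap G$.'' Under that reading your argument (and the paper's) is complete; the remaining pieces of your write-up, including the checks on $i^*$, the sign of $v_1(A_1)$, and the zero-iteration corner case, are correct and more explicit than the paper's version.
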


\begin{proof}[Proof of Lemma~\ref{lem3}]
We first assume that $v_1(M)> 0$. We will show that, if the while-condition is violated, property (i) of the $\EQXR$ guarantee is satisfied for any $i\in N$ (that is, the inequity an agent \(i\) feels toward the bundle assigned to another agent \(j\) can be eliminated by removing any good from \(j\)'s bundle). 
To establish this, it suffices to show that every non-empty bundle in \(\mathcal{A} = (A_1, \ldots, A_n)\) contains at least one good, as the remaining part of property (i) of the $\EQXR$ guarantee is ensured by the violation of the while-condition. This fact is immediately guaranteed by Lemma~\ref{lem2}, and thus the claim follows.

To show the claim when \(v_1(M) < 0\), we use an argument symmetric to the previous case. Specifically, we will show that, if the while-condition is violated, property (ii) of the $\EQXR$ guarantee is satisfied for any \(i \in N\) (that is, the inequity an agent \(i\) feels toward the bundle assigned to another agent \(j\) can be eliminated by removing any chore from \(i\)'s bundle). To establish this, it suffices to show that every non-empty bundle in \(\mathcal{A} = (A_1, \ldots, A_n)\) contains at least one chore, as the remaining part of property (ii) of the $\EQXR$ guarantee is ensured by the violation of the while-condition. Again, this fact is guaranteed by Lemma~\ref{lem2}, thus proving the claim.
\end{proof}

We now move to the main part of the proof. If $v_1(M)=0$, the allocation returned by the algorithm assigns all items to the first agent. Such an allocation is trivially EQ, and therefore $\EQXR$, as each agent receives zero value. Thus, in the remainder of the proof we focus on the cases $v_1(M)>0$ and $v_1(M)<0$. By Lemma~\ref{lem3}, to prove that Algorithm~\ref{alg1} returns an $\EQXR$ allocation in these cases, it suffices to show that the executed while-loop (good-moving or chore-moving) terminates. To establish termination, we use a potential function argument. Specifically, we define two ad-hoc potential functions, one for each while-loop. Each potential function depends only on the current allocation, returns a triplet of values, and increases in every iteration, following a lexicographic order of the triplet values. Since each potential function is necessarily bounded (e.g., by the maximal triplet achievable across all possible allocations), it follows that the while-loop must terminate. We treat separately the cases $v_1(M)> 0$ and $v_1(M)< 0$, each corresponding to a distinct potential function, though the proof arguments are analogous. 

\paragraph{Case $v_1(M)> 0$: }
We first consider the case $v_1(M)> 0$.  Let $\Phi$ be the potential function that assigns, to each allocation $\mathcal{A}$, the triplet  $(x(\mathcal{A}),y(\mathcal{A}),z(\mathcal{A}))$, where $x(\mathcal{A})=\min_{i\in N}v_i(A_i)$, $y(\mathcal{A})=-|\{i\in N:v_i(A_i)=x(\mathcal{A})\}|$ and $z(\mathcal{A})=\sum_{i\in N:v_i(A_i)=x(\mathcal{A})}|A_i|$. Specifically,  $x(\mathcal{A})$ is the minimum valuation in $\mathcal{A}$, $-y(\mathcal{A})$ is the number of agents obtaining the minimum valuation, and  $z(\mathcal{A})$ is the number of items allocated to such agents. We consider the total order over triplets such that $(x_1,y_1,z_1)\succ (x_2,y_2,z_2)$ iff $(x_1,y_1,z_1)$ is {\em lexicographically higher} than $(x_2,y_2,z_2)$, i.e., if $x_1>x_2 \vee (x_1=x_2\wedge y_1>y_2)\vee (x_1=x_2\wedge y_1=y_2\wedge z_1>z_2)$. Assume by contradiction that the good-moving while-loop never terminates. Let $\mathcal{A}^t$ denote the allocation computed after each iteration $t\geq 0$, where $\mathcal{A}^{0}$ denotes the initial allocation, and $t=0$ denotes the state prior to the first iteration. We have the following lemma:
\begin{lemma}\label{lem4}
Assume that $v_1(M)>0$ and let $t\geq 0$ be an iteration of the good-moving while-loop such that the allocation $\mathcal{A}_t$ obtained at the end of the iteration satisfies again the while-condition. Then, we have $\Phi(\mathcal{A}^{t+1})\succ\Phi(\mathcal{A}^{t})$. 
\end{lemma}
\begin{proof}[Proof of Lemma \ref{lem4}]
Let $t\geq 0$ be the iteration satisfying the hypothesis of the lemma. Then, there exist two agents $i,j\in N$ and a good $g\in A_j$ such that $i$ has the least valuable bundle in $\mathcal{A}^t$, $v_j(A_j^t\setminus \{g\})>v_i(A_i^t)$, $\mathcal{A}^{t+1}$ is obtained from $\mathcal{A}^{t}$ by moving good $g$ from the bundle of $j$ to the bundle of $i$, and the other bundles do not vary. One of the following three cases can occur:
\begin{itemize}
\item (i) $v_i(A_i^{t+1})=v_i(A_i^t)$: By $v_j(A_j^{t+1})=v_j(A_j^t\setminus \{g\})>v_i(A_i^t)=v_i(A_i^{t+1})$, the bundle of $i$ (resp. $j$) is (resp. is not) the least valuable bundle in both $\mathcal{A}^t$ and $\mathcal{A}^{t+1}$. 
Thus, the value of the least valuable bundle does not vary, i.e., \(x(\mathcal{A}^{t+1}) = x(\mathcal{A}^{t})\), the number of least valuable bundles remains the same, i.e., \(-y(\mathcal{A}^{t+1}) = -y(\mathcal{A}^{t})\), but the overall cardinality of the least valuable bundles increases by one, i.e., \(z(\mathcal{A}^{t+1}) = z(\mathcal{A}^{t})+1>z(\mathcal{A}^{t})\) (since \(|A_i^{t+1}| = 1 + |A_i^{t}|\), and the other eventual least valuable bundles do not change). We conclude that $\Phi(\mathcal{A}^{t+1})\succ\Phi(\mathcal{A}^{t})$ holds in case (i). 
\item (ii) $v_i(A_i^{t+1})>v_i(A_i^t)$ and $-y(\mathcal{A}^t)>1$: Given an arbitrary agent $h\neq i$ possessing one of the least valuable bundles in $\mathcal{A}^t$, we have   $v_i(A_i^t)=v_h(A_h^t)=v_h(A_h^{t+1})$, where the last equality holds since the bundle of $h$ does not vary in iteration $t+1$; furthermore, such an agent $h\neq i$ exists, since the number of least valuable bundles in $\mathcal{A}^t$ is $-y(\mathcal{A})>1$. Then, by $v_j(A_j^{t+1})=v_j(A_j^t\setminus \{g\})>v_i(A_i^t)=v_h(A_h^{t+1})$ and $v_i(A_i^{t+1})>v_i(A_i^t)=v_h(A_h^{t+1})$, the following facts hold: the bundle of $j$ is not the least valuable bundle in both $\mathcal{A}^t$ and $\mathcal{A}^{t+1}$; the bundle of $i$ is the least valuable bundle in $\mathcal{A}^t$ but not in $\mathcal{A}^{t+1}$; the bundle of $h$ is the least valuable bundle in both $\mathcal{A}^t$ and $\mathcal{A}^{t+1}$. By the arbitrariness of the choice of  agent $h\neq i$ possessing one of the least valuable bundles in $\mathcal{A}^t$, we conclude that the minimum value $x(\mathcal{A}^{t+1})$ among all bundles in $\mathcal{A}^{t+1}$ continues to be equal to $x(\mathcal{A}^t)=v_i(A_i^t)$, and is achieved by all agents $h$ obtaining such value in $\mathcal{A}^t$ but $i$, i.e., $-y(\mathcal{A}^{t+1})=-y(\mathcal{A}^t)-1\geq 1$. We conclude that $x(\mathcal{A}^{t+1})=x(\mathcal{A}^{t})$ and $y(\mathcal{A}^{t+1})=y(\mathcal{A}^t)+1>y(\mathcal{A}^t)$, that is, $\Phi(\mathcal{A}^{t+1})\succ\Phi(\mathcal{A}^{t})$ holds in case (ii). 
\item (iii) $v_i(A_i^{t+1})>v_i(A_i^t)$ and $-y(\mathcal{A}^t)=1$: Since the number of least valuable bundles in $\mathcal{A}^t$ is $-y(\mathcal{A}^t)=1$, we have that $A_i^t$ is the unique bundle minimizing the valuation in $\mathcal{A}^t$. Then, by $v_i(A_i^{t+1})>v_i(A_i^t)$ and $v_j(A_j^{t+1})=v_j(A_j^t\setminus \{g\})>v_i(A_i^t)$, we necessarily have that that the minimum valuation $x(\mathcal{A}^{t+1})$ in $\mathcal{A}^{t+1}$ is strictly higher than the minimum valuation $x(\mathcal{A}^t)=v_i(A_i^t)$ in $\mathcal{A}^t$, that is, $\Phi(\mathcal{A}^{t+1})\succ\Phi(\mathcal{A}^{t})$ holds in case (iii). 
\end{itemize}
In any case, we have $\Phi(\mathcal{A}^{t+1})\succ\Phi(\mathcal{A}^{t})$, thus showing the claim.
\end{proof}
Under the hypothesis of non-termination of the good-moving while-loop, there exists at least one allocation that is visited twice. In other words, there are at least two iterations $t,t'\geq 0$ with $t<t'$ such that $\mathcal{A}^t=\mathcal{A}^{t'}$, and hence $\Phi(\mathcal{A}^{t})=\Phi(\mathcal{A}^{t'})$. Thus, by this observation and Lemma~\ref{lem4}, we have $\Phi(\mathcal{A}^{t})=\Phi(\mathcal{A}^{t'})\succ\Phi(\mathcal{A}^{t'-1})\succ\ldots \succ\Phi(\mathcal{A}^{t+1})\succ \Phi(\mathcal{A}^{t})$, that leads to the contradiction $\Phi(\mathcal{A}^{t})\succ \Phi(\mathcal{A}^{t})$. We conclude that the good-moving while-loop necessarily terminates, and then, by Lemma~\ref{lem3}, it returns an $\EQXR$ allocation if $v_1(M)>0$. 

\paragraph{Case $v_1(M)< 0$: } Now, let us consider the case $v_1(M)< 0$. Let $\Xi$ be the potential function that assigns, to each allocation $\mathcal{A}$, the triplet $(x'(\mathcal{A}),y'(\mathcal{A}),z'(\mathcal{A}))$, where $x'(\mathcal{A})=-\max_{i\in N}v_i(A_i)$, $y'(\mathcal{A})=-|\{i\in N:v_i(A_i)=-x'(\mathcal{A})\}|$ and $z'(\mathcal{A})=\sum_{i\in N:v_i(A_i)=-x'(\mathcal{A})}|A_i|$. Specifically,  $-x'(\mathcal{A})$ is the maximum valuation in $\mathcal{A}$, $-y'(\mathcal{A})$ is the number of agents obtaining the maximum valuation, and $z'(\mathcal{A})$ is the number of items allocated to such agents.  Let $\succ$ be the lexicographic order among triplets defined above. Assume by contradiction that the chore-moving while-loop never terminates. Let $\mathcal{A}^t$ denote the allocation computed after each iteration $t\geq 0$, where $\mathcal{A}^{0}$ denotes the initial allocation, and $t=0$ denotes the state prior to the first iteration. We have the following lemma (the proof has an argument symmetric to that of Lemma~\ref{lem4}):
\begin{lemma}\label{lem5}
Assume that $v_1(M)<0$ and let $t\geq 0$ be an iteration of the chore-moving while-loop such that the allocation $\mathcal{A}_t$ obtained at the end of the iteration satisfies again the while-condition. Then, we have $\Xi(\mathcal{A}^{t+1})\succ\Xi(\mathcal{A}^{t})$. 
\end{lemma}
\begin{proof}[Proof of Lemma \ref{lem5}]
Let $t\geq 0$ be the iteration satisfying the hypothesis of the lemma. Then, there exist two agents $i,j\in N$ and a chore $c\in A_i$ such that $j$ has the most valuable bundle in $\mathcal{A}^t$, $v_i(A_i^t\setminus \{c\})<v_j(A_j^t)$, $\mathcal{A}^{t+1}$ is obtained from $\mathcal{A}^{t}$ by moving chore $c$ from the bundle of $i$ to the bundle of $j$, and the other bundles do not vary. One of the following three cases can occur: 
\begin{itemize}
\item (i) $v_j(A_j^{t+1})=v_j(A_j^t)$: By $v_i(A_i^{t+1})=v_i(A_i^t\setminus \{c\})<v_j(A_j^t)=v_j(A_j^{t+1})$, the bundle of $j$ (resp. $i$) is (resp. is not) the most valuable bundle in both $\mathcal{A}^t$ and $\mathcal{A}^{t+1}$. 
Thus, the value of the most valuable bundle does not vary, i.e., \(x'(\mathcal{A}^{t+1}) = x'(\mathcal{A}^{t})\), the number of most valuable bundles remains the same, i.e., \(-y'(\mathcal{A}^{t+1}) = -y'(\mathcal{A}^{t})\), but the overall cardinality of the most valuable bundles increases by one, i.e., \(z'(\mathcal{A}^{t+1}) = z'(\mathcal{A}^{t})+1>z'(\mathcal{A}^{t})\) (since \(|A_j^{t+1}| = |A_j^{t}|+1\), and the other eventual most valuable bundles do not vary). We conclude that $\Xi(\mathcal{A}^{t+1})\succ\Xi(\mathcal{A}^{t})$ holds in case (i). 
\item (ii) $v_j(A_j^{t+1})<v_j(A_j^t)$ and $-y'(\mathcal{A}^t)>1$: Given an arbitrary agent $h\neq j$ possessing one of the most valuable bundles in $\mathcal{A}^t$, we have   $v_j(A_j^t)=v_h(A_h^t)=v_h(A_h^{t+1})$, where the last equality holds since the bundle of $h$ does not vary in iteration $t+1$; furthermore, such an agent $h\neq j$ exists, since the number of most valuable bundles in $\mathcal{A}^t$ is $-y'(\mathcal{A})>1$. Then, by $v_i(A_i^{t+1})=v_i(A_i^t\setminus \{c\})<v_j(A_j^t)=v_h(A_h^{t+1})$ and $v_j(A_j^{t+1})<v_j(A_j^t)=v_h(A_h^{t+1})$, the following facts hold: the bundle of $i$ is not the most valuable bundle in both $\mathcal{A}^t$ and $\mathcal{A}^{t+1}$; the bundle of $j$ is the most valuable bundle in $\mathcal{A}^t$ but not in $\mathcal{A}^{t+1}$; the bundle of $h$ is the most valuable bundle in both $\mathcal{A}^t$ and $\mathcal{A}^{t+1}$. By the arbitrariness of the choice of  agent $h\neq j$ possessing one of the most valuable bundles in $\mathcal{A}^t$, we conclude that the maximum value $-x'(\mathcal{A}^{t+1})$ among all bundles in $\mathcal{A}^{t+1}$ continues to be equal to $-x'(\mathcal{A}^t)=v_j(A_j^t)$, and is achieved by all agents $h$ obtaining such value in $\mathcal{A}^t$ but $j$, i.e., $-y'(\mathcal{A}^{t+1})=-y'(\mathcal{A}^t)-1\geq 1$. We conclude that $x'(\mathcal{A}^{t+1})=x'(\mathcal{A}^{t})$ and $y'(\mathcal{A}^{t+1})=y'(\mathcal{A}^t)+1>y'(\mathcal{A}^t)$, that is, $\Xi(\mathcal{A}^{t+1})\succ\Xi(\mathcal{A}^{t})$ holds in case (ii). 
\item (iii) $v_j(A_j^{t+1})<v_j(A_j^t)$ and $-y'(\mathcal{A}^t)=1$: Since the number of most valuable bundles in $\mathcal{A}^t$ is $-y'(\mathcal{A}^t)=1$, we have that $A_j^t$ is the unique bundle maximizing the valuation in $\mathcal{A}^t$. Then, by $v_j(A_j^{t+1})<v_j(A_j^t)$ and $v_i(A_i^{t+1})=v_i(A_i^t\setminus \{c\})<v_j(A_j^t)$, we necessarily have that that the maximum valuation $-x'(\mathcal{A}^{t+1})$ in $\mathcal{A}^{t+1}$ is strictly lower than the maximum valuation $-x'(\mathcal{A}^t)=v_j(A_j^t)$ in $\mathcal{A}^t$, that is, $\Xi(\mathcal{A}^{t+1})\succ\Xi(\mathcal{A}^{t})$ holds in case (iii). 
\end{itemize}
In any case, we have $\Xi(\mathcal{A}^{t+1})\succ\Xi(\mathcal{A}^{t})$, thus showing the claim.
\end{proof}
Under the hypothesis of non-termination of the chore-moving while-loop, there are two iterations $t,t'\geq 0$ with $t<t'$ such that $\mathcal{A}^t=\mathcal{A}^{t'}$, that is, $\Xi(\mathcal{A}^{t})=\Xi(\mathcal{A}^{t'})$. Thus, as in the previous case, we have $\Xi(\mathcal{A}^{t})=\Xi(\mathcal{A}^{t'})\succ\Xi(\mathcal{A}^{t'-1})\succ\ldots \succ\Xi(\mathcal{A}^{t+1})\succ \Xi(\mathcal{A}^{t})$, that leads to the contradiction $\Xi(\mathcal{A}^{t})\succ \Xi(\mathcal{A}^{t})$. We conclude that the chore-moving while-loop necessarily terminates, and then, by Lemma~\ref{lem3}, it returns an $\EQXR$ allocation in case $v_1(M)< 0$, too. 

\subsection{Full Proof of Theorem \ref{thm1} - Complexity}
The complexity of Algorithm \ref{alg1} is proportional to the maximum number of iterations performed by the while-loops, multiplied by the number of steps required to check the while-condition. 

The while-condition can be checked in time $O(m+n)=O(m)$, as it requires to check all bundles, and all items for each bundle. 

The number of iterations of the while-loops is bounded by the maximum number of increases that the potential functions in Lemmas \ref{lem4} and \ref{lem5} can undergo, which is $O(V_{max} \cdot n \cdot m)$, by the following lemma:
\begin{lemma}\label{lem_thm1_comple}
The number of iterations of each while-loop (good-moving and chore-moving) is at most $O(V_{max} \cdot n \cdot m)$.
\end{lemma}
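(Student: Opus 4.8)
The plan is to bound, for each of the two potential functions $\Phi$ and $\Xi$, the size of the finite set of triplets it can take, and then combine this with the strict monotonicity already proved in Lemmas~\ref{lem4} and~\ref{lem5}. The proofs of those two lemmas actually show something slightly stronger than their statements: whenever the corresponding while-condition holds --- that is, whenever the loop performs an iteration --- the potential strictly increases in the lexicographic order $\succ$. Hence, if a loop performs $T$ iterations producing allocations $\mathcal{A}^0,\mathcal{A}^1,\ldots,\mathcal{A}^T$, the values $\Phi(\mathcal{A}^0)\prec\Phi(\mathcal{A}^1)\prec\cdots\prec\Phi(\mathcal{A}^T)$ are $T+1$ pairwise distinct triplets, so $T$ is at most the number of attainable triplets; the same reasoning applies verbatim to $\Xi$.

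The key preliminary step is a uniform bound on valuations under objective valuations: $-V_{max}\le v_i(S)\le V_{max}$ for every $i\in N$ and $S\subseteq M$. For the upper bound, removing a chore never decreases an agent's value and adding a good never decreases it, so $v_i(S)\le v_i(S\setminus(S\cap C))=v_i(S\cap G)\le v_i(G)\le V_{max}$; symmetrically, removing a good never increases value and adding a chore never increases it, so $v_i(S)\ge v_i(S\cap C)\ge v_i(C)\ge -V_{max}$. These are precisely the monotonicity-in-goods/chores facts already exploited in Lemma~\ref{lem1}, specialized to obtain absolute bounds from the definition of $V_{max}$.

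From this bound I read off the ranges of the three coordinates of $\Phi(\mathcal{A})=(x(\mathcal{A}),y(\mathcal{A}),z(\mathcal{A}))$. The quantity $x(\mathcal{A})=\min_{i}v_i(A_i)$ is an integer in $[-V_{max},V_{max}]$, hence takes at most $2V_{max}+1$ values; $y(\mathcal{A})=-|\{i:v_i(A_i)=x(\mathcal{A})\}|$ is an integer in $[-n,-1]$, hence takes at most $n$ values; and $z(\mathcal{A})=\sum_{i:v_i(A_i)=x(\mathcal{A})}|A_i|$ is an integer in $[0,m]$, hence takes at most $m+1$ values. Therefore $\Phi$ attains at most $(2V_{max}+1)\cdot n\cdot(m+1)=O(V_{max}\cdot n\cdot m)$ distinct triplets, so the good-moving loop runs for $O(V_{max}\cdot n\cdot m)$ iterations. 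For $\Xi(\mathcal{A})=(x'(\mathcal{A}),y'(\mathcal{A}),z'(\mathcal{A}))$ the coordinates satisfy the identical bounds --- $-x'(\mathcal{A})=\max_{i}v_i(A_i)\in[-V_{max},V_{max}]$, $y'(\mathcal{A})\in[-n,-1]$, $z'(\mathcal{A})\in[0,m]$ --- so the chore-moving loop likewise runs for $O(V_{max}\cdot n\cdot m)$ iterations, which is the claim.

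The only non-routine point is the uniform valuation bound $|v_i(S)|\le V_{max}$: the argument must deduce it from the specific definition $V_{max}=\max_i\max\{v_i(G),|v_i(C)|\}$ using objectivity, rather than from a crude sum-of-singletons bound that would be valid only for additive valuations. Everything after that is a straightforward counting argument over the finite coordinate ranges, so I do not expect a genuine obstacle.
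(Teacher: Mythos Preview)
Your proof is correct and follows essentially the same approach as the paper: bound the number of distinct triplets the potential can take and combine with strict lexicographic increase from Lemmas~\ref{lem4} and~\ref{lem5}. The only minor difference is how the first coordinate is bounded: the paper uses the algorithm-specific invariants of Lemma~\ref{lem2} to argue that $x(\mathcal{A})\in[0,\max_i v_i(G)]$ during the good-moving loop (and symmetrically for $\Xi$), whereas you prove the universal bound $|v_i(S)|\le V_{max}$ for all bundles and get $x(\mathcal{A})\in[-V_{max},V_{max}]$; both yield the same $O(V_{max}\cdot n\cdot m)$ count, and your route has the virtue of not depending on the algorithm's trajectory.
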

\begin{proof}[Proof of Lemma \ref{lem_thm1_comple}]
by Lemma \ref{lem4},  the potential function $\Phi$ lexicographically increases after each iteration. Thus, as $\Phi$ returns a triplet in the set $[0,\max_{i\in N}v_i(G)]\times \{-n,\ldots, -1\}\times [m]$, we have that after at most $\max_{i\in N}v_i(G)\cdot n\cdot m\leq V_{max}\cdot n\cdot m$ iterations the good-moving while loop necessarily terminates. Regarding the chore-moving while-loop, Lemma~\ref{lem5} implies that the potential function $\Xi$ lexicographically increases after each iteration. Again, as $\Xi$ returns a triplet in the set $[0,\max_{i\in N}|v_i(C)|]\times \{-n,\ldots, -1\}\times [m]$, we have that after at most $\max_{i\in N}|v_i(C)|\cdot n\cdot m\leq V_{max}\cdot n\cdot m$ iterations the chore-moving while loop necessarily terminates. 
\end{proof}
 We conclude that the complexity of Algorithm \ref{alg1} is $O(V_{max}\cdot n\cdot m^2)$. 
\subsection{Full Proof of Theorem \ref{thm2}}
We first focus on the correctness of the algorithm. Let $\mathcal{A}^t$ denote the partial allocation obtained after assigning $t$ items. By exploiting the greedy choice of the algorithm, it can be shown by induction on $t \geq 0$ that each partial allocation $\mathcal{A}^t$ is EQ1.

For $t=0$ the claim trivially holds. Assume that the claim holds for iteration $t$, and let us show it for $t+1$ (inductive step). To prove the EQ1 guarantee, it is sufficient showing that, given $i,j\in N$ such that $v_i(A_i^{t+1})<v_j(A_j^{t+1})$, $v_i(A_i^{t+1}\setminus\{x\})\geq v_j(A_j^{t+1}\setminus \{x\})$ holds for some item $x\in A_i\cup A_j$. Let $i,j$ be two distinct agents such that $v_i(A_i^{t+1})<v_j(A_j^{t+1})$. If the item assigned at step $t+1$ is a good, denoted by $g$, it is assigned to the agent $h$ having the lowest valuation in $\mathcal{A}^t$. If $h=j$, we have that $v_i(A_i^{t+1})=v_i(A_i^{t})\geq v_h(A_h^{t})=v_j(A_j^{t})=v_j(A_j^{t+1}\setminus \{g\})$, where the inequality holds by the minimality of $v_h(A_h^t)$. If $h\neq j$, the bundle assigned to $j$ remains unchanged after iteration $t+1$ and the bundle assigned to $i$ either remains unchanged (case $h\notin \{i,j\}$) or receives an additional good $g$ (case $h=i$). By the inductive hypothesis, there exists $x\in A_i^t\cup A_j^t$ such that  $v_i(A_i^{t}\setminus\{x\})\geq v_j(A_j^{t}\setminus \{x\})$. Thus, by the previous observations, we have $v_i(A_i^{t+1}\setminus{x}) \geq v_i(A_i^{t+1}\setminus\{x,g\})=v_i(A_i^{t}\setminus\{x\}) \geq v_j(A_j^{t}\setminus {x})=v_j(A_j^{t+1}\setminus {x})$, and therefore the EQ1 guarantee continues to hold in $\mathcal{A}^{t+1}$.

Now, assume that the item assigned in iteration $t+1$ is a chore, denoted by $c$. In such a case, $c$ is assigned to the agent $h$ having the highest value in the current iteration. If $h=i$, we have that $v_i(A_i^{t+1}\setminus \{c\})=v_i(A_i^{t})=v_h(A_h^{t})\geq v_j(A_j^t)=v_j(A_j^{t+1})$, where the inequality holds by the maximality of $v_h(A_h^t)$. If $h\neq i$, the bundle assigned to $i$ remains unchanged after iteration $t+1$ and the bundle assigned to $j$ either remains unchanged (case $h\notin \{i,j\}$) or receives an additional chore $c$ (case $h=j$). By the inductive hypothesis, there exists $x\in A_i^t\cup A_j^t$ such that  $v_i(A_i^{t}\setminus\{x\})\geq v_j(A_j^{t}\setminus \{x\})$. Thus, by the previous observations, we have $v_i(A_i^{t+1}\setminus\{x\})=v_i(A_i^{t}\setminus\{x\})\geq v_j(A_j^{t}\setminus \{x\})=v_j(A_j^{t+1}\setminus \{x,c\})\geq v_j(A_j^{t+1}\setminus \{x\})$.

In any case, $\mathcal{A}^{t+1}$ is EQ1, and this shows the inductive step. 
We conclude that each allocation $\mathcal{A}^t$, and a fortiori the final allocation $\mathcal{A} = \mathcal{A}^m$ obtained after all items have been assigned, satisfies EQ1.

Regarding the time complexity, we observe that the algorithm performs at most $O(m)$ item insertions. During each insertion, it identifies and updates either the least or the most valuable bundle among $n$. This operation can be executed in $O(\log n)$ time using a heap-tree data structure, that can be preliminary constructed in time $O(n)$. Thus, the overall running time is $O(m\log n+n)=O(m\log n)$\footnote{$O(m\log n+n)=O(m\log n)$ holds since we assumed w.l.o.g. that $m\geq n-1$.}.

\subsection{Proof of Theorem \ref{thm3}}
We first show the correctness of the algorithm. Let $\mathcal{A}^t$ denote the partial allocation obtained after the assignment of $t$ items. Mimicking the proof of Theorem \ref{thm2}, one can easily show by induction on \( t \geq 0 \) that \(\mathcal{A}^t\) satisfies the following approximate equity guarantee: for any \( i, j \in N \) such that \( v_i(A_i^t) < v_j(A_j^t) \), we have \( v_i(A_i^t\setminus \{x\}) \geq v_j(A_j^t \setminus \{x\}) \), where $x$ is the last chore assigned to $A_i^t$ if $A_i^t$ contains at least one chore, and $x$ is the last good assigned to $A_j^t$ otherwise. By the greedy selection of each agent and the imposed ordering of items, if bundle \( A_j^t \) (resp. \( A_i^t \)) contains at least one good (resp. chore), the last good (resp. chore) assigned to the bundle is the least (resp. most) valuable good (resp. chore) within the bundle. This fact, combined with above approximate equity guarantee, implies the following: for any \( i, j \in N \) such that \( v_i(A_i^t) < v_j(A_j^t) \), we have \( v_i(A_i^t\setminus \{x\}) \geq v_j(A_j^t) \) for any chore $x\in A_i^t$ if $A_i^t$ contains at least one chore, and \( v_i(A_i^t) \geq v_j(A_j^t \setminus \{x\}) \) for any good $x\in A_j^t$ otherwise. This final approximate equity guarantee ensures that every allocation $\mathcal{A}^t$ is $\EQXR$, and, in particular, that the final allocation $\mathcal{A} := \mathcal{A}^m$ also satisfies the $\EQXR$ guarantee.

Regarding the complexity, the algorithm can be implemented by creating two heap-tree data structures that keep track of the agent with the minimum and maximum valuation (this can be done in time \( O(n) \)), and, for each agent \( i \in [n] \), two heap-tree data structures that keep track of the best good and the worst chore (this can be done in time \( O(mn) \)). Excluding the above initialization, the execution time of the algorithm is proportional to the number of iterations, i.e., \( m \), multiplied by the complexity of each iteration. To execute each iteration, the algorithm must find the agent \( i \) having the minimum or maximum valuation, and determine the best available good or the worst available chore for \( i \). Using the aforementioned heap-trees, these operations can be executed in time \( O(\log n + n \log m) \) per iteration: \( O(\log n) \) to find the agent maximizing or minimizing the valuation, \( O(\log m) \) to extract the best good or worst chore and \( O(n\log m) \) to update the $n$ heap-trees associated with the agents (in particular, to ensure that items already extracted in other heaps are treated as unusable in each heap). Therefore, the overall running time is 
$
O(n + mn + m(\log n + n \log m)) = O(mn \log m).
$

\newpage
\section{Missing Figures}\label{app:pictures}
\subsection{Figure \ref{fig:1}}
\begin{figure}[h!]
\centering
\begin{tikzpicture}[scale=0.8]

\node at (2.7,1.5) {\small (a)};
\coordinate (A) at (0,0);
\node[below left] at (A) {\small $\vv_2$}; 

\coordinate (B) at (3,0);
\node[below right] at (B) {\small $\vv_3$}; 

\coordinate (C) at (1.5,2.598);

\node[above] at (C) {\small $\vv_1$}; 

\coordinate (q) at (1.2,0.6);
\coordinate (u) at (0.5,0.5);
\coordinate (v) at (0.6,1.0392);
\coordinate (s) at (1,1.732);
\coordinate (w) at (1.5,1.5);
\coordinate (x) at (1.5,0);
\coordinate (y) at (2.3,0);
\coordinate (z) at (1.7,1);
\coordinate (t) at (2.4,1.0392);

\draw[thick] (A) -- (u);
\draw[thick] (A) -- (v);
\draw[thick] (A) -- (x);
\draw[thick] (u) -- (v);
\draw[thick] (u) -- (x);
\draw[thick] (u) -- (w);
\draw[thick] (v) -- (w);
\draw[thick] (v) -- (s);
\draw[thick] (w) -- (z);
\draw[thick] (w) -- (s);
\draw[thick] (w) -- (t);
\draw[thick] (w) -- (q);
\draw[thick] (w) -- (C);
\draw[thick] (s) -- (C);
\draw[thick] (t) -- (C);
\draw[thick] (t) -- (z);
\draw[thick] (t) -- (B);
\draw[thick] (z) -- (x);
\draw[thick] (z) -- (y);
\draw[thick] (z) -- (B);
\draw[thick] (y) -- (B);
\draw[thick] (y) -- (x);
\draw[thick] (q) -- (x);
\draw[thick] (q) -- (u);
\draw[thick] (q) -- (z);

 \usetikzlibrary{patterns}

\fill[lightgray] (q) -- (z) -- (w) -- cycle; 
\fill[lightgray] (w) -- (C) -- (t) -- cycle; 
\fill[lightgray] (u) -- (x) -- (q) -- cycle; 
\draw[thick] (q) -- (z) -- (w) -- cycle;
\draw[thick] (w) -- (C) -- (t) -- cycle;
\draw[thick] (u) -- (x) -- (q) -- cycle;

\fill[green] (A) circle (3pt); 
\fill[blue] (B) circle (3pt); 
\fill[red] (C) circle (3pt); 


\fill[red] (q) circle (3pt); 

\fill[green] (u) circle (3pt); 

\fill[green] (v) circle (3pt); 

\fill[red] (s) circle (3pt); 

\fill[green] (w) circle (3pt); 

\fill[blue] (x) circle (3pt); 

\fill[green] (y) circle (3pt); 

\fill[blue] (z) circle (3pt); 

\fill[blue] (t) circle (3pt); 

\node at (7,1.5) {\small (b)};

\begin{scope}[shift={(0.7,0)}]  

\coordinate (origin) at (3.6, -0.1);

\draw[->, gray] (origin) -- (6.8, -0.1) node[right] {\small $x_1$};  
\draw[->, gray] (origin) -- (3.6, 3.1) node[above] {\small $x_2$};  

\coordinate (P1) at (3.7,0);
\coordinate (P2) at (6.5,2.8);
\coordinate (P3) at (3.7,2.8);

\draw[thick] (P1) -- (P2) -- (P3) -- cycle;

\def\n{9} 
\def\step{0.31} 

\fill[lightgray] (3.7 + 3*\step, 6*\step) -- (3.7 + 3*\step, 7*\step) -- (3.7 + 4*\step, 7*\step) -- cycle; 

\foreach \i in {1,...,\n} {
    \draw[thick] (3.7, \i*\step) -- (3.7 + \i*\step, \i*\step);
    \draw[thick] (3.7 + \i*\step, \i*\step) -- (3.7 + \i*\step, 2.8);
    \draw[thick] (3.7, \i*\step) -- (6.5 - \i*\step, 2.8);
}

\foreach \i in {0,...,\n} {
    \foreach \j in {0,...,\i} {
        \pgfmathsetmacro{\x}{3.7 + \j*\step}
        \pgfmathsetmacro{\y}{\i*\step}
        
        \ifdim \x pt = 3.7pt
            \pgfmathparse{mod(\i,2) == 0 ? "green" : "blue"}
            \edef\vertexcolor{\pgfmathresult}
        \else
            \ifdim \y pt = 0pt
                \pgfmathparse{mod(\j,2) == 0 ? "green" : "red"}
                \edef\vertexcolor{\pgfmathresult}
            \else
                \pgfmathparse{mod(\j*\j+\i,3) == 0 ? "blue" : (mod(\j*\j+\i,3) == 1 ? "red" : "green")}
                \edef\vertexcolor{\pgfmathresult}
            \fi
        \fi
        
        \fill[\vertexcolor] (\x, \y) circle (2pt);
    }
}

\fill[green] (3.7 + 3*\step, 9*\step) circle (2pt);
\fill[green] (3.7 + 5*\step, 9*\step) circle (2pt);
\fill[green] (3.7 + 5*\step, 8*\step) circle (2pt);
\fill[green] (3.7 + 6*\step, 9*\step) circle (2pt);

\fill[red] (3.7 + 1*\step, 1*\step) circle (2pt);
\fill[red] (3.7 + 4*\step, 4*\step) circle (2pt);
\fill[red] (3.7 + 6*\step, 6*\step) circle (2pt);
\fill[red] (3.7 + 7*\step, 7*\step) circle (2pt);
\fill[red] (3.7 + 7*\step, 8*\step) circle (2pt);

\fill[blue] (P1) circle (3pt);
\fill[red] (P2) circle (3pt);
\fill[green] (P3) circle (3pt);

\node[below left] at (P1) {\small $\vv_3$};
\node[above right] at (P2) {\small $\vv_1$};
\node[above left] at (P3) {\small $\vv_2$};

\end{scope}
\end{tikzpicture}

\caption{The figure on the left (a) illustrates an application of Sperner's Lemma to a triangulation $ T $ of a 2-simplex $ \Delta = \text{conv}(\vv_1, \vv_2, \vv_3) $. The coloring function assigns a color from set \{1 (Red), 2 (Green), 3 (Blue)\} to each vertex $ \bfx \in V(T) $, and it is special. Specifically, for each $i\in [3]$, the color $ i $ does not appear on the 1-face of $ \Delta $ that does not contain $ \vv_i $ (i.e., on the edge opposite to vertex $ \vv_i $). Sperner's Lemma guarantees the existence of an odd number of  fully-colored simplices. In this case, there are three such simplices (that is, there exists at least one such simplex), and they are highlighted in gray. On the right (b), we show Kuhn's triangulation of the 2-simplex $ \{(x_1, x_2) : 0 \leq x_1 \leq x_2 \leq 3\} $, which is based on an allocation instance with $ n = 3 $ agents and $ m = 3 $ items. The colors are assigned to vertices based on the special coloring $L$ derived from an arbitrary non-negative virtual valuation functions. One of the fully-colored elementary $(n-1)$-simplices in $T$, whose existence is guaranteed by Sperner's Lemma, is highlighted in gray.}\label{fig:1}
\end{figure}
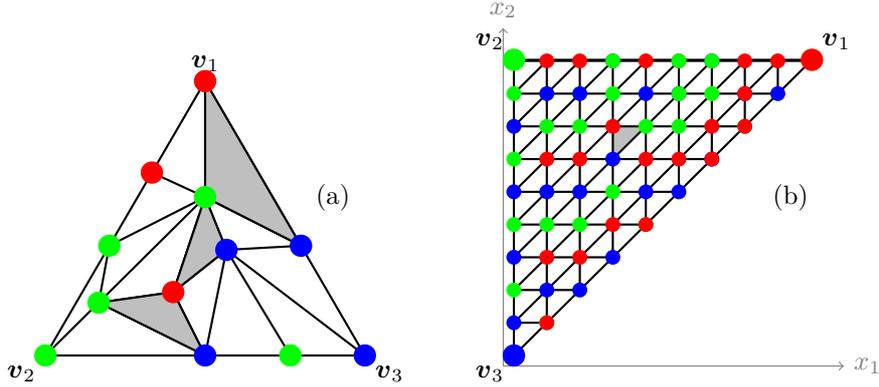
\newpage
\subsection{Figure \ref{fig:2}}
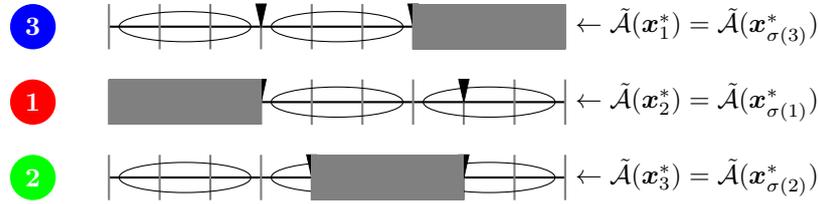
\begin{figure}[h!]
\centering
\begin{tikzpicture}[scale=0.6]

\fill[green] (-1, 0) circle (0.3);
\node[white] at (-1, 0) {\textbf{\small 2}};

\fill[red] (-1, 1) circle (0.3);
\node[white] at (-1, 1) {\textbf{\small 1}};

\fill[blue] (-1, 2) circle (0.3);
\node[white] at (-1, 2) {\textbf{\small 3}};

\draw[thick] (0,0) -- (6,0);

\foreach \x in {0,1,2,3,4,5,6,7,8,9} 
    \draw[gray, thick] (\x*6/9, -0.3) -- (\x*6/9, 0.3);

\foreach \x in {1.5*6/9, 4.5*6/9, 7.5*6/9} {
    \draw (\x, 0) ellipse (1.3*6/9 and 0.2);
}

\draw[fill=black] (4*6/9, 0) -- (4.1*6/9, 0.3) -- (3.9*6/9, 0.3) -- cycle;
\draw[fill=black] (7*6/9, 0) -- (7.1*6/9, 0.3) -- (6.9*6/9, 0.3) -- cycle;

\draw[thick] (0,1) -- (6,1);

\foreach \x in {0,1,2,3,4,5,6,7,8,9} 
    \draw[gray, thick] (\x*6/9, 1 - 0.3) -- (\x*6/9, 1 + 0.3);

\foreach \x in {1.5*6/9, 4.5*6/9, 7.5*6/9} {
    \draw (\x, 1) ellipse (1.3*6/9 and 0.2);
}

\draw[fill=black] (3*6/9, 1) -- (3.1*6/9, 1.3) -- (2.9*6/9, 1.3) -- cycle;
\draw[fill=black] (7*6/9, 1) -- (7.1*6/9, 1.3) -- (6.9*6/9, 1.3) -- cycle;

\draw[thick] (0,2) -- (6,2);

\foreach \x in {0,1,2,3,4,5,6,7,8,9} 
    \draw[gray, thick] (\x*6/9, 2 - 0.3) -- (\x*6/9, 2 + 0.3);

\foreach \x in {1.5*6/9, 4.5*6/9, 7.5*6/9} {
    \draw (\x, 2) ellipse (1.3*6/9 and 0.2);
}

\draw[fill=black] (3*6/9, 2) -- (3.1*6/9, 2.3) -- (2.9*6/9, 2.3) -- cycle;
\draw[fill=black] (6*6/9, 2) -- (6.1*6/9, 2.3) -- (5.9*6/9, 2.3) -- cycle;

\fill[gray,opacity=0.3] (6*6/9,2 - 0.3) rectangle (9*6/9, 2+0.3);

\fill[gray,opacity=0.3] (0,1 - 0.3) rectangle (3*6/9, 1+0.3);

\fill[gray,opacity=0.3] (4*6/9,0 - 0.3) rectangle (7*6/9, 0+0.3);

\node[right] at (9*6/9, 2) {\small $\leftarrow \mathcal{\tilde{A}}(\bfx_{1}^*)=\mathcal{\tilde{A}}(\bfx_{\sigma(3)}^*)$};

\node[right] at (9*6/9, 1) {\small $\leftarrow \mathcal{\tilde{A}}(\bfx_{2}^*)=\mathcal{\tilde{A}}(\bfx_{\sigma(1)}^*)$};

\node[right] at (9*6/9, 0) {\small $\leftarrow \mathcal{\tilde{A}}(\bfx_{3}^*)=\mathcal{\tilde{A}}(\bfx_{\sigma(2)}^*)$};

\end{tikzpicture}

\caption{The figure represents the fractional allocations $\mathcal{\tilde{A}}(\bfx_1^*), \mathcal{\tilde{A}}(\bfx_2^*), \mathcal{\tilde{A}}(\bfx_3^*)$ associated with the vertices $\bfx_1^*, \bfx_2^*, \bfx_3^*$ of the fully-colored simplex $\Delta^*$ in Figure \ref{fig:1}(b) (highlighted in gray), which are colored blue (label 3), red (label 1), and green (label 2) according to the coloring function $L$; the permutation $\sigma:[3]\rightarrow [3]$ such that $L(\bfx_{\sigma(i)})=i$ is defined by $(1,2,3)\xrightarrow{\sigma}(2,3,1)$. The integral items are represented by ellipses,  the fractionality levels of each item (integral, 1-fractional, 2-fractional) are determined by the gray vertical lines, that cut each item in three parts, and the knives that determine the fractional allocations are represented by the black triangles. In such example, as indicated by the vertex colors, the virtual valuation determining $L$ is maximized by the third (resp. first, resp. second) fractional bundle from the left, of the allocation associated with the blue (resp. red, resp. green) vertex of $\Delta^*$, i.e., $\bfx_1^*$ (resp. $\bfx_2^*$, resp. $\bfx_3^*$). Finally, as an illustrative example of left-first or right-first bundles, we observe that the central bundle is right-first in $\Delta^*$.}\label{fig:2}
\end{figure}

\newpage
\subsection{Figure \ref{fig:3}}
\begin{figure}[h!]
\centering
\begin{tikzpicture}[scale=0.5]

\foreach \y in {0,1,2,3,4,5,6,7,8} {

\foreach \x in {1.5*6/9,4.5*6/9, 7.5*6/9} {
    \draw[fill=lightgray, opacity=0.5] (\x, \y) ellipse (1.3*6/9 and 0.2);
}

\fill[white] (3.7*6/9, \y-0.3) rectangle (5.3*6/9, \y+0.3);

\draw[thick] (0,\y) -- (3*6/9,\y);
\draw[dashed, thick] (3*6/9,\y) -- (6*6/9,\y);
\draw[thick] (6*6/9,\y) -- (9*6/9,\y);

\foreach \x in {0,1,2,3,6,7,8,9} 
    \draw[gray, thick] (\x*6/9, \y-0.3) -- (\x*6/9, \y+0.3);

}


\foreach \y in {1,3,4,7} {

\foreach \x in {8.2+1.5*6/9,8.2+4.5*6/9, 8.2+7.5*6/9} {
    \draw[fill=lightgray, opacity=0.5] (\x, \y) ellipse (1.3*6/9 and 0.2);
}

\fill[white] (8.2+3.7*6/9, \y-0.3) rectangle (8.2+5.3*6/9, \y+0.3);

\draw[thick] (8.2+0,\y) -- (8.2+3*6/9,\y);
\draw[dashed, thick] (8.2+3*6/9,\y) -- (8.2+6*6/9,\y);
\draw[thick] (8.2+6*6/9,\y) -- (8.2+9*6/9,\y);

\foreach \x in {0,1,2,3,6,7,8,9} 
    \draw[gray, thick] (8.2+\x*6/9, \y-0.3) -- (8.2+\x*6/9, \y+0.3);

}



\node[black] at (-0.8, 9-1) {{\small 1:}};

\draw[fill=black] (0*6/9, 8) -- (0.1*6/9, 8.3) -- (-0.1*6/9, 8.3) -- cycle;
\draw[fill=black] (6*6/9, 8) -- (6.1*6/9, 8.3) -- (5.9*6/9, 8.3) -- cycle;

\draw[red, very thick, opacity=0.5] (0*6/9+0.15, 8-0.3) rectangle (6*6/9-0.15, 8+0.3);


\node[black] at (-0.8, 9-2) {{\small 2(i):}};

\fill[pattern=north east lines, pattern color=black, opacity=0.5] (6*6/9, 7-0.3) rectangle (9*6/9, 7+0.3);

\draw[fill=black] (0*6/9, 7) -- (0.1*6/9, 7.3) -- (-0.1*6/9, 7.3) -- cycle;
\draw[fill=black] (7*6/9, 7) -- (7.1*6/9, 7.3) -- (6.9*6/9, 7.3) -- cycle;

\draw[red, very thick, opacity=0.5] (0*6/9+0.15, 7-0.3) rectangle (6*6/9-0.15, 7+0.3);


\node[black] at (8.2-0.8, 9-2) {{\small 2(ii):}};


\draw[fill=black] (8.2+0*6/9, 7) -- (8.2+0.1*6/9, 7.3) -- (8.2-0.1*6/9, 7.3) -- cycle;
\draw[fill=black] (8.2+7*6/9, 7) -- (8.2+7.1*6/9, 7.3) -- (8.2+6.9*6/9, 7.3) -- cycle;

\draw[red, very thick, opacity=0.5] (8.2+0*6/9+0.15, 7-0.3) rectangle (8.2+9*6/9-0.15, 7+0.3);


\node[black] at (-0.8, 9-3) {{\small 3:}};


\draw[fill=black] (0*6/9, 6) -- (0.1*6/9, 6.3) -- (-0.1*6/9, 6.3) -- cycle;
\draw[fill=black] (8*6/9, 6) -- (8.1*6/9, 6.3) -- (7.9*6/9, 6.3) -- cycle;

\draw[red, very thick, opacity=0.5] (0*6/9+0.15, 6-0.3) rectangle (9*6/9-0.15, 6+0.3);


\node[black] at (-0.8, 9-4) {{\small 4:}};


\draw[fill=black] (1*6/9, 5) -- (1.1*6/9, 5.3) -- (0.9*6/9, 5.3) -- cycle;
\draw[fill=black] (6*6/9, 5) -- (6.1*6/9, 5.3) -- (5.9*6/9, 5.3) -- cycle;

\draw[red, very thick, opacity=0.5] (0*6/9+0.15, 5-0.3) rectangle (6*6/9-0.15, 5+0.3);


\node[black] at (-0.8, 9-5) {{\small 5(i):}};

non availability of last item
\fill[pattern=north east lines, pattern color=black, opacity=0.5] (6*6/9, 4-0.3) rectangle (9*6/9, 4+0.3);

\draw[fill=black] (1*6/9, 4) -- (1.1*6/9, 4.3) -- (0.9*6/9, 4.3) -- cycle;
\draw[fill=black] (7*6/9, 4) -- (7.1*6/9, 4.3) -- (6.9*6/9, 4.3) -- cycle;

\draw[red, very thick, opacity=0.5] (0*6/9+0.15, 4-0.3) rectangle (6*6/9-0.15, 4+0.3);


\node[black] at (8.2-0.8, 9-5) {{\small 5(ii):}};


\draw[fill=black] (8.2+1*6/9, 4) -- (8.2+1.1*6/9, 4.3) -- (8.2+0.9*6/9, 4.3) -- cycle;
\draw[fill=black] (8.2+7*6/9, 4) -- (8.2+7.1*6/9, 4.3) -- (8.2+6.9*6/9, 4.3) -- cycle;

\draw[red, very thick, opacity=0.5] (8.2+3*6/9+0.15, 4-0.3) rectangle (8.2+9*6/9-0.15, 4+0.3);


\node[black] at (-0.8, 9-6) {{\small 6(i):}};


\draw[fill=black] (1*6/9, 3) -- (1.1*6/9, 3.3) -- (0.9*6/9, 3.3) -- cycle;
\draw[fill=black] (8*6/9, 3) -- (8.1*6/9, 3.3) -- (7.9*6/9, 3.3) -- cycle;

\draw[red, very thick, opacity=0.5] (3*6/9+0.15, 3-0.3) rectangle (9*6/9-0.15, 3+0.3);

\draw[blue, line width=1.5pt, opacity=0.5] (1*6/9, 3.20) circle [radius=0.25];


\node[black] at (8.2-0.8, 9-6) {{\small 6(ii):}};


\draw[fill=black] (8.2+1*6/9, 3) -- (8.2+1.1*6/9, 3.3) -- (8.2+0.9*6/9, 3.3) -- cycle;
\draw[fill=black] (8.2+8*6/9, 3) -- (8.2+8.1*6/9, 3.3) -- (8.2+7.9*6/9, 3.3) -- cycle;

\draw[red, very thick, opacity=0.5] (8.2+0*6/9+0.15, 3-0.3) rectangle (8.2+9*6/9-0.15, 3+0.3);

\draw[blue, line width=1.5pt, opacity=0.5] (8.2+8*6/9, 3.20) circle [radius=0.25];


\node[black] at (-0.8, 9-7) {{\small 7:}};


\draw[fill=black] (2*6/9, 2) -- (2.1*6/9, 2.3) -- (1.9*6/9, 2.3) -- cycle;
\draw[fill=black] (6*6/9, 2) -- (6.1*6/9, 2.3) -- (5.9*6/9, 2.3) -- cycle;

\draw[red, very thick, opacity=0.5] (3*6/9+0.15, 2-0.3) rectangle (6*6/9-0.15, 2+0.3);


\node[black] at (-0.8, 9-8) {{\small 8(i):}};

\fill[pattern=north east lines, pattern color=black, opacity=0.5] (6*6/9, 1-0.3) rectangle (9*6/9, 1+0.3);

\draw[fill=black] (2*6/9, 1) -- (2.1*6/9, 1.3) -- (1.9*6/9, 1.3) -- cycle;
\draw[fill=black] (7*6/9, 1) -- (7.1*6/9, 1.3) -- (6.9*6/9, 1.3) -- cycle;

\draw[red, very thick, opacity=0.5] (3*6/9+0.15, 1-0.3) rectangle (6*6/9-0.15, 1+0.3);


\node[black] at (8.2-0.8, 9-8) {{\small 8(ii):}};


\draw[fill=black] (8.2+2*6/9, 1) -- (8.2+2.1*6/9, 1.3) -- (8.2+1.9*6/9, 1.3) -- cycle;
\draw[fill=black] (8.2+7*6/9, 1) -- (8.2+7.1*6/9, 1.3) -- (8.2+6.9*6/9, 1.3) -- cycle;

\draw[red, very thick, opacity=0.5] (8.2+3*6/9+0.15, 1-0.3) rectangle (8.2+9*6/9-0.15, 1+0.3);


\node[black] at (-0.8, 9-9) {{\small 9:}};


\draw[fill=black] (2*6/9, 0) -- (2.1*6/9, 0.3) -- (1.9*6/9, 0.3) -- cycle;
\draw[fill=black] (8*6/9, 0) -- (8.1*6/9, 0.3) -- (7.9*6/9, 0.3) -- cycle;

\draw[red, very thick, opacity=0.5] (3*6/9+0.15, 0-0.3) rectangle (9*6/9-0.15, 0+0.3);

\draw[gray, dashed] (-1.5,6-0.4) rectangle (8.2+9*6/9+0.2,8+0.4);
\node[gray] at (6.6+9*6/9,8) {{\tiny $a_j\equiv 0$ (LV)}};

\draw[gray, dashed] (-1.5,3-0.4) rectangle (8.2+9*6/9+0.2,5+0.4);
\node[gray] at (6.6+9*6/9,5) {{\tiny $a_j\equiv 1$ (BV)}};

\draw[gray, dashed] (-1.5,0-0.4) rectangle (8.2+9*6/9+0.2,2+0.4);
\node[gray] at (6.6+9*6/9,2) {{\tiny $a_j\equiv 2$ (RV)}};
\end{tikzpicture}
\caption{
Given $j \in [n]$, we describe how the fractional bundle $\tilde{A}_j = [a_j, b_j]$ of the main allocation of $\Delta^*$ can be rounded to obtain the integral bundle $A_j$ in each of the following nine cases, assuming that $A_{j+1}, \ldots, A_n$ have already been determined:\\
1:  $a_j\equiv 0,b_j\equiv 0$: $A_j\gets \lpar a^-_j,b^-_j\rpar$;\\
2: $a_j\equiv 0,b_j\equiv 1$: (i) $A_j\gets \lpar a^-_j,b^-_j\rpar$ if $b^+_j\in A_{j+1}$, and (ii) $A_j\gets \lpar a^-_j,b^+_j\rpar$ if $b^+_j\not\in A_{j+1}$;\\
3:  $a_j\equiv 0,b_j\equiv 2$: $A_j\gets \lpar a^-_j,b^+_j\rpar$;\\
4: $a_j\equiv 1,b_j\equiv 0$: $A_j\gets \lpar a^-_j,b^-_j\rpar$;\\
5: $a_j\equiv 1,b_j\equiv 1$: (i) $A_j\gets \lpar a^-_j,b^-_j\rpar$ if $b^+_j\in A_{j+1}$, and (ii) $A_j\gets \lpar a^+_j,b^+_j\rpar$ if $b^+_j\not\in A_{j+1}$;\\
6: $a_j\equiv 1,b_j\equiv 2$: (i) $A_j\gets \lpar a^+_j,b^+_j\rpar$ if $A_j$ is left-first, and (ii) $A_j\gets \lpar a^-_j,b^+_j\rpar$ if $A_j$ is right-first;\\
7: $a_j\equiv 2,b_j\equiv 0$: $A_j\gets \lpar a^+_j,b^-_j\rpar$;\\
8: $a_j\equiv 2,b_j\equiv 1$: (i) $A_j\gets \lpar a^+_j,b^-_j\rpar$ if $b^+_j\in A_{j+1}$, (ii) $A_j\gets \lpar a^+_j,b^+_j\rpar$ if $b^+_j\not\in A_{j+1}$;\\
9: $a_j\equiv 2,b_j\equiv 2$: $A_j\gets \lpar a^+_j,b^+_j\rpar$.\\
The figure illustrates each of the nine cases as follows: each item and its three associated fractionality levels are represented by an ellipse divided into three parts (similarly to Figure \ref{fig:2} in Appendix \ref{app:pictures}); the two black triangles in each case represent the positions ($a_j$ and $b_j$) of the two knives that determine the $j$-th fractional bundle $\tilde{A}_j=[a_j,b_j]$ (by possibly cutting the two board items of the considered bundle); the red rectangle encloses all items that are fully included in the $j$-th bundle $A_j$ of the rounded (integral) allocation $\mathcal{A}$; the right-hand item, when marked with crossed lines, indicates that it was included in bundle $A_{j+1}$ during the previous step of the rounding procedure; the blue circle on the left (resp. right) knife indicates that bundle $[a_j,b_j]$ is left-first (resp. right-first) in $\Delta^*$.
}\label{fig:3}
\end{figure}
\newpage
\subsection{Figure \ref{fig:4}}
\begin{figure}[h!]
\centering
\begin{tikzpicture}[scale=0.75]
\coordinate (A) at (0,0);
\node[below left] at (A) {\small $\vv_2$}; 

\coordinate (B) at (3.1,0);
\node[below right] at (B) {\small $\vv_3$}; 

\coordinate (C) at (1.5,2.598);
\node[above, yshift=+0.1cm] at (C) {\small $\vv_1$}; 

\coordinate (q) at (1.2,0.6);
\coordinate (u) at (0.5,0.5);
\coordinate (v) at (0.6,1.0392);
\coordinate (s) at (1,1.732);
\coordinate (w) at (1.5,1.5);
\coordinate (x) at (1.5,0);
\coordinate (y) at (2.3,0);
\coordinate (z) at (1.7,1);
\coordinate (t) at (2.4,1.0392);

\draw[thick] (A) -- (u);
\draw[thick] (A) -- (v);
\draw[thick] (A) -- (x);
\draw[thick] (u) -- (v);
\draw[thick] (u) -- (x);
\draw[thick] (u) -- (w);
\draw[thick] (v) -- (w);
\draw[thick] (v) -- (s);
\draw[thick] (w) -- (z);
\draw[thick] (w) -- (s);
\draw[thick] (w) -- (t);
\draw[thick] (w) -- (q);
\draw[thick] (w) -- (C);
\draw[thick] (s) -- (C);
\draw[thick] (t) -- (C);
\draw[thick] (t) -- (z);
\draw[thick] (t) -- (B);
\draw[thick] (z) -- (x);
\draw[thick] (z) -- (y);
\draw[thick] (z) -- (B);
\draw[thick] (y) -- (B);
\draw[thick] (y) -- (x);
\draw[thick] (q) -- (x);
\draw[thick] (q) -- (u);
\draw[thick] (q) -- (z);

\fill[lightgray] (q) -- (z) -- (w) -- cycle;
\fill[pattern=north east lines] (q) -- (z) -- (w) -- cycle;
\fill[lightgray] (u) -- (x) -- (q) -- cycle;
\fill[pattern=north east lines] (u) -- (x) -- (q) -- cycle;

\fill[lightgray] (u) -- (x) -- (A) -- cycle;

\fill[lightgray] (u) -- (v) -- (A) -- cycle;
\fill[pattern=north east lines] (u) -- (v) -- (A) -- cycle;

\draw[thick] (q) -- (z) -- (w) -- cycle;
\draw[thick] (u) -- (x) -- (q) -- cycle;
\draw[thick] (u) -- (x) -- (A) -- cycle;
\draw[thick] (u) -- (v) -- (A) -- cycle;

\fill[blue] (A) circle (4pt); 
\fill[red] (A) circle (2pt); 

\fill[green] (B) circle (4pt); 
\fill[red] (B) circle (2pt); 

\fill[blue] (C) circle (4pt);
\fill[green] (C) circle (2pt);

\fill[blue] (q) circle (3pt); 
\fill[green] (u) circle (3pt); 
\fill[blue] (v) circle (3pt); 

\fill[blue] (s) circle (3pt); 
\fill[red] (x) circle (3pt); 
\fill[red] (y) circle (3pt); 

\fill[green] (w) circle (3pt); 
\fill[red] (z) circle (3pt); 
\fill[green] (t) circle (3pt); 

\node at (1.6,-0.9) {\small (a)};


\coordinate (AA) at (4.5,0);
\node[below] at (AA) {\small ${\vv}_2'$};

\coordinate (aa1) at (5.8,0);
\coordinate (aa2) at (6.5,0);

\coordinate (aa3) at (7.3,0);
\node[below] at (aa3) {\small ${\vv}_3'$};

\coordinate (aa4) at (8.4,0);

\coordinate (aa5) at (10,0);
\node[below] at (aa5) {\small ${\vv}_1'$};

\draw[thick] (AA) -- (aa1) -- (aa2) -- (aa3) -- (aa4) -- (aa5);

\fill[red] (AA) circle (3pt);
\fill[red] (aa1) circle (3pt);
\fill[red] (aa2) circle (3pt);
\fill[red] (aa3) circle (3pt);
\fill[green] (aa4) circle (3pt);
\fill[green] (aa5) circle (3pt);

\node at (7.25,-0.9) {\small (b)};

\coordinate (P2) at (2.35,0.8);

\coordinate (P3) at (1.9,1.15);

\coordinate (P4) at (1.5,1.15);

\draw[ultra thick, yellow] (P2) -- (P3) -- (P4);
\fill[black] (P2) circle (1.5pt);
\fill[black] (P3) circle (1.5pt);
\fill[black] (P4) circle (1.5pt);

\coordinate (Q1) at (0.31,0.4);
\coordinate (Q2) at (0.6,0.25);
\coordinate (Q3) at (1,0.4);

\draw[ultra thick, yellow] (Q1) -- (Q2) -- (Q3);

\fill[black] (Q1) circle (1.5pt);
\fill[black] (Q2) circle (1.5pt);
\fill[black] (Q3) circle (1.5pt);
\end{tikzpicture}

\caption{The figure on the left (a) illustrates an application of the Multi-coloring Sperner's lemma to a triangulation \( T \) of a 2-simplex \( \Delta = \text{conv}(\vv_1, \vv_2, \vv_3) \). The multi-coloring function $\LL$ assigns a non-empty subset from \{1 (Red), 2 (Green), 3 (Blue)\} to each vertex \( \bfx \in V(T) \), and it is special; furthermore, in this specific case, $\LL$ satisfies the assumption done w.l.o.g. in the proof of Theorem \ref{multiSperner}, that is: \( \LL(\bfx) = \{i \in [3] : \bfx \in F_i\} \) holds for any vertex $\bfx$ located on the boundary of $\Delta$, and \( | \LL(\bfx) | = 1 \) holds for any internal vertex \( \bfx \) not located on the boundary. The four gray triangles represent the fully-colored simplices in $T$ (w.r.t. the multi-coloring function $\LL$); we observe that, differently from the standard Sperner's lemma, their number is not necessarily odd.\\
The minimal restriction \( L: V(T) \rightarrow [3] \) of \( \LL \), as defined in the proof of Theorem \ref{multiSperner}, assigns the same color as \( \LL \) when that color is unique, and assigns an internal color when it is not (i.e., Green for \( \vv_1 \), and Red for \( \vv_2 \) and \( \vv_3 \)).\\
The figure on the right (b) illustrates the $1$-simplex $\Delta'$ obtained from the instance of figure (a), following the steps outlined in the proof of Theorem \ref{multiSperner}. $\Delta'$ is homeomorphic to the topological space $F'=[\vv_2,\vv_3]\cup [\vv_2,\vv_3]$ obtained by the union of the $1$-dimensional faces $[\vv_2,\vv_3]$ and $[\vv_3,\vv_1]$ of $\Delta$. The figure also illustrates the triangulation (equivalent to) $T'$ and the vertices of $V(T')$ are colored following the minimal restriction $L$ of $\LL$.\\
Finally, the two yellow paths connecting some elementary 2-simplices in the left figure (a)  form the graph \( G = (V, E) \) used in the proof of Theorem \ref{multiSperner}  to establish the existence of an odd number of fully-colored $2$-simplices under the minimal restriction \( L \), namely, the three gray triangles filled with black lines. These triangles are also fully colored with respect to the multi-coloring function 
$\LL$; we observe that, in this case, there are four such triangles, which are shown in gray.
}\label{fig:4}
\end{figure}
\newpage
\subsection{Figure \ref{fig:pir}}
\begin{figure}[h!]
    \centering
    \begin{tikzpicture}[scale=0.7]
        \coordinate (v3) at (-2,-1,0);
        \coordinate (v2) at (2,-1,0);
        \coordinate (v1) at (0,2,0);
        \coordinate (v4) at (0,-1,3);
        \coordinate (v5) at (0,0,0);

        
\draw[draw=none, fill=yellow, fill opacity=0.3] (v3) -- (v1) -- (v2) -- cycle;

\draw[draw=none, fill=lightgray, fill opacity=0.1] (v4) -- (v3) -- (v2) -- cycle;

\draw[draw=none, fill=lightgray, fill opacity=0.1] (v3) -- (v1) -- (v4) -- cycle;

\draw[draw=none, fill=lightgray, fill opacity=0.1] (v1) -- (v2) -- (v4) -- cycle;

        \draw[thick, gray, dashed] (v3) -- (v2); 
        
        \draw[thick, ->, red] (v4) -- (v5);
                
        \draw[thick] (v1) -- (v4) -- (v3) -- cycle;
        \draw[thick] (v1) -- (v4) -- (v2) -- cycle;

        \node[left] at (v3) {$\vv_3$};
        \node[right] at (v2) {$\vv_2$};
        \node[above] at (v1) {$\vv_1$};
        \node[above, yshift=0.2cm] at (v4) {$\vv_4$}; 
        \node[below,yshift=-0.4cm,xshift=-0.2cm,red] at (v5) {$f$};
                \node[above,yshift=-0.1cm, red] at (v5) {$\vv_4'$};

        \begin{scope}[shift={(6,0)}]
            \coordinate (w3) at (-2,-1);
            \coordinate (w2) at (2,-1);
            \coordinate (w1) at (0,2);
            \coordinate (w4) at (0,0);
            
            \draw[draw=none, fill=yellow, fill opacity=0.3] (w3) -- (w1) -- (w2) -- cycle;

            \draw[thick] (w3) -- (w2) -- (w1) -- cycle;
            \draw[thick, gray] (w3) -- (w4) -- (w2);
            \draw[thick, gray] (w4) -- (w1);

            \node[left] at (w3) {$\vv_3'$};
            \node[right] at (w2) {$\vv_2'$};
            \node[above] at (w1) {$\vv_1'$};
            \node[above] at (w4) {$\vv_4'$};
        \end{scope}

        \draw[thick,->] (2.5,0) -- (3.5,0) node[midway, above] {$f$};
    \end{tikzpicture}
    \caption{The figure illustrates the projection \( f \), where \( \Delta \) is a 3-simplex. The left image shows the original simplex \( \Delta \), along with the topological space of \( F' = F_1 \cup F_2 \cup F_3 \). Specifically, the 2-face \( F_4 = \text{conv}(\vv_1, \vv_2, \vv_3) \) opposite to $\vv_4$ is represented with a yellow filling, while the other 2-faces \( F_1, F_2, F_3 \) (i.e., those constituting \( F' \)) are filled with light gray.
The right image depicts the 2-simplex \( \Delta' \), which is obtained by applying the projection \( f \) of \( F' \) onto \( F_4 \). The red arrow on the left indicates the axis and orientation associated with the projection.}\label{fig:pir}
\end{figure}
\newpage
\section{Pseudo-code of the Algorithms}\label{app:algorithms}

\begin{algorithm}[H]
    \caption{Dynamic Programming Algorithm (for Non-negative Valuations).}
    \label{alg4}
    \textbf{Input}: An allocation instance $I=(N,M,(v_i)_{i\in N})$ with non-negative valuations.
    
    \textbf{Output}: An $\EQOP$ allocation.
    
    \begin{algorithmic}[1] 
    	\STATE Initialize a set of integers $C_v\gets \{0\}$;
    	\FOR{$i=1,\ldots, n$}
		 \FOR{$l=1,\ldots, m$}
		 \FOR{$j=1,\ldots, a$}
		 \STATE $c\gets v_i(\lpar l,j\rpar)$ and add $c$ to $C_v$;
		 \ENDFOR
		 \ENDFOR
		 \ENDFOR
         \FOR{$c\in C_v$}
          \STATE Initialize a $\{1,\ldots, n\}\times \{0,\ldots, m\}$ boolean matrix $B$, setting all entries to FALSE;
          \FOR{$j=0,\ldots, m$}
          \IF{$v_1^+(\lpar 1,j\rpar)\geq c\geq v_1^-(\lpar 1,j\rpar)$}
          \STATE $B[1][j]\gets$TRUE;
          \ENDIF
          \ENDFOR
         \FOR{$i=2,\ldots, n$}
         \FOR{$j=0,\ldots, m$}
          \FOR{$\ell=1,\ldots, j+1$}
          \IF{$v_i^+(\lpar \ell,j\rpar)\geq c\geq v_i^-(\lpar \ell,j\rpar)$ and\\ \ \ \ $B[i-1][\ell-1]==\text{TRUE}$}
          \STATE $B[i][j]\gets$TRUE;
          \STATE break;
          \ENDIF
          \ENDFOR
         \ENDFOR
         \ENDFOR
         \IF{$B[n][m]==$TRUE}
         \STATE break;
         \ENDIF
         \ENDFOR
      \STATE Initialize $\mathcal{A}=(A_1,\ldots, A_n)$ arbitrarily;
     \STATE $j\gets m$;
    \FOR{$i=n,\ldots, 2$}
    \FOR{$\ell=1,\ldots, j+1$}
    \IF{$v_i^+(\lpar \ell,j\rpar)\geq c\geq v_i^-(\lpar \ell,j\rpar)$ and\\ \ \ \ $B[i-1][\ell-1]==\text{TRUE}$}
    \STATE $A_i\gets \lpar \ell,j\rpar $;
        \STATE $j\gets\ell-1$
    \STATE break;
    \ENDIF
    \ENDFOR
    \ENDFOR
    \STATE $A_1\gets \lpar 1,j\rpar $
    \STATE return $\mathcal{A}$;
    \end{algorithmic}
\end{algorithm}

\begin{algorithm}[H]
    \caption{Local-search Alg. (for objective valuations).}
    \label{alg1}
    \textbf{Input}: An allocation instance $I=(N,M=G\cup C,(v_i)_{i\in N})$ with objective valuations.
    
    \textbf{Output}: An $\EQXR$ allocation.
    
    \begin{algorithmic}[1] 
        \STATE Initialize $\mathcal{A}=(A_1,\ldots, A_n)$ by setting $A_1\gets M$ and $A_i\gets \emptyset$ for any $i\in N\setminus \{1\}$;
         \IF {$v_1(A_1)>0$}
        \STATE  $i\gets n$;
        \WHILE{there exist $j\in N$ and a good $g\in A_j\cap G$ such that $ v_i(A_i)<v_j(A_j\setminus\{g\})$}
        \STATE $A_i\gets A_i\cup \{g\}$ and $A_j\gets A_j\setminus \{g\}$;
        \STATE  $i\gets \arg\min_{i\in N} v_i(A_i)$;
        \ENDWHILE
                        \STATE \textbf{return} $\mathcal{A}$;
        \ENDIF
        \IF {$v_1(A_1)<0$}
        \STATE $j\gets n$;
        \WHILE{there exist $i\in N$ and a chore $c\in A_i\cap C$ such that $v_i(A_i\setminus \{c\})<v_j(A_j)$}
        \STATE $A_i\gets A_i\setminus \{c\}$ and $A_j\gets A_j\cup \{c\}$;
         \STATE  $j\gets \arg\max_{j\in N} v_j(A_j)$;
        \ENDWHILE
        \ENDIF
        \STATE \textbf{return} $\mathcal{A}$;
    \end{algorithmic}
\end{algorithm}

\begin{algorithm}[H]
    \caption{Greedy Algorithm (for Objective Valuations).}
    \label{alg2}
    \textbf{Input}: An allocation instance $I=(N,M=G\cup C,(v_i)_{i\in N})$ with objective valuations.
    
    \textbf{Output}: An EQ1 allocation.
    
    \begin{algorithmic}[1] 
        \STATE Initialize the partial allocation $\mathcal{A}=(A_1,\ldots, A_n)$ by setting $A_i\gets \emptyset$ for any $i\in N$;
        \FOR{any good $g\in G$}
        \STATE let $h\in \arg\min_{h\in N}v_h(A_h)$;
        \STATE $A_h\gets A_h\cup \{g\}$;
        \ENDFOR
         \FOR{any chore $c\in C$}
        \STATE let $h\in \arg\max_{h\in N}v_h(A_h)$;
        \STATE $A_h\gets A_h\cup \{c\}$;
        \ENDFOR
        \STATE \textbf{return} $\mathcal{A}$;
    \end{algorithmic}
\end{algorithm}

\begin{algorithm}[H]
    \caption{Strongly-greedy Algorithm (for Objective Additive Valuations)}
    \label{alg3}
    \textbf{Input}: An allocation instance $I=(N,M=G\cup C,(v_i)_{i\in N})$ with additive objective valuations.
    
    \textbf{Output}: An $\EQXR$ allocation.
    
    \begin{algorithmic}[1] 
        \STATE Initialize the partial allocation $\mathcal{A}=(A_1,\ldots, A_n)$ by setting $A_i\gets \emptyset$ for any $i\in N$;
        \STATE initialize $G'\gets G$ and $C'\gets C$;
        \WHILE{$G'\neq \emptyset$}
        \STATE let $h\in \arg\min_{h\in N}v_h(A_h)$;
         \STATE let $g\in \arg\max_{g\in G'}v_h(\{g\})$;
        \STATE $A_h\gets A_h\cup \{g\}$; $G'\gets G'\setminus \{g\}$;
        \ENDWHILE
		\WHILE{$C'\neq \emptyset$}
        \STATE let $h\in \arg\max_{h\in N}v_h(A_h)$;
         \STATE let $c\in \arg\min_{c\in C'}v_h(\{c\})$;
        \STATE $A_h\gets A_h\cup \{c\}$; $C'\gets C'\setminus \{c\}$;
        \ENDWHILE
        \STATE \textbf{return} $\mathcal{A}$;
    \end{algorithmic}
\end{algorithm}



\end{document}